\newtheorem{theorem}{Theorem}
\newcommand \trace[1] {\mathrm{Tr}({#1})}
\newcommand \op[1] {\hat{#1}}    
\newcommand \bra[1] {\langle {#1} |}
\newcommand \ket[1] {| {#1} \rangle}
\newcommand \MSobolev[1] {\mathcal{H}^1_{{#1}}}
\newcommand \WMSobolev[1] {\widetilde{\mathcal{H}}^1_{{#1}}}
\newcommand \DMset {\mathcal{D}}
\newcommand \ALinSpace {Y^*_{\text{lin}}}
\newcommand \jpvec {\mathbf{j}_{\mathrm{p}}}
\newcommand \jpvecix[1] {\mathbf{j}_{\mathrm{p};{#1}}}
\newcommand \jmvec {\mathbf{j}_{\mathrm{m}}}
\newcommand \jmvecix[1] {\mathbf{j}_{\mathrm{m};{#1}}}
\newcommand \jmcomp[1] {j_{\mathrm{m};{#1}}}
\newcommand \gspin {g_{\mathrm{s}}}
\newcommand \Fvr {F_{\text{VR}}}
\newcommand \Fvrdm {F_{\text{VR,DM}}}
\newcommand \Fldft {F}
\newcommand \Fldftdm {F_{\text{DM}}}
\newcommand \commentout[1] {}
\newcommand \magm[1] {{#1}}
\begin{document}

\title{Uniform magnetic fields in density-functional theory}

\author{Erik I. Tellgren}

\email{erik.tellgren@kjemi.uio.no}
\affiliation{
Hylleraas Centre for Quantum Molecular Sciences, Department of Chemistry, University of Oslo, P.O.~Box 1033 Blindern, N-0315 Oslo, Norway}

\author{Andre Laestadius}

\affiliation{
Hylleraas Centre for Quantum Molecular Sciences, Department of Chemistry, University of Oslo, P.O.~Box 1033 Blindern, N-0315 Oslo, Norway}

\author{Trygve Helgaker}

\affiliation{
Hylleraas Centre for Quantum Molecular Sciences, Department of Chemistry, University of Oslo, P.O.~Box 1033 Blindern, N-0315 Oslo, Norway}
\affiliation{Centre for Advanced Study at the Norwegian Academy of Science and Letters,
Drammensveien 78, N-0271 Oslo, Norway
}

\author{Simen Kvaal}

\affiliation{
Hylleraas Centre for Quantum Molecular Sciences, Department of Chemistry, University of Oslo, P.O.~Box 1033 Blindern, N-0315 Oslo, Norway}

\author{Andrew M. Teale}

\affiliation{
School of Chemistry, University of Nottingham, University Park, Nottingham NG7 2RD, UK}
\affiliation{Centre for Advanced Study at the Norwegian Academy of Science and Letters,
Drammensveien 78, N-0271 Oslo, Norway
}

\date{\today}

\begin{abstract}
We construct a density-functional formalism adapted to uniform external magnetic fields that is intermediate between conventional Density Functional Theory and Current-Density Functional Theory (CDFT). In the intermediate theory, which we term LDFT, the basic variables are the density, the canonical momentum, and the paramagnetic contribution to the magnetic moment. Both a constrained-search formulation and a convex formulation in terms of Legendre--Fenchel transformations are constructed. Many theoretical issues in CDFT find simplified analogues in LDFT. We prove results concerning $N$-representability, Hohenberg--Kohn-like mappings, existence of minimizers in the constrained-search expression, and a restricted analogue to gauge invariance. The issue of additivity of the energy over non-interacting subsystems, which is qualitatively different in LDFT and CDFT, is also discussed.
\end{abstract}

\maketitle

\section{Introduction}

Ground-state density functional theory (DFT) is a widely successful
electronic structure method due to its good trade off between accuracy and
computational cost. Although DFT is commonly used as a pragmatic
approach to compute the response of a molecule to an external magnetic
field, this application is formally outside the scope of the
theory. None of the independent theoretical foundations of DFT---that is, the
Hohenberg--Kohn (HK) theorem~\cite{HOHENBERG_PR136_864}, the
constrained-search formulation~\cite{LEVY_PNAS76_6062}, and Lieb's
convex analysis formulation~\cite{LIEB_IJQC24_243,ESCHRIG03}---directly allows
for a magnetic vector potential in the Hamiltonian. All three of these
frameworks establish, firstly, that the density-scalar potential
interaction can be isolated from other energy terms. Secondly, the
remaining energy terms can be obtained from a {\it universal} density
functional, which requires the density but not the external potential
as input. Some generalizations of DFT that allow for a magnetic
vector potential in the Hamiltonian have been proposed. Grayce and
Harris dropped the universality property and formulated magnetic-field
DFT, or BDFT, which can be viewed as a family of density-functional
theories parametrized by the external magnetic
field~\cite{GRAYCE_PRA50_3089,SALSBURY_JCP107_7350}. For a vanishing
magnetic field, the parametrized theory coincides with conventional
DFT. The resulting density functional is semi-universal---that is,
universal with respect to the scalar potential, but not with
respect to the magnetic field. The dominant alternative to BDFT that
preserves universality is the paramagnetic current-density functional
theory (CDFT) due to Vignale and Rasolt~\cite{VIGNALE_PRL59_2360,VIGNALE_PRB37_10685}. Here a universal functional
of the density and paramagnetic current density is established and
many results from conventional DFT carry over to this framework. Both
a constrained-search formulation and a generalization of Lieb's convex
formulation are possible for paramagnetic CDFT. The HK theorem of
conventional DFT has only a weaker analogue in CDFT,
which is sufficient for defining a ground-state energy CDFT functional
but not for generalization of all the stronger statements that can be
made based on the HK mapping.

In the present work, we explore a framework intermediate between
DFT and CDFT. By restricting attention to uniform magnetic
fields, we obtain a theory where universality does not require the
full paramagnetic current density as an additional variable besides the
density. Instead, the canonical momentum and the \magm{paramagnetic moment}
(equivalent to canonical angular momentum in the absence of spin) are
sufficient. The resulting theory is termed LDFT. The reduction from an
infinite-dimensional vector field to two three-dimensional quantities
represents a substantial simplification. In some respects it is also a
qualitative change, since the current density is a local quantity,
whereas linear momentum and \magm{magnetic moment} are global quantities. Both the
orbital and spin \magm{magnetic moments} contribute to the total \magm{magnetic
moment}. LDFT can therefore also be viewed as a minimal framework for
incorporating spin dependence in a universal density functional.

The outline of this paper is as follows. In Sec.~\ref{secCDFTREVIEW},
we review paramagnetic CDFT and adapt some of the technical
mathematical details to allow for uniform magnetic fields. We continue
in Sec.~\ref{secLDFT} by detailing the formulation of LDFT. Both
constrained-search and convex formulations are given. Some
HK-like results, which turn out to be stronger than the
CDFT analogues, are also explored along with issues such as
expectation-valuedness and $N$-representability. In Sec.~\ref{secKS}
we discuss Kohn--Sham theory, with focus on invariance with respect to
gauge degrees of freedom and additive separability of the
exchange--correlation functional. Next, in Sec.~\ref{secPSCRITIQUE}, we
briefly comment on the possibility of using gauge-invariant, physical
quantities instead of the paramagnetic current density or canonical
momenta. We also comment on a recent hybrid formulation due to Pan and
Sahni featuring both the canonical, gauge-dependent magnetic moment
and the physical current density~\cite{PAN_JCP143_174105}. Finally, in Sec.~\ref{secCONCLUSION}, we give
some concluding remarks.

\section{Review of paramagnetic current-density functional theory}
\label{secCDFTREVIEW}

Current-density functional theory is the natural generalization of DFT to the case when there is an external magnetic vector potential. For an $N$-electron system,
we consider Schr\"odinger--Pauli Hamiltonians of the type
\begin{equation}
  H(v,\mathbf{A}) = \frac{1}{2}  \sum_{k=1}^{N} \big(-i\nabla_k + \mathbf{A}(\mathbf{r}_k) \big)^2 + \frac{\gspin}{2}  \sum_{k=1}^N \mathbf{B}(\mathbf{r}_k) \cdot \op{\mathbf{S}}_k+ \sum_{k=1}^{N} v(\mathbf{r}_k) + \frac{1}{2} \sum_{k\neq l} \frac{1}{|\mathbf{r}_k - \mathbf{r}_l|},
\end{equation}
where atomic units are used, $\op{\mathbf{S}}_k = \frac{1}{2} \op{\boldsymbol{\sigma}}_k$ is a spin operator, $\mathbf{B}(\mathbf{r}_k) := \nabla_k\times \mathbf{A}(\mathbf{r}_k)$ is the magnetic field, and $\gspin$ is the electron spin $g$-factor. Although $\gspin = 2$ is the empirically relevant value for a non-relativistic theory, the CDFT formalism is mathematically valid for any value of $\gspin$. When we are only interested in analysing orbital effects and states that are ground states in the absence of the spin-Zeeman term, we are therefore free to set $\gspin = 0$. In what follows, we consider this parameter to be fixed but arbitrary.

We introduce the short-hand notation
\begin{equation}
  (f|g) = \int_{\mathbb{R}^3} f(\mathbf{r}) \, g(\mathbf{r}) d\mathbf{r}
\end{equation}
for the pairing integral of two scalar fields $f$ and $g$, and similarly for vector fields (where the pointwise product inside the integral is replaced by the scalar product).

Our point of departure is the constrained-search expression introduced by Vignale and Rasolt. For given external potentials $(v,\mathbf{A})$, we can write the
ground-state energy as
\begin{equation}
 \label{eqEvANAIVE}
  E(v,\mathbf{A}) = \inf_{\psi} \bra{\psi} H(v,\mathbf{A}) \ket{\psi}
  = \inf_{\rho,\jmvec} \Big( (\rho|v+\tfrac{1}{2} A^2) +
  (\jmvec|\mathbf{A}) + \inf_{\psi \mapsto \rho,\jmvec} \bra{\psi}
  -\frac{1}{2} \sum_k \nabla_k^2  + \frac{1}{2} \sum_{k\neq l} \frac{1}{|\mathbf{r}_k - \mathbf{r}_l|} \ket{\psi} \Big),
\end{equation}
where the standard notation $\psi \mapsto (\rho,\jmvec)$ means that the wave function gives rise to these densities---that is, $\rho_{\psi} = \rho$ and $\jmvecix{\psi} = \jmvec$. In detail, the densities corresponding to a wave function $\psi$ (and similarly for a mixed state $\Gamma$) are
\begin{equation}
   \rho_{\psi}(\mathbf{r}_1) = N \int_{\mathbb R^{3(N-1)}\times\mathbb{S}^N} \psi(\mathbf{x}_1,\ldots,\mathbf{x}_N)^{\dagger} \psi(\mathbf{x}_1,\ldots,\mathbf{x}_N) d\mathbf{x}_2\cdots d\mathbf{x}_N
\end{equation}
and
\begin{equation}
  \label{eqJmDEF}
 \begin{split}
   \jmvecix{\psi}(\mathbf{r}_1) & = N\, \mathrm{Im} \int_{\mathbb R^{3(N-1)}\times \mathbb{S}^N} \psi(\mathbf{x}_1,\ldots,\mathbf{x}_N)^{\dagger} \nabla_1 \psi(\mathbf{x}_1,\ldots,\mathbf{x}_N) d\mathbf{x}_2\cdots d\mathbf{x}_N 
          \\
    & \ \ \ + \gspin N \nabla_1 \times \int_{\mathbb R^{3(N-1)}\times\mathbb{S}^N} \psi(\mathbf{x}_1,\ldots,\mathbf{x}_N)^{\dagger} \op{\mathbf{S}}_1 \psi(\mathbf{x}_1,\ldots,\mathbf{r}_N) d\mathbf{x}_2\cdots d\mathbf{x}_N.
 \end{split}
\end{equation}
Above, we let $\mathbf{x}_k=(\mathbf{r}_k,\omega_k)$ and the integral sign denotes integration both over spatial coordinates $\mathbf{r}_k$ and over discrete spin coordinates $\omega_k \in \mathbb{S}$. The magnetization current density
$\jmvec(\mathbf{r}) = \jpvec(\mathbf{r}) + \gspin \nabla\times\mathbf{m}(\mathbf{r})$ is the sum of the paramagnetic current density
and the spin-current density. Both the magnetization current density and the paramagnetic current density are gauge-dependent quantities. The gauge-invariant, physical current density is given by $\mathbf{j}(\mathbf{r}) = \jmvec(\mathbf{r}) + \rho(\mathbf{r}) \, \mathbf{A}(\mathbf{r})$. To obtain the paramagnetic term $(\jmvec|\mathbf{A})$ above, it is necessary to transfer the curl operator from the magnetic vector potential to the spin density---that is, $(\mathbf{m}|\nabla\times\mathbf{A}) = (\nabla\times\mathbf{m}|\mathbf{A})$. As is common, we understand the kinetic energy operators $\frac{1}{2} (-i\nabla_l+\mathbf{A}(\mathbf{r}_l))^2$ and $-\frac{1}{2} \nabla_l^2$ as quadratic forms. When the spin-Zeeman term is understood as arising from the Pauli kinetic energy $\frac{1}{2} (\boldsymbol{\sigma}_l\cdot(-i\nabla_l+\mathbf{A}(\mathbf{r}_l)))^2$, this same point of view means that $\mathbf{B}(\mathbf{r}_l)=\nabla_l\times\mathbf{A}(\mathbf{r}_l)$ is a distributional derivative, to be transferred to the wave function product it is integrated with. Hence, $(\mathbf{m}|\nabla\times\mathbf{A}) = (\nabla\times\mathbf{m}|\mathbf{A})$ by definition.

Two remarks can made at this stage about Eq.~\eqref{eqEvANAIVE}: Firstly, the search domains for the different minimizations should be specified. Most of the appeal of density functional theory comes from the fact that the inner minimization is a {\it universal} functional of $(\rho,\jmvec)$. Hence, not only the expectation value itself in the inner minimization, but also the inner search domain should be free of dependencies on the external potentials. To setup a Lieb formalism, also the outer search domain needs to be independent of the external potentials.
Secondly, and even more fundamentally for a density functional theory, the expectation value of the kinetic energy has been split into a diamagnetic, a paramagnetic, and a canonical part that may not be separately finite. This is necessary to express $E(v,\mathbf{A})$ in terms of a current-density functional at all.

\subsection{Choice of function spaces}

For the ground-state problem of the Schr\"odinger--Pauli Hamiltonian,
the natural choice of wave-function space is the magnetic Sobolev space defined by $L^2$ functions with finite physical kinetic energy,
\begin{equation}
  \MSobolev{\mathbf{A}}(\mathbb{R}^{3N}\times\mathbb{S}^N)= \{\psi \in
  L^2| (-i\nabla_l + \mathbf{A}(\mathbf{r}_l))\psi \in L^2\}, \label{eq:mag-sob}
\end{equation}
where we leave implicit the restriction to properly anti-symmetric and
normalized $\psi$ (strictly speaking,
$\MSobolev{\mathbf{A}}(\mathbb{R}^{3N}\times\mathbb{S}^N)$ is then a
subset of a magnetic Sobolev space) and the index $l$ should be
understood as a generic particle index---that is, the stated condition
holds for all $1 \leq l \leq N$. 
The magnetic Sobolev space $\MSobolev{\mathbf{A}}$ is natural since
it makes the weak formulation of the ground-state problem
well defined. In particular, any eigenfunction is in
$\MSobolev{\mathbf{A}}$. On the other hand, this space depends on the magnetic vector potential. Note also that decomposition of the kinetic energy into terms such as $\bra{\psi} -\frac{1}{2} \sum_l \nabla_l^2 \ket{\psi}$, $(\jpvecix{\psi}|\mathbf{A})$, and $(\rho_{\psi}|A^2)$ is not in general possible since finite physical kinetic energy does not guarantee that these terms are separately finite.
For the purposes of constructing a CDFT, these two facts are serious obstacles. To ensure that all three terms are separately finite, we may make the additional assumption that the paramagnetic term, $(\jpvecix{\psi}|\mathbf{A})$, is finite. This suffices because the other two terms are always positive. Alternatively, it suffices to assume that the sum of the canonical kinetic energy and the diamagnetic term is finite as this in turn entails a finite paramagnetic term.

The condition $\psi \in \MSobolev{\mathbf{A}}(\mathbb{R}^{3N}\times \mathbb{S}^N)$ is a condition on the physical kinetic energy, excluding the spin-Zeeman term. To ensure the finiteness of $(\jmvec|\mathbf{A})$, we must verify that $(\nabla\times\mathbf{m}|\mathbf{A})$, in addition to $(\jpvec|\mathbf{A})$, is finite. Fortunately, this follows automatically. The Cauchy--Schwarz inequality directly yields
\begin{align}
   |(\jpvecix{\psi}|\mathbf{A})|^2 & \leq \sum_k \bra{\psi} -\nabla_k^2 \ket{\psi} \cdot (\rho|A^2), \\
   |(\nabla\times\mathbf{m}_{\psi}|\mathbf{A})|^2 & \leq \frac{3}{4} \sum_k \bra{\psi} -\nabla_k^2 \ket{\psi} \cdot (\rho|A^2),
\end{align}
since the squared spin operator for a single particle index is a multiple of the identity operator, $\op{\mathbf{S}}_1^2 = \tfrac{3}{4}$. For the spin-dependent term, we now have the following implication: $\psi \in \MSobolev{\mathbf{A}}(\mathbb{R}^{3N}\times\mathbb{S}^N)$ and finiteness of $(\jpvec|\mathbf{A})$ entails finiteness of the canonical kinetic energy and the diamagnetic term, which in turn entails finiteness of $(\nabla\times\mathbf{m}_{\psi}|\mathbf{A})$ and $(\jmvec|\mathbf{A})$.

A wave function with finite canonical kinetic energy belongs to the standard Sobolev space,
\begin{equation}
   \label{eqPSISOBOLEV}
   \psi \in \MSobolev{\mathbf{0}}(\mathbb{R}^{3N}\times\mathbb{S}^N)= \{\psi \in L^2 | \nabla_l \psi \in L^2\}.
\end{equation}
If the magnetic vector potentials are confined to the space of bounded functions, $\mathbf{A} \in \mathbf{L}^{\infty}$, no further restrictions beyond Eq.~\eqref{eqPSISOBOLEV} are needed~\cite{LAESTADIUS_IJQC114_1445}. However, in order to allow for the present CDFT formalism to later be specialized to LDFT in Sec.~\ref{secLDFT},  we shall choose a wave-function space that automatically gives rise to a finite physical kinetic energy and a finite paramagnetic term when the external vector potential grows linearly without bound such as $\mathbf{A}(\mathbf{r}) = \frac{1}{2} \mathbf{B} \times \mathbf{r}$. As a by-product, this also ensures a well-defined \magm{magnetic moment}. To achieve this, we introduce a weight function $g(\mathbf{r})$ and take the wave-function space to be defined by
\begin{equation}
   \label{eqPsiinWMSobolev}
   \psi \in \WMSobolev{\mathbf{0}}(\mathbb{R}^{3N}\times\mathbb{S}^N)= \{g(\mathbf{r}_l) \psi \in L^2 \, | \, \nabla_l \psi \in \mathbf{L}^2\}.
\end{equation}
We have in mind the specific choice
\begin{equation}
  g(\mathbf{r}) = \sqrt{1 + |\mathbf{r}|^2},
\end{equation}
which guarantees a finite paramagnetic term $(\jmvec|\mathbf{A})$ and finite \magm{magnetic moments}. However, much of the discussion remains valid for a generic $g(\mathbf{r})$ bounded away from zero to ensure that $1/g(\mathbf{r}) \in L^{\infty}(\mathbb{R}^3)$ is a bounded function. This slightly generalizes the previous work by Laestadius~\cite{LAESTADIUS_IJQC114_1445}, where the choice $g=1$ was implicit---that is, the unweighted Sobolev space was used.

The membership $g\, \jpvec \in \mathbf{L}^1$ is a straightforward application of the Cauchy--Schwarz inequality and the fact that $g^2\rho$ and the kinetic energy density are $L^1$ functions: 
\begin{equation}
 \int_{\mathbb R^3} |g\,\jpvec|d\mathbf r_1\leq \left(\int_{\mathbb R^3} g^2\rho d\mathbf r_1 \right)^{1/2}\left(N\int_{\mathbb{R}^{3N}\times\mathbb{S}^N} |\nabla_1 \psi|^2 d\mathbf{x}_1 \cdots d\mathbf{x}_N \right)^{1/2} <\infty.
\end{equation}
A similar Cauchy--Schwarz estimate shows that also the spin-current belongs to the same space, $g \nabla\times\mathbf{m} \in \mathbf{L}^1$. Hence, the total magnetization current $\jmvec$ is in the same space.

To summarize, Eq.~\eqref{eqPsiinWMSobolev} entails that the densities
belong to the function spaces
\begin{align}
     X & := \{\rho \, | \, g^2 \rho \in L^1(\mathbb R^3), \rho \in L^3(\mathbb{R}^3)\},
            \\
    Y & := \{\jmvec \, | \, g \, \jmvec  \in \mathbf L^1(\mathbb{R}^3)\}.
\end{align}
This is a slight modification of the result $\rho \in L^1 \cap L^3$
found by Lieb~\cite{LIEB_IJQC24_243} for a standard, unweighted
Sobolev space. The H\"older inequality guarantees finite interactions
$(\rho|v)$, $(\rho|A^2)$, and $(\jmvec|\mathbf{A})$ 
when the external potentials belong to the spaces
\begin{align}
   X^* & := \{ v_1+v_2 \, | \, g^{-2} v_1 \in L^{\infty}(\mathbb{R}^3), \  v_2  \in L^{3/2}(\mathbb{R}^3) \},
       \\
  Y^* & :=  \{ \mathbf{A} \, | \, g^{-1} \mathbf{A}  \in \mathbf{L}^{\infty}(\mathbb{R}^3)\}.
\label{PotSp}
\end{align}
Indeed, $Y^*$ is the continuous dual of the weighted Lebesgue space
$Y$, whereas $X^*$ is the continuous dual space of $X$,
a Banach spaces under the norm $\|\rho\|_X := \|g^2\rho\|_{L^1} +
\|\rho\|_{L^3}$.
With this choice, the density-dependent energy term $(\rho|v+\tfrac{1}{2}A^2)$ is guaranteed to be finite for all $\rho \in X$, $v \in X^*$, and $\mathbf{A} \in Y^*$. As a consequence, the diamagnetic potential must also belong to the dual space of the densities---that is, $A^2 \in X^*$ and $v + \tfrac{1}{2} A^2 \in X^*$.

We finally remark that finiteness of the canonical kinetic energy, Eq.~\eqref{eqPsiinWMSobolev}, implies finiteness of the $\jpvec$-corrected von Weizs\"acker energy~\cite{BATES_JCP137_164105,LAESTADIUS_IJQC114_1445}:
\begin{equation}
  \int \frac{|\nabla\rho_{\psi}|^2 + 4|\jpvecix{\psi}|^2}{8\rho_{\psi}} d\mathbf{r} \leq \bra{\psi} -\tfrac{1}{2} \nabla^2 \ket{\psi} < +\infty.
\end{equation}
Recent work has shown this bound to be valid also when $\jpvecix{\psi}$ is replaced by $\jmvecix{\psi}$~\cite{TELLGREN_MANUSCRIPT}.

\subsection{Constrained search and Lieb formulations of CDFT}

Having discussed the search domains, we introduce the notation $\op{H}_0 = \op{H}(0,\mathbf{0})$ and define the universal functionals
\begin{align}
  \Fvr(\rho,\jmvec) & = \inf_{\substack{\psi \in \WMSobolev{\mathbf{0}} \\ \psi \mapsto \rho, \jmvec}} \bra{\psi} \op{H}_0 \ket{\psi},    \\
  \Fvrdm(\rho,\jmvec) & = \inf_{\substack{\Gamma \in \DMset \\ \Gamma \mapsto \rho, \jmvec}} \trace{ \Gamma \op{H}_0}.
\end{align}
Within the present, mildly restricted choice of search domains, the first functional is simply the pure-state constrained-search functional of Vignale and Rasolt~\cite{VIGNALE_PRL59_2360}, while the second functional is its extension to mixed states, with $\DMset$ denoting the convex set of all valid density operators~\cite{TELLGREN_PRA89_012515}. Because all pure states $\psi$ can be represented by density operators $\Gamma = \ket{\psi} \bra{\psi}$, it follows immediately that $\Fvrdm(\rho,\jmvec) \leq \Fvr(\rho,\jmvec)$.
The pure- and mixed-state functionals can be used interchangeably in the CDFT variation principle,
\begin{equation}
 \begin{split}
  \label{eqCDFTVarPrinc}
	E(v,\mathbf{A}) = \inf_{\psi \in \WMSobolev{\mathbf{0}}} \bra{\psi} H(v,\mathbf{A}) \ket{\psi} &= \inf_{(\rho,\jmvec)\in X\times Y}
	\Big( (\rho|v + \tfrac{1}{2} A^2) + (\jmvec|\mathbf{A}) +  \inf_{\substack{\psi \in \WMSobolev{\mathbf{0}} \\ \psi \mapsto \rho, \jmvec}} \bra{\psi} \op{T} + \op{W} \ket{\psi}\Big)
      \\
	& = \inf_{(\rho,\jmvec)\in X\times Y}\Big( (\rho|v + \tfrac{1}{2} A^2) + (\jmvec|\mathbf{A}) +  \Fvr(\rho,\jmvec)\Big)
      \\
	& = \inf_{(\rho,\jmvec)\in X\times Y}\Big( (\rho|v + \tfrac{1}{2} A^2) + (\jmvec|\mathbf{A}) +  \Fvrdm(\rho,\jmvec)\Big).
 \end{split}
\end{equation} 
The last equality follows because $E(v,\mathbf{A})$ could just as well have been expressed as an infimum over mixed states.

Because our choice of wave-function space amounts to a mild restriction to wave functions with well-defined second-order moments, 
there is a risk that some potentials $(v',\mathbf{A}') \in X^*\times Y^*$ have ground states that decay so slowly 
that $\psi' \notin \WMSobolev{\mathbf{0}}$ and $(\rho',\jmvec') \notin X\times Y$. Such ground states could have an infinite diamagnetic energy, 
which would be compensated by an infinite, negative paramagnetic energy. For the present purposes, we simply accept the fact that minima
attained by slowly decaying ground states without well-defined second-order moments may be lost in this formulation. 
However, the restrictions on the function space do not affect the value of the infimum defining $E(v,\mathbf{A})$, it only means that there 
are more instances where the infimum is not a minimum, attained by a wave function---for a careful discussion of this aspects, we refer to Kvaal et al.~\cite{KVAAL_MANUSCRIPT}. We expect molecular systems to feature exponentially decaying ground states, so this mild restriction leaves many areas of application unaffected. 
Several rigorous results exist that establish, for example, the exponential decay of ground-state wave functions for wide classes of potentials in the absence of magnetic fields~\cite{SIMON_JMP41_3523} and in the presence of constant or asymptotically constant fields~\cite{AVRON_CMP79_529,SORDONI_CPDE23_247}.

We now turn to the convexity properties of the functionals. The pure-state constrained.search functional $\Fvr(\rho,\jmvec)$ is not convex---see Proposition 8 in 
Ref.~\onlinecite{LAESTADIUS_IJQC114_1445}, where we here in addition assume that the ground states in the counterexample are elements of $\WMSobolev{\mathbf{0}}$. By contrast, the mixed-state functional $\Fvrdm(\rho,\jmvec)$ is jointly convex in $(\rho,\jmvec)$ because the map $\Gamma \mapsto (\rho,\jmvec)$ is linear. If the ground-state energy functional $E(v,\mathbf{A})$ had featured a universal functional and potentials paired {\it linearly} with densities, it would have had the form of a Legendre--Fenchel transform and be manifestly concave. However, the diamagnetic term $\frac{1}{2} (\rho|A^2)$ prevents such an identification and the existence of diamagnetic molecules indeed shows that $E(v,\mathbf{A})$ is not concave in $\mathbf A$. The non-concavity
can be remedied by a change of variables that absorbs the diamagnetic term into the scalar potential~\cite{TELLGREN_PRA89_012515},
\begin{equation}
   u := v + \frac{1}{2} A^2.
\end{equation}
Since $u \in X^*$ follows from $v \in X^*$ and $\mathbf{A} \in Y^*$, this change of variables ``stays within'' the already specified space of scalar potentials. Defining $\bar{H}(u,\mathbf{A}) := H(u - \tfrac{1}{2} A^2, \mathbf{A})$, we may now write
\begin{equation}
 \begin{split}
  \label{eqCDFTVarPrinc2}
	\bar{E}(u,\mathbf{A}) = \inf_{\Gamma \in \DMset} \trace{\Gamma \bar{H}(u,\mathbf{A})} &= \inf_{(\rho,\jmvec)\in X\times Y}\Big( (\rho|u) + (\jmvec|\mathbf{A}) +  \Fvrdm(\rho,\jmvec)\Big).
 \end{split}
\end{equation} 
Note that $\bar{E}(u,\mathbf{A})= E(u - \tfrac{1}{2} A^2,\mathbf{A})$. It can be remarked that not all potential pairs in $X^*\times Y^*$ correspond to physical systems. For example, some pairs correspond to harmonic oscillator-type potentials with negative sign. A concrete example is a one-electron system subject to $u(\mathbf{r}) = -1/r$ and $\mathbf{A}(\mathbf{r}) = \tfrac{1}{2} \mathbf{B}\times\mathbf{r}$; in this case the Hamiltonian $\bar{H}(u,\mathbf{A})$ fails to be bounded from below as there is no diamagnetic term to balance the orbital Zeeman effect~\cite{SAVIN_MP115_13}.

The functional $\bar{E}(u,\mathbf{A})$ has the form of a Legendre--Fenchel transform and is jointly concave in $(u,\mathbf{A})$. 
Repeated Legendre--Fenchel transformation now yields the conjugate pair
\begin{equation}
  \label{eqFLFTransf}
 \begin{split}
	 \bar{F}(\rho,\jmvec) = \sup_{(u,\mathbf{A}) \in X^*\times Y^*} \big( \bar{E}(u,\mathbf{A}) - (\rho|u) - (\jmvec|\mathbf{A}) \big)
 \end{split}
\end{equation} 
and
\begin{equation}
 \begin{split}
  \label{eqCDFTVarPrinc3}
	\bar{E}(u,\mathbf{A}) = \inf_{(\rho,\jmvec)\in X\times Y}\Big( (\rho|u) + (\jmvec|\mathbf{A}) +  \bar{F}(\rho,\jmvec)\Big).
 \end{split}
\end{equation}
Any function given as a Legendre--Fenchel transform automatically has
some regularity. Thus, $\bar{F}$ is convex and  lower
semicontinuous (l.s.c.), while $\bar{E}$ is concave and upper
semicontinuous (u.s.c.); see, for example, Ref.~\cite{VANTIEL}.  Lower semicontinuity of 
$f(x)$ means that, if $x_k \to x$,
then $\lim \inf_{k\rightarrow \infty} f(x_k) \geq f(x)$ (and similarly for u.s.c. functions).

All three universal functionals $\Fvr$, $\Fvrdm$, and $\bar{F}$ are equivalent in the sense that they can be used interchangeably in the CDFT variation principle in Eqs.~\eqref{eqCDFTVarPrinc}, \eqref{eqCDFTVarPrinc2}, and \eqref{eqCDFTVarPrinc3}, producing the
right ground-state energy. In the language of Ref.~\cite{KVAAL_JCP143_184106} they are {\it admissible functionals}. However, $\bar{F}$ is the \emph{unique} admissible functional that is both convex  and l.s.c.---that is, it is the unique functional that satisfies the variation principle~\eqref{eqFLFTransf}. This is significant, because this functional it the only one that can---in principle, at least---be computed from $\bar{E}(u,\mathbf{A})$.

Removing the minimization in Eq.~\eqref{eqCDFTVarPrinc2}
and moving the density-potential pairing integrals to the left-hand side, we obtain
\begin{equation}
  \bar{E}(u,\mathbf{A}) - (\rho|u) - (\jmvec|\mathbf{A}) \leq \Fvrdm(\rho,\jmvec),
\end{equation}
for all $(\rho,\jmvec) \in X\times Y$ and all $(u,\mathbf{A}) \in X^*\times Y^*$. Taking the supremum over potentials,
\begin{equation}
    \label{eqFlFvrBound}
    \bar{F}(\rho,\jmvec) \leq \Fvrdm(\rho,\jmvec).
\end{equation}
The above argument is not specific to $\Fvrdm$, but valid for any
admissible functional. Therefore, in
general, $\bar{F}\leq F$ for any admissible $F$, such as $\Fvr$. 
For the l.s.c.\ of $\bar{F}$, in particular, see the straightforward argument adapted to the
CDFT setting in the proof of Proposition 12 in
Ref.~\onlinecite{LAESTADIUS_IJQC114_1445}.

The non-convexity of the pure-state functional means that,
in general, $\bar{F}(\rho,\jmvec) \neq \Fvr(\rho,\jmvec)$. The
identification of $\bar{F}$ and $\Fvrdm$ hinges on the l.s.c.\ of the
latter functional, which is presently an open question; see
Ref.~\onlinecite{KVAAL_MANUSCRIPT}. If $\Fvrdm$ is l.s.c., it follows
that $\Fvrdm = \bar{F}$ since $\bar{F}$ is unique. It is remarkable
that, in standard DFT, we have the result $\bar{F}(\rho) =
F_\text{DM}(\rho)$~\cite{LIEB_IJQC24_243}. 

\section{LDFT: A DFT formalism for affine vector potentials}
\label{secLDFT}

We study in this section how the paramagnetic CDFT of Vignale and Rasolt can be specialized and simplified when the vector-potential space is suitably constrained. Let us restrict attention to magnetic vector potentials that can be Taylor expanded around some reference point $\mathbf{G}$,
\begin{equation}
  A_k(\mathbf{r}) =  A^{(0)}_k + \sum_{l=1}^3 A^{(1)}_{kl} (r_l - G_l) + \sum_{l,m=1}^3 A^{(2)}_{klm} (r_l - G_l) (r_m - G_m) + \ldots
\end{equation}
Truncating at some fixed order $n$ is equivalent to restricting the function space of allowed vector potentials to polynomials of degree $n$ in $(x,y,z)$. 
In what follows, we study the case $n=1$ with the additional gauge condition $\nabla\cdot\mathbf{A}(\mathbf{r}) = 0$. 
However, everything can be straightforwardly generalized to an arbitrary finite order $n$ without additional gauge constraints.

We thus focus here on vector potentials of the form $\mathbf{A} = \tfrac{1}{2} \mathbf{B}\times\mathbf{r}$. We find it instructive to not completely eliminate all gauge degrees of freedom, allowing constant shifts $\mathbf{a}$ of the vector potentials. Retention of these gauge degrees of freedom makes it easier to discuss, for example,  additivity of physical energies over independent subsystems. Hence, we take the scalar potential space to be $X^*$ (as in the previous section) and the vector potential space to be a six-dimensional vector space,
\begin{align}
  \mathbf{A} & \in \ALinSpace := \{ \mathbf{a} + \tfrac{1}{2} \mathbf{B}\times(\mathbf{r}-\mathbf{G}) | \mathbf{a} \in \mathbb{R}^3, \mathbf{B} \in \mathbb{R}^3\} \subset Y^*,
  \label{eqALinSpace}
\end{align}
where the weight function is $g(\mathbf{r}) = \sqrt{1+|\mathbf{r}|^2}$.  This choice of function space guarantees that both 
the canonical momentum $\mathbf{p} = \int \jpvec d\mathbf{r}$ and the \magm{paramagnetic moment} $\mathbf{L}_{\mathbf{C}} = \int (\mathbf{r}-\mathbf{C})\times\jmvec d\mathbf{r}$ are finite. It is also worth being explicit about the role of spin:
\begin{align}
   \mathbf{p} & = \int \jpvec d\mathbf{r} = \int \jmvec d\mathbf{r},
              \\
   \mathbf{L}_{\mathbf{C}} & = \int (\mathbf{r}-\mathbf{C})\times\jmvec d\mathbf{r} = \int (\mathbf{r}-\mathbf{C})\times\jpvec d\mathbf{r} + \gspin \mathbf{S}.
\end{align}
Hence, if $\gspin \neq 0$, then the spin degrees of freedom contribute to the \magm{magnetic moment} but not to the linear momentum. 
The physical momentum can similarly be expressed as $\boldsymbol{\pi} = \int \mathbf{j} d\mathbf{r} = \mathbf{p} + \int \rho \mathbf{A} d\mathbf{r}$ and the \magm{physical magnetic moment} as $\mathbf{J}_{\mathbf{C}} = \int (\mathbf{r}-\mathbf{C})\times \mathbf{j} d\mathbf{r} = \mathbf{L}_{\mathbf{C}} + \int (\mathbf{r}-\mathbf{C})\times \rho \mathbf{A} d\mathbf{r}$. Magnetic moments relative to different reference points are related as $\mathbf{L}_{\mathbf{C}} = \mathbf{L}_{\mathbf{D}}-(\mathbf{D}-\mathbf{C})\times\mathbf{p}$ and $\mathbf{J}_{\mathbf{C}} = \mathbf{J}_{\mathbf{D}}-(\mathbf{D}-\mathbf{C})\times\boldsymbol{\pi}$. Moreover, 
for any other eigenstate of the Hamiltonian (and, in particular, the ground state), we have 
\begin{equation}
  \boldsymbol{\pi} = \mathbf{p} + N \mathbf{a} + \frac{1}{2} \mathbf{B} \times \boldsymbol{\mu}_{\mathbf{G}} = \mathbf{0},
\end{equation}
where $\boldsymbol{\mu}_{\mathbf{G}} = \int (\mathbf{r}-\mathbf{G}) \, \rho d\mathbf{r}$ is the electric dipole moment relative to $\mathbf{G}$. As a consequence, for energy eigenstates, the \magm{physical magnetic
  moment} $\mathbf{J}_{\mathbf{C}}$ is independent of the reference
point $\mathbf{C}$.

The only allowed gauge transformations are now {\it gauge shifts}---that is, constant shifts of the magnetic vector potential. The canonical momentum and \magm{paramagnetic moment} transform in a simple manner under such gauge shifts,
\begin{align}
  \psi & \mapsto \psi e^{i \mathbf{a}'\cdot\mathbf{r}},  \\
  \mathbf{A} & \mapsto \mathbf{A} + \mathbf{a}',   \\
  \mathbf{p} & \mapsto \mathbf{p} - N \mathbf{a}',   \\
  \mathbf{L}_{\mathbf{G}} & \mapsto \mathbf{L}_{\mathbf{G}} - \boldsymbol{\mu}_{\mathbf{G}} \times \mathbf{a}',
\end{align}
where $N = \int \rho d\mathbf{r}$ is the number of electrons. We also define a gauge-shift invariant quantity that we term {\it the intrinsic magnetic moment},
\begin{equation}
 \label{eqLAMBDADEF}
  \boldsymbol{\Lambda} = \mathbf{L}_{\mathbf{G}} - \frac{\boldsymbol{\mu}_{\mathbf{G}} \times \mathbf{p}}{N}.
\end{equation}
The \magm{intrinsic magnetic moment} is identical to the \magm{paramagnetic moment} with respect to the average position of the electrons or centre of charge, $\boldsymbol{\Lambda} = \mathbf{L}_{\mathbf{R}}$ with $\mathbf{R} = \mathbf{G} + N^{-1} \boldsymbol{\mu}_{\mathbf{G}}$. In some sense, the \magm{intrinsic magnetic moment} extracts the part of $\mathbf{L}_{\mathbf{G}}$ that is independent of the physically arbitrary reference point $\mathbf{G}$. It is therefore better suited as a parameter in a functional describing a physical energy, which must be independent of arbitrary reference points. We return to this point in Sec.~\ref{secKS}, 
where we discuss the Kohn--Sham decomposition of the total energy.

\subsection{Constrained-search and Lieb formulations of LDFT}

Let us now define a ground-state functional on the space $X^* \times \ALinSpace$ by
\begin{equation}
 \begin{split}
  E(v,\mathbf{a},\mathbf{B}) & = \inf_{\psi \in \WMSobolev{\mathbf{0}}} \bra{\psi} \sum_{k=1}^N \frac{1}{2} \big(-i\nabla_k + \mathbf{A}(\mathbf{r}_k) \big)^2 + \sum_{k=1}^N v(\mathbf{r}_k) + \op{W} \ket{\psi} \\
    & = \inf_{(\rho, \mathbf{p}, \mathbf{L}_{\mathbf{G}})\in X \times \mathbb{R}^6} \big( (\rho|v+\tfrac{1}{2} A^2) + \mathbf{a}\cdot\mathbf{p} + \frac{1}{2} \mathbf{B}\cdot\mathbf{L}_{\mathbf{G}} + \inf_{\substack{\psi \in \WMSobolev{\mathbf{0}} \\ \psi \mapsto \rho, \mathbf{p}, \mathbf{L}_{\mathbf{G}}}} \bra{\psi} \op{H}_0 \ket{\psi} \big)
     \\
  &  = \inf_{(\rho, \mathbf{p}, \mathbf{L}_{\mathbf{G}}) \in X\times\mathbb{R}^6} \big( (\rho|v+\tfrac{1}{2} A^2) + \mathbf{a}\cdot\mathbf{p} + \frac{1}{2} \mathbf{B}\cdot\mathbf{L}_{\mathbf{G}} +\Fldft(\rho,\mathbf{p},\mathbf{L}_{\mathbf{G}}) \big)
     \\
  &  = \inf_{(\rho, \mathbf{p}, \mathbf{L}_{\mathbf{G}}) \in X\times\mathbb{R}^6} \big( (\rho|v+\tfrac{1}{2} A^2) + \mathbf{a}\cdot\mathbf{p} + \frac{1}{2} \mathbf{B}\cdot\mathbf{L}_{\mathbf{G}} +\Fldftdm(\rho,\mathbf{p},\mathbf{L}_{\mathbf{G}}) \big),
 \end{split}
\end{equation}
where the constrained-search functional can be defined either with pure states or with mixed states,
\begin{align}
   \Fldft(\rho,\mathbf{p},\mathbf{L}_{\mathbf{G}}) & := \inf_{\substack{\psi \in \WMSobolev{\mathbf{0}} \\ \psi \mapsto \rho, \mathbf{p}, \mathbf{L}_{\mathbf{G}}}} \bra{\psi} \op{H}_0 \ket{\psi},
           \\
   \Fldftdm(\rho,\mathbf{p},\mathbf{L}_{\mathbf{G}}) & := \inf_{\substack{\Gamma \in \DMset  \\ \Gamma \mapsto \rho, \mathbf{p}, \mathbf{L}_{\mathbf{G}}}} \trace{ \Gamma \op{H}_0 }.
\end{align}
For a magnetization current density consistent with the canonical momentum and \magm{magnetic moment}---that is, $\mathbf{p} = \int \jmvec d\mathbf{r}$ and $\mathbf{L}_{\mathbf{G}} = \int \mathbf{r}_{\mathbf{G}}\times\jmvec d\mathbf{r}$---we have the inequalities
\begin{align}
   \Fldft(\rho,\mathbf{p},\mathbf{L}_{\mathbf{G}}) & \leq \Fvr(\rho,\jmvec),
           \\
   \Fldftdm(\rho,\mathbf{p},\mathbf{L}_{\mathbf{G}}) & \leq \Fvrdm(\rho,\jmvec).
\end{align}

Next, noting that $A^2 \in X^*$, we may introduce a new potential $u := v + \tfrac{1}{2} A^2 \in X^*$ that remains in the same function space as $v \in X^*$. After this reparametrization, we obtain
\begin{equation}
  \bar{E}(u,\mathbf{a},\mathbf{B}) = \inf_{(\rho, \mathbf{p}, \mathbf{L}_{\mathbf{G}})\in X\times\mathbb{R}^6} \big( (\rho|u) + \mathbf{a}\cdot\mathbf{p} + \tfrac{1}{2} \mathbf{B}\cdot\mathbf{L}_{\mathbf{G}} + \Fldftdm(\rho,\mathbf{p},\mathbf{L}_{\mathbf{G}}) \big).
\end{equation}
This expression takes the form of a Legendre--Fenchel transformation (concave conjugation). Hence, a Lieb formalism is obtained for affine vector potentials. 
Repeated conjugation gives the conjugate pair
\begin{equation}
  \bar{F}(\rho,\mathbf{p},\mathbf{L}_{\mathbf{G}}) = \sup_{(u,\mathbf{a},\mathbf{B}) \in Y\times \mathbb{R}^6} \big( \bar{E}(u,\mathbf{a},\mathbf{B}) - (\rho|u) - \mathbf{a}\cdot\mathbf{p} - \tfrac{1}{2} \mathbf{B}\cdot\mathbf{L}_{\mathbf{G}} \big)
\end{equation}
and
\begin{equation}
  \bar{E}(u,\mathbf{a},\mathbf{B}) = \inf_{(\rho, \mathbf{p}, \mathbf{L}_{\mathbf{G}})\in X\times\mathbb{R}^6} \big( (\rho|u) + \mathbf{a}\cdot\mathbf{p} + \tfrac{1}{2} \mathbf{B}\cdot\mathbf{L}_{\mathbf{G}} + \bar{F}(\rho,\mathbf{p},\mathbf{L}_{\mathbf{G}}) \big).
\end{equation}
These two transformations constitute a generalization of Lieb's formulation of DFT to include uniform magnetic fields.

\subsection{Expectation-valuedness of the pure state functional}

In the terminology of Kvaal and Helgaker, a DFT functional is said to be expectation-valued if its finite function values are always attained by expectation value $\bra{\psi} \op{H}_0 \ket{\psi}$ for some state $\psi$~\cite{KVAAL_JCP143_184106}.
The conventional DFT functionals are expectation valued~\cite{LIEB_IJQC24_243} as is the pure-state CDFT functional $\Fvr$~\cite{LAESTADIUS_IJQC114_1445}.
It is an open question whether or not the universal CDFT functional $\Fvrdm$ is expectation valued.
Below we modify the CDFT argument in Ref.~\onlinecite{LAESTADIUS_IJQC114_1445} to prove that also the pure state LDFT functional $\Fldft$ has this property.

We first introduce the set of density triples corresponding to $N$ electrons,
\begin{equation}
\mathcal{J}_N = \{ (\rho,\mathbf{p},\mathbf{L}_{\mathbf{G}}) \in X \times \mathbb{R}^6 \ | \ \rho(\mathbf{r}) \geq 0, \ \int \rho \, d\mathbf{r} =N \}.
\end{equation}
All densities that are pure-state $N$-representable (with finite kinetic energy) are contained in this set. The LDFT functional attains its minimum whenever it is finite. In the
language of Ref.~\onlinecite{KVAAL_JCP143_184106}, $\Fldft$ is expectation valued:
\begin{theorem}
         For all triples $(\rho,\mathbf{p} ,\mathbf{L}_{\mathbf{G}})
         \in \mathcal{J}_N$ such that
         $\Fldft(\rho,\mathbf{p},\mathbf{L}_{\mathbf{G}}) < +\infty$, there exists a $\psi\in \WMSobolev{\mathbf{0}}$ such that $\psi\mapsto (\rho,\mathbf{p},\mathbf{L}_{\mathbf{G}})$ and 
     \begin{equation}
        \Fldft(\rho,\mathbf{p},\mathbf{L}_{\mathbf{G}}) = \bra{\psi} \op{H}_0 \ket{\psi}.
      \end{equation}
\end{theorem}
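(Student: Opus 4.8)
The plan is to run the direct method of the calculus of variations, following the CDFT argument of Laestadius~\cite{LAESTADIUS_IJQC114_1445} but with the pointwise current constraint replaced by the six scalar constraints $\mathbf{p}_\psi=\mathbf{p}$ and $\mathbf{L}_{\mathbf{G},\psi}=\mathbf{L}_{\mathbf{G}}$. Concretely, I would fix $(\rho,\mathbf{p},\mathbf{L}_{\mathbf{G}})\in\mathcal{J}_N$ with $\Fldft(\rho,\mathbf{p},\mathbf{L}_{\mathbf{G}})<+\infty$, take a minimizing sequence $\psi_k\in\WMSobolev{\mathbf{0}}$ with $\psi_k\mapsto(\rho,\mathbf{p},\mathbf{L}_{\mathbf{G}})$ and $\bra{\psi_k}\op{H}_0\ket{\psi_k}\to\Fldft(\rho,\mathbf{p},\mathbf{L}_{\mathbf{G}})$, extract a suitable limit $\psi$, and verify three things: that $\psi\in\WMSobolev{\mathbf{0}}$ with $\psi\mapsto(\rho,\mathbf{p},\mathbf{L}_{\mathbf{G}})$ (so that $\psi$ is an admissible competitor), that $\bra{\psi}\op{H}_0\ket{\psi}\le\liminf_k\bra{\psi_k}\op{H}_0\ket{\psi_k}$ (lower semicontinuity), and then conclude equality from the definition of the infimum. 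Since $\op{W}\ge 0$ and $-\tfrac{1}{2}\sum_l\nabla_l^2\ge 0$, the canonical kinetic energies $\bra{\psi_k}-\tfrac{1}{2}\sum_l\nabla_l^2\ket{\psi_k}$ are bounded, so $\{\psi_k\}$ is bounded in the unweighted Sobolev space $\MSobolev{\mathbf{0}}$; after passing to a subsequence, $-i\nabla_l\psi_k\rightharpoonup -i\nabla_l\psi$ weakly in $\mathbf{L}^2$ for every $l$ and some antisymmetric $\psi$.

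The compactness step is where the weighted space earns its keep. Because the constrained search holds $\rho_{\psi_k}=\rho$ \emph{fixed} and $g^2\rho\in L^1$, the second moments $\int|\mathbf{r}_l|^2|\psi_k|^2 = N^{-1}\int|\mathbf{r}|^2\rho\,d\mathbf{r}$ are finite and, crucially, \emph{independent of} $k$. This yields a $k$-uniform tightness estimate which, combined with local Rellich compactness, upgrades the weak $L^2$ convergence to strong convergence $\psi_k\to\psi$ in $L^2(\mathbb{R}^{3N}\times\mathbb{S}^N)$; in particular $\|\psi\|=1$, so no mass escapes to infinity. Splitting $\mathbb{R}^{3N}$ into $\{|\mathbf{r}_l|\le R\}$ (where the strong $L^2$ convergence of $\psi_k$ controls $(\mathbf{r}_l-\mathbf{G})(\psi_k-\psi)$) and $\{|\mathbf{r}_l|>R\}$ (where the $k$-independent tail of $|\mathbf{r}|^2\rho$ controls it) then promotes this to strong convergence $g(\mathbf{r}_l)\psi_k\to g(\mathbf{r}_l)\psi$ in $L^2$ for every $l$. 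Consequently $\psi\in\WMSobolev{\mathbf{0}}$, and $\rho_\psi=\rho$ since $\rho_{\psi_k}\to\rho_\psi$ in $L^1$ while $\rho_{\psi_k}\equiv\rho$.

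Next I would identify the momentum and \magm{magnetic moment} of the limit, using that for normalized states $\mathbf{p}_\psi=\bra{\psi}\sum_l(-i\nabla_l)\ket{\psi}$ and $\mathbf{L}_{\mathbf{G},\psi}=\bra{\psi}\sum_l(\mathbf{r}_l-\mathbf{G})\times(-i\nabla_l)\ket{\psi}+\gspin\bra{\psi}\sum_l\op{\mathbf{S}}_l\ket{\psi}$ (the curl term in $\jmvec$ integrating to zero, as noted above). After transferring each derivative in the orbital terms onto the weighted factor, these expectations become pairings of the weakly convergent $-i\nabla_l\psi_k$ against the strongly convergent sequences $\psi_k$ and $(\mathbf{r}_l-\mathbf{G})\psi_k$, and hence pass to the limit; the spin term converges because $\op{\mathbf{S}}_l$ is bounded. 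Thus $\psi\mapsto(\rho,\mathbf{p},\mathbf{L}_{\mathbf{G}})$. For lower semicontinuity, the canonical kinetic energy is weakly l.s.c.\ by convexity and the Coulomb repulsion is weakly l.s.c.\ on $\MSobolev{\mathbf{0}}$ by the standard argument~\cite{LIEB_IJQC24_243}, so $\bra{\psi}\op{H}_0\ket{\psi}\le\Fldft(\rho,\mathbf{p},\mathbf{L}_{\mathbf{G}})$; since $\psi$ is an admissible competitor the reverse inequality holds by definition of the infimum, and equality follows.

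The step I expect to be the main obstacle is the strong $L^2$ convergence of the weighted functions $g(\mathbf{r}_l)\psi_k$. The energy bound controls $\nabla_l\psi_k$ in $L^2$ but provides no control of $|\mathbf{r}_l|\,\nabla_l\psi_k$, so the angular-momentum pairing $\bra{\psi_k}(\mathbf{r}_l-\mathbf{G})\times(-i\nabla_l)\ket{\psi_k}$ cannot be passed to the limit by pairing two merely weakly convergent factors; it is precisely the combination of the weighted wave-function space with the fact that the constrained search fixes $\rho$---and thereby fixes the far-field second-moment density uniformly in $k$---that makes the tail estimate work. The same structure is what guarantees $\|\psi\|=1$, i.e.\ that normalization is not lost in the limit.
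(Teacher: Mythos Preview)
Your proposal is correct and follows essentially the same route as the paper's proof: a minimizing sequence bounded in the unweighted Sobolev space, weak compactness, Lieb's Theorem~3.3 argument for strong $L^2$ convergence and $\rho_\psi=\rho$, the fixed-$\rho$ tail estimate to upgrade to strong convergence of $g(\mathbf{r}_l)\psi_k$ in $L^2$, and then passing the momentum and \magm{magnetic-moment} constraints to the limit by pairing the strongly convergent weighted wave functions against the weakly convergent gradients (the paper packages the orbital and spin pieces via a bounded operator $\op{\Omega}$, but the mechanism is the same). Your identification of the strong convergence of $g\psi_k$ as the crux, and of the $k$-independence of $\int g^2\rho$ as the reason it works, matches the paper exactly.
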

\begin{proof}
        Let $\{ \psi_k \} \subset \WMSobolev{\mathbf{0}}$ be a minimizing sequence such that 
        \begin{equation}
        \lim_{k\to \infty}  \bra{\psi_k} \op{H}_0 \ket{\psi_k} =  \Fldft(\rho,\mathbf p,\mathbf{L}_{\mathbf{G}}),
        \end{equation}
        and $\psi_k \mapsto (\rho,\mathbf{p},\mathbf{L}_{\mathbf{G}})$ for each $k$. This means that, for any $k$, 
we have $\int \jmvecix{k} d\mathbf{r} = \mathbf{p}$ and $\int \mathbf{r}_{\mathbf{G}} \times \jmvecix{k} d\mathbf{r} = \mathbf{L}_{\mathbf{G}}$, where 
        $\jmvecix{k} := \jmvecix{\psi_k}$. Furthermore, denoting the volume element of the combined coordinate and spin space $\mathbb{R}^{3N} \times \mathbb{S}^N$ by $d \tau$, we have
        \begin{align}
                 \label{eqExpValKinBound}
                &\sum_{l=1}^N \int_{\mathbb R^{3N} \times \mathbb{S}^N} \vert \nabla_l \psi_k \vert^2 d\tau \leq 2 \bra{\psi_k} \op{H}_0 \ket{\psi_k} \leq C,\\
                &N \int_{\mathbb{R}^{3N} \times \mathbb{S}^N} \vert \psi_k \vert^2 d\tau = \int_{\mathbb{R}^3} \rho \, d\mathbf{r} =N, \\
                &N\int_{\mathbb{R}^{3N} \times \mathbb{S}^N} |g(\mathbf r_l)\psi_k|^2 d\tau = \int_{\mathbb{R}^3} g^2 \, \rho \, d\mathbf{r} <+\infty,
        \end{align}
        which means that $\{\psi_k\}$ is bounded in $\mathcal H_0^1$. By the Banach-Alaoglu theorem $\psi_k\rightharpoonup \psi$ weakly in $\mathcal H_0^1$ for some $\psi\in\mathcal H_0^1$. The argument in the proof of Theorem 3.3 in Lieb's work~\cite{LIEB_IJQC24_243} shows that $\psi_k\to\psi$ in $L^2$-norm and $\psi\mapsto\rho$. We may then conclude that $\psi\in \WMSobolev{\mathbf{0}}$, because
        \begin{equation}
        N\int_{\mathbb{R}^{3N} \times \mathbb{S}^N} g^2|\psi|^2 \, d\tau = \int_{\mathbb{R}^3} g^2 \rho_{\psi} \, d\mathbf{r} = \int_{\mathbb{R}^3} g^2\rho \, d\mathbf{r} <+\infty.
        \end{equation}
        Furthermore, since $\psi_k\rightharpoonup\psi$ in $\mathcal
        H_0^1$ and the expectation value of $\op{H}_0$ is weakly l.s.c., it holds that
        \begin{equation}
            \label{eqLIMkFL}
        \Fldft(\rho,\mathbf p,\mathbf{L}_{\mathbf{G}}) = \lim_{k\to\infty} \bra{\psi_k} \op{H}_0 \ket{\psi_k} \geq \bra{\psi} \op{H}_0 \ket{\psi}.
        \end{equation}        
        If we can  prove $\jmvecix{k} \to \jmvecix{\psi}$ as well as $r_l (\jmvecix{k})_{l'} \to r_l(\jmvec{\psi})_{l'}$ in $\mathbf L^1$-norm, we have
        \begin{align}
          \label{eqExpValJintLim}
        &\int \jmvecix{\psi} \, d\mathbf{r} = \lim_{k\to\infty} \int \jmvecix{k} \, d\mathbf{r} = \mathbf{p}, \\
        & \int \mathbf{r}_{\mathbf{G}} \times \jmvecix{\psi} \, d\mathbf{r} = \lim_{k\to\infty} \int \mathbf{r}_{\mathbf{G}} \times \jmvecix{k} \, d\mathbf{r} = \mathbf{L}_{\mathbf{G}},
          \label{eqExpValJintLimX}
        \end{align}
        and the proof would be complete.

Equation~\eqref{eqExpValJintLim} implies a simple rule for changing the reference point from $\mathbf{G}$ to another location in Eq.~\eqref{eqExpValJintLimX}.
Hence, without loss of generality, we simplify the notation by moving the reference point to the origin in what follows, 
writing $\mathbf{r}$ instead of $\mathbf{r}_{\mathbf{G}}$. 
To prove convergence of the current-density sequence, we note that $\phi_k := g\psi_k$ is a bounded sequence in $L^2$. 
Again, by the Banach--Alaoglu theorem, $\phi_k\rightharpoonup \phi$ weakly in $L^2$ for some $\phi\in L^2$. The argument used in the proof of Theorem 3.3 in 
Ref.~\onlinecite{LIEB_IJQC24_243} shows that (for a subsequence) $\phi_k\to \phi$ in $L^2$-norm. 
This argument relies on the fact that weak convergence $\phi_k \rightharpoonup \phi$ is given and it only remains to show 
norm convergence $\|\phi_k\|_{L^2} \to \|\phi\|_{L^2}$ to prove that $\phi_k \to \phi$ in $L^2$. 
Since $\lim_{k\to\infty} \|\phi_k\|_{L^2} \geq \|\phi\|_{L^2}$ is a generic consequence of weak convergence, only the reverse inequality needs to be proven below. 

Let $\varepsilon >0$ be given. Since $g^2\rho\in L^1$, we may choose a characteristic function $\chi$ of some bounded set in $\mathbb R^3$ such that 
        \begin{equation}
        \varepsilon> \int_{\mathbb R^3} g^2\rho(1-\chi) d\mathbf{r} = \int_{\mathbb R^{3N} \times \mathbb{S}^N} |\phi_k|^2\sum_{j=1}^N (1-\chi(\mathbf r_j)) d\tau.
        \end{equation}
        Note that $S:=\Pi_{j=1}^N \chi(\mathbf r_j)$ satisfies $\sum_{j=1}^N (1-\chi(\mathbf r_j)) \geq 1 - S$. Since $S\phi_k \to S\phi$ in the $L^2$-norm (see Theorem 3.3 in Ref.~\onlinecite{LIEB_IJQC24_243}), we obtain
        \begin{equation}
        \varepsilon \geq \lim_k \int_{\mathbb R^{3N} \times \mathbb{S}^N} |\phi_k|^2(1-S) d\tau
        = \lim_k \int_{\mathbb R^{3N}\times \mathbb{S}^N} |\phi_k|^2 d\tau - \int_{\mathbb R^{3N}\times \mathbb{S}^N} S|\phi|^2 d\tau,
        \end{equation}
giving
        \begin{equation}
        \int_{\mathbb R^{3N}\times \mathbb{S}^N} |\phi|^2 d\tau + \varepsilon\geq \int_{\mathbb R^{3N}\times \mathbb{S}^N} S|\phi|^2 d\tau + 
        \varepsilon \geq \lim_k \int_{\mathbb R^{3N}\times \mathbb{S}^N} |\phi_k|^2 d\tau.
        \end{equation}
        
Next, let $\delta >0$ be given and choose $R>0$ such that $\int_{B_R^c} |\phi - g\psi|^2 d\tau \leq \delta^2$, where we leave the spin degrees of freedom implicit.
Let $\mathcal M \subset B_R$ be any measurable set  and note that $\int_{\mathcal M}\phi d\tau = \lim_k \int_{\mathcal M} g\psi_k d\tau = \int_{\mathcal M} g\psi d\tau$, implying
that $\phi = g\psi$ almost everywhere on $B_R$. Choosing $N$ such that $k>N$ implies $\Vert \phi_k-\phi \Vert_{L^2}\leq \varepsilon$, we obtain 
        \begin{equation}
        \Vert \phi_k - g\psi \Vert_{L^2}\leq \Vert \phi_k - \phi \Vert_{L^2}
        + \Vert \phi - g\psi \Vert_{L^2} \leq \delta +\left(\int_{B_R^c}| \phi - g\psi|^2 d\tau \right)^{1/2} \leq 2\delta.   
        \end{equation}

        We now show that, with $l,l'\in \{1,2,3\}$ and a bounded self-adjoint operator $\op{\Omega}$ with operator norm less than 1,
        \begin{equation}
                \label{g2}
           \begin{split}
                \int_{\mathbb R^{3N}\times \mathbb{S}^N} r_l  (\op{\Omega} \psi_k)^* \partial_{l'} \psi_k d\tau  & =
                \int_{\mathbb R^{3N}\times \mathbb{S}^N} r_l ( (\op{\Omega}\psi_k - \op{\Omega}\psi)^* \partial_{l'} \psi_k 
                + (\op{\Omega} \psi)^* \partial_{l'} (\psi_k-\psi) + (\op{\Omega} \psi)^* \partial_{l'} \psi) d\tau 
                       \\
                & \to \int_{\mathbb R^{3N}\times \mathbb{S}^N} r_l  (\op{\Omega} \psi)^* \partial_{l'} \psi d\tau
            \end{split}
        \end{equation}
        as $k\to\infty$. The difference between the indicated limit and the left hand-side vanishes because, using $\Vert \phi_k - g\psi \Vert_{L^2}\to 0$, we have
        \begin{equation}
          \begin{split}
        \Big\vert\int_{\mathbb R^{3N}\times\mathbb{S}^N} r_l (\op{\Omega} \psi_k- \op{\Omega} \psi)^* \partial_{l'} \psi_k d\tau \Big\vert &  \leq \left( \int_{\mathbb R^{3N}\times\mathbb{S}^N} r_l^2 |\op{\Omega} \psi_k- \op{\Omega} \psi|^2 d\tau  \right)^{1/2} \left( \int_{\mathbb R^{3N}\times\mathbb{S}^N} |\partial_{l'} \psi |^2  d\tau \right)^{1/2} \\
        & \leq \Vert \phi_k - g\psi \Vert_{L^2}
        \Vert\nabla \psi \Vert_{L^2} \to 0 
          \end{split}
        \end{equation}
        and furthermore since $r_l \op{\Omega} \psi^*\in L^2$, 
        \begin{equation}
        \int_{\mathbb R^{3N}\times\mathbb{S}^N} r_l (\op{\Omega} \psi)^* \partial_{l'}(\psi_k-\psi) d\tau \to 0.
        \end{equation}
        
        Having established Eq.~\eqref{g2}, we can set $\op{\Omega}$ to the identity operator to conclude that 
        $\int_{\mathbb R^{3}} r_l (\jpvecix{k})_{l'} d\mathbf{r} \to \int_{\mathbb R^{3}} r_l (\jpvecix{\psi})_{l'} d\mathbf{r}$. We can also set $\op{\Omega}$ to be any component of the spin operator to conclude that the spin-current contribution in Eq.~\eqref{eqJmDEF} converges in the same way. Hence, we have $\int_{\mathbb R^{3}} r_l (\jmvecix{k})_{l'} d\mathbf{r} \to \int_{\mathbb R^{3}} r_l (\jmvecix{\psi})_{l'} d\mathbf{r}$. A similar argument shows that $\int_{\mathbb R^{3}} (\jmvecix{k})_{l} d\mathbf{r} \to \int_{\mathbb R^{3}}  (\jmvecix{\psi})_{l} d\mathbf{r}$. 
\end{proof}

An immediate generalization of the above theorem is obtained by noting that the proof remains valid when the electron--electron repulsion operator is scaled by a non-negative parameter $\lambda$. The only parts affected by the scaling are Eqs.~\eqref{eqExpValKinBound} and \eqref{eqLIMkFL}, which still hold for any non-negative scaling.
\begin{theorem}
      \label{thmGenExpVal}
      Consider the modified potential-free Hamiltonian
\begin{equation}
     \op{H}_0^{\lambda} = \op{T} + \lambda \op{W}.
\end{equation}
For all $\lambda \geq 0$ and all triples $(\rho,\mathbf{p} ,\mathbf{L}_{\mathbf{G}}) \in \mathcal{J}_N$, there exists a $\psi_{\lambda} \in \WMSobolev{\mathbf{0}}$ such that $\psi_{\lambda} \mapsto (\rho,\mathbf{p},\mathbf{L}_{\mathbf{G}})$ and 
        \begin{equation}
        \Fldft^{\lambda}(\rho,\mathbf{p},\mathbf{L}_{\mathbf{G}}) := \inf_{\psi' \mapsto \rho,\mathbf{p},\mathbf{L}_{\mathbf{G}}} \bra{\psi'} \op{H}_0^{\lambda} \ket{\psi'} = \bra{\psi_{\lambda}} \op{H}_0^{\lambda} \ket{\psi_{\lambda}}.
         \end{equation}
\end{theorem}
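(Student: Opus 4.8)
The plan is to run the proof of the preceding theorem essentially verbatim, replacing $\op{H}_0$ everywhere by $\op{H}_0^{\lambda} = \op{T} + \lambda\op{W}$, and to re-examine only the two places where the interaction $\op{W}$ --- and hence the parameter $\lambda$ --- actually enters: the a priori kinetic bound \eqref{eqExpValKinBound} and the weak lower-semicontinuity inequality \eqref{eqLIMkFL}. Every other ingredient of that proof is untouched by the scaling: starting from a minimizing sequence $\{\psi_k\} \subset \WMSobolev{\mathbf{0}}$ with $\psi_k \mapsto (\rho,\mathbf{p},\mathbf{L}_{\mathbf{G}})$, the extraction of a weakly convergent subsequence by Banach--Alaoglu, the Lieb-type upgrade to strong $L^2$ convergence and $\psi \mapsto \rho$, the conclusion $\psi \in \WMSobolev{\mathbf{0}}$ from $\int g^2\rho_{\psi}\,d\mathbf{r} = \int g^2\rho\,d\mathbf{r} < \infty$, the strong convergence $g\psi_k \to g\psi$ in $L^2$, and the resulting convergence of the momentum/moment integrals in \eqref{eqExpValJintLim}--\eqref{eqExpValJintLimX} all involve only $\op{T}$, the weight $g$, and the densities $\rho,\jmvec$. (As in the preceding theorem, the triple should be understood to have $\Fldft^{\lambda} < +\infty$: otherwise there is no candidate $\psi_{\lambda} \in \WMSobolev{\mathbf{0}}$ to exhibit, since finite canonical kinetic energy already forces a finite Coulomb expectation by Hardy's inequality.)

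For \eqref{eqExpValKinBound}, the only feature of $\op{W}$ that is used is its nonnegativity. Since $\lambda \geq 0$ gives $\lambda\op{W} \geq 0$, I still have $\op{T} \leq \op{H}_0^{\lambda}$ as quadratic forms, so $\sum_l \int |\nabla_l\psi_k|^2\,d\tau \leq 2\bra{\psi_k}\op{H}_0^{\lambda}\ket{\psi_k} \leq C$ along the minimizing sequence; combined with the scaling-independent normalization and weighted-$L^2$ identities this shows $\{\psi_k\}$ is bounded in $\WMSobolev{\mathbf{0}}$, and Banach--Alaoglu applies as before. The endpoint $\lambda = 0$ is the degenerate case $\op{H}_0^{0} = \op{T}$, in which $\op{W}$ disappears entirely and the statement reduces to pure-kinetic $N$-representability with attainment.

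For \eqref{eqLIMkFL}, I need $\psi \mapsto \bra{\psi}\op{H}_0^{\lambda}\ket{\psi} = \bra{\psi}\op{T}\ket{\psi} + \lambda\bra{\psi}\op{W}\ket{\psi}$ to be lower semicontinuous along the sequence, which by that stage is known to converge weakly in $\WMSobolev{\mathbf{0}}$ and strongly in $L^2$. The kinetic part is weakly lower semicontinuous, being a multiple of the square of a seminorm. For the nonnegative interaction part, strong $L^2$ convergence gives (along a subsequence) $|\psi_k|^2 \to |\psi|^2$ almost everywhere, so Fatou's lemma yields $\bra{\psi}\op{W}\ket{\psi} \leq \liminf_k \bra{\psi_k}\op{W}\ket{\psi_k}$, and this is preserved under multiplication by $\lambda \geq 0$. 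Adding the two contributions (and using that the limit $\lim_k \bra{\psi_k}\op{H}_0^{\lambda}\ket{\psi_k} = \Fldft^{\lambda}$ exists), I obtain $\Fldft^{\lambda}(\rho,\mathbf{p},\mathbf{L}_{\mathbf{G}}) \geq \bra{\psi}\op{H}_0^{\lambda}\ket{\psi}$, that is, \eqref{eqLIMkFL} with $\op{H}_0^{\lambda}$ in place of $\op{H}_0$.

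With these two inequalities re-established, the remainder of the argument goes through unchanged: $\psi \in \WMSobolev{\mathbf{0}}$ and $\psi \mapsto (\rho,\mathbf{p},\mathbf{L}_{\mathbf{G}})$ via \eqref{eqExpValJintLim}--\eqref{eqExpValJintLimX}, whence $\bra{\psi}\op{H}_0^{\lambda}\ket{\psi} \leq \Fldft^{\lambda}(\rho,\mathbf{p},\mathbf{L}_{\mathbf{G}}) \leq \bra{\psi}\op{H}_0^{\lambda}\ket{\psi}$, forcing equality, so $\psi_{\lambda} := \psi$ is the claimed minimizer. I do not expect a genuine obstacle here --- this is why the result is ``immediate'' --- the only thing to keep an eye on is that the sign of $\lambda$ is invoked twice in the same benign way: nonnegativity is precisely what keeps the kinetic a priori bound alive in \eqref{eqExpValKinBound} and what makes the interaction term enter \eqref{eqLIMkFL} with the correct sign, while the semicontinuity of the Coulomb expectation under the relevant mode of convergence is classical.
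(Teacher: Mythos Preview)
Your proposal is correct and follows exactly the approach the paper takes: the paper's proof of this theorem is a one-sentence remark that the proof of the preceding theorem ``remains valid when the electron--electron repulsion operator is scaled by a non-negative parameter $\lambda$'' and that ``the only parts affected by the scaling are Eqs.~\eqref{eqExpValKinBound} and \eqref{eqLIMkFL}, which still hold for any non-negative scaling.'' You have identified precisely those two steps and supplied the (routine) verification the paper omits; your parenthetical about the implicit finiteness hypothesis is also apt, since Theorem~\ref{thmGenExpVal} as stated drops the condition $\Fldft^{\lambda} < +\infty$ that appears in the preceding theorem but clearly still needs it.
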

This form of the theorem is relevant for the adiabatic-connection expression for the exchange--correlation energy~\cite{Langreth1975,Gunnarsson1976,Gunnarsson1977,Langreth1977}. The adiabatic-connection expression is essentially the integral
\begin{equation}
   \Fldft^{\lambda=1}(\rho,\mathbf{p},\mathbf{L}_{\mathbf{G}}) - \Fldft^{\lambda=0}(\rho,\mathbf{p},\mathbf{L}_{\mathbf{G}}) = \int_0^1 \frac{d \Fldft^{\lambda}(\rho,\mathbf{p},\mathbf{L}_{\mathbf{G}})}{d\lambda} d\lambda = \int_0^1 \bra{\psi_{\lambda}} W \ket{\psi_{\lambda}} d\lambda,
\end{equation}
where the last equality relies on the Hellmann--Feynman theorem and {\it the adiabaticity assumption} that $\Fldft^{\lambda}$ is differentiable with respect to $\lambda$---in 
fact, it is sufficient that the left- or right-derivative with respect to $\lambda$ exists. 
This form, combined with approximate Lieb optimization to determine the scalar potential needed to enforce the density constraint, 
has been used to characterize the exchange--correlation functional in standard DFT~\cite{COLONNA_JCP110_2828,FRYDEL_JCP112_5292,SAVIN_JCP115_6827,WU_JCP118_2498,TEALE_JCP130_104111,TEALE_JCP132_164115,TEALE_JCP133_164112,STROMSHEIM_JCP135_194109}.

Furthermore, we introduce the density-matrix functional
\begin{equation}
        \Fldftdm^{\lambda}(\rho,\mathbf{p},\mathbf{L}_{\mathbf{G}}) := \inf_{\Gamma \mapsto \rho,\mathbf{p},\mathbf{L}_{\mathbf{G}}} \trace{ \Gamma \op{H}_0^{\lambda}}.
\end{equation}
Although the above proof does not apply to this functional so that its expectation-valuedness is presently not established, it will be useful in later discussions---see Sec.~\ref{secKS}.

\subsection{Mixed-state $N$-representability}
\label{secLDFTNREP}

In this section, we focus on states with zero spin $\mathbf{S} = \mathbf{0}$, which is sufficient to establish a general $N$-representability result for LDFT. Mixed-state $N$-representability has previously been established very generally for pairs $(\rho,\jpvec)$~\cite{TELLGREN_PRA89_012515}. The $N$-representability problem for $(\rho,\mathbf{p},\mathbf{L}_{\mathbf{G}})$ is therefore solved if one can always construct a current density $\jpvec$ that is compatible with the given triple. This can be achieved, for example, 
by minimization of the current-correction to the von Weiz\"acker energy, $\int |\jpvec|^2/2\rho d\mathbf{r}$, subject the constraints $\mathbf{p} = \int \jpvec d\mathbf{r}$ and $\boldsymbol{\Lambda} = \int \mathbf{r}_{\mathbf{R}} \times \jpvec d\mathbf{r}$. The minimizer is a velocity field of the form
\begin{equation}
   \frac{1}{2} \boldsymbol{\kappa} = \frac{\jpvec}{\rho} = \boldsymbol{\zeta} + \frac{1}{2} \boldsymbol{\nu}  \times \mathbf{r}_{\mathbf{R}},
\end{equation}
where the vector field $\boldsymbol{\kappa}$ (in the notation in Ref.~\cite{TELLGREN_PRA89_012515}) is twice the paramagnetic velocity, and the constants (Lagrange multipliers) $\boldsymbol{\zeta}$ and $\boldsymbol{\nu}$ are chosen so as to satisfy the constraints. The latter multiplier is in fact the vorticity of this particular velocity field. The equations to determine the multipliers are straightforward,
\begin{equation}
  \mathbf{p} = \int \jpvec d\mathbf{r} = N \boldsymbol{\zeta},
\end{equation}
\begin{equation}
  \boldsymbol{\Lambda} = \int \mathbf{r}_{\mathbf{R}}\times \jpvec d\mathbf{r} = \frac{1}{2} \int \rho \mathbf{r}_{\mathbf{R}} \times (\boldsymbol{\nu} \times \mathbf{r}_{\mathbf{R}}) d\mathbf{r} = \frac{1}{2} \int \rho \, \big( |\mathbf{r}_{\mathbf{R}}|^2 \mathbf{I} - \mathbf{r}_{\mathbf{R}} \mathbf{r}_{\mathbf{R}}^{\mathrm{T}}) \boldsymbol{\nu} d\mathbf{r} = \frac{1}{2} \mathbf{Q} \boldsymbol{\nu},
\end{equation}
where the last equality defines the moment-of-inertia tensor, $\mathbf{Q}$.

Inserting our specific $\boldsymbol{\kappa}$ into the general density-matrix construction of Sec.~IV.A of Ref.~\cite{TELLGREN_PRA89_012515}, we obtain
\begin{align}
  P_{\lambda}(\mathbf{r},\mathbf{s}) & = \sqrt{\rho(\mathbf{r}) \, \rho(\mathbf{s})} e^{-\lambda |\mathbf{r}-\mathbf{s}|^2} e^{2i \boldsymbol{\zeta}\cdot(\mathbf{r}-\mathbf{s})}, 
     \\
  Q_{\mu}(\mathbf{r},\mathbf{s}) & = \sqrt{\rho(\mathbf{r}) \, \rho(\mathbf{s})} e^{-\mu |\mathbf{r}-\mathbf{s}|^2} e^{-i\boldsymbol{\nu}\cdot(\mathbf{r}_{\mathbf{R}}\times\mathbf{s}_{\mathbf{R}}) - |\boldsymbol{\nu}\times(\mathbf{r}-\mathbf{s})|^2/16\mu},
       \\
  D_{\lambda\mu}(\mathbf{r},\mathbf{s}) & = \frac{1}{2} P_{\lambda}(\mathbf{r},\mathbf{s}) + \frac{1}{2} Q_{\lambda}(\mathbf{r},\mathbf{s}),
\end{align}
where $\lambda, \mu \geq \tfrac{2 p}{\pi} (\tfrac{1}{4} N \| \rho\|_q )^{2p/3}$ and $p,q > 1$ are H\"older conjugates. Because $\rho \in L^1 \cap L^3$, we may for simplicity fix, for example, $p = q = 2$. Here, we have modified the original form slightly by replacing $\mathbf{r}$ and $\mathbf{s}$ by $\mathbf{r}_{\mathbf{R}}$ and $\mathbf{s}_{\mathbf{R}}$, respectively, where it makes a difference in the exponentials; note that $\mathbf{r}-\mathbf{s} = \mathbf{r}_{\mathbf{R}} - \mathbf{s}_{\mathbf{R}}$.

The constant shift of the coordinates does not affect the results in Ref.~\cite{TELLGREN_PRA89_012515}, to which we refer for proofs that $D_{\lambda\mu}$ is a valid one-particle reduced density matrix (e.g., it is shown that spatial occupations numbers in the interval $[0,2]$). The density matrix $D_{\lambda\mu}$ reproduces $(\rho,\jpvec)$ and therefore also $\mathbf{p}$, $\boldsymbol{\Lambda}$, and $\mathbf{L}_{\mathbf{G}}$. Given our choice of function spaces, this construction thus yields a strong $N$-representability result for LDFT:
\begin{theorem}
  \label{thmMIXEDNREP}
  If $\rho$ is mixed-state $N$-representable, then so is $(\rho,\mathbf{p},\mathbf{L}_{\mathbf{G}})$ for all $(\mathbf{p},\mathbf{L}_{\mathbf{G}}) \in \mathbb{R}^6$.
\end{theorem}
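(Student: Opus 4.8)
The plan is to reduce the statement to the already-established mixed-state $N$-representability of density--current pairs $(\rho,\jpvec)$ from Ref.~\cite{TELLGREN_PRA89_012515}, so that it suffices to produce a \emph{single} paramagnetic current density $\jpvec$ compatible with the prescribed triple, lying in $Y$ and with finite current-corrected von Weizs\"acker energy; the density operator furnished by that reference then reproduces $(\rho,\jpvec)$ and hence $(\rho,\mathbf{p},\mathbf{L}_{\mathbf{G}})$. First I would use Eq.~\eqref{eqLAMBDADEF} to convert the angular constraint: since $\rho$ (hence $N$ and $\boldsymbol{\mu}_{\mathbf{G}}$) and $\mathbf{p}$ are fixed, prescribing $\mathbf{L}_{\mathbf{G}}$ is equivalent to prescribing the intrinsic moment $\boldsymbol{\Lambda}=\mathbf{L}_{\mathbf{G}}-\boldsymbol{\mu}_{\mathbf{G}}\times\mathbf{p}/N=\int\mathbf{r}_{\mathbf{R}}\times\jpvec\,d\mathbf{r}$ about the centre of charge $\mathbf{R}=\mathbf{G}+N^{-1}\boldsymbol{\mu}_{\mathbf{G}}$. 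Because we restrict to $\mathbf{S}=\mathbf{0}$, realizing $\boldsymbol{\Lambda}$ and $\mathbf{p}$ through the orbital current alone already realizes every $(\mathbf{p},\mathbf{L}_{\mathbf{G}})\in\mathbb{R}^6$.

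For the construction I would take the affine velocity field minimizing $\int|\jpvec|^2/2\rho\,d\mathbf{r}$ subject to the two linear constraints, namely $\jpvec=\rho\,(\boldsymbol{\zeta}+\tfrac{1}{2}\boldsymbol{\nu}\times\mathbf{r}_{\mathbf{R}})$ (a rigid drift plus a rigid rotation about $\mathbf{R}$), with the multipliers fixed by $\mathbf{p}=N\boldsymbol{\zeta}$ and $\boldsymbol{\Lambda}=\tfrac{1}{2}\mathbf{Q}\boldsymbol{\nu}$, where $\mathbf{Q}=\int\rho\,(|\mathbf{r}_{\mathbf{R}}|^2\mathbf{I}-\mathbf{r}_{\mathbf{R}}\mathbf{r}_{\mathbf{R}}^{\mathrm{T}})\,d\mathbf{r}$. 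Two points then need checking. The tensor $\mathbf{Q}$ is finite because $g^2\rho\in L^1$ (which is part of $\rho\in X$), and it is strictly positive definite because $\boldsymbol{n}^{\mathrm{T}}\mathbf{Q}\boldsymbol{n}=\int\rho\,|\boldsymbol{n}\times\mathbf{r}_{\mathbf{R}}|^2\,d\mathbf{r}$ vanishes for $\boldsymbol{n}\neq\mathbf{0}$ only if $\rho$ is supported on the line through $\mathbf{R}$ with direction $\boldsymbol{n}$, which is impossible for an $L^1$ density with $\int\rho=N>0$; hence $\boldsymbol{\zeta}=\mathbf{p}/N$ and $\boldsymbol{\nu}=2\mathbf{Q}^{-1}\boldsymbol{\Lambda}$ are well defined. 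Moreover the bound $|\jpvec|\leq\rho\,(|\boldsymbol{\zeta}|+\tfrac{1}{2}|\boldsymbol{\nu}|\,|\mathbf{r}_{\mathbf{R}}|)$ gives both $g\,\jpvec\in\mathbf{L}^1$ and $\int|\jpvec|^2/2\rho\,d\mathbf{r}\leq C\int(1+|\mathbf{r}_{\mathbf{R}}|^2)\,\rho\,d\mathbf{r}<\infty$, again from $g^2\rho\in L^1$. Together with the finiteness of the von Weizs\"acker energy of $\rho$, which is forced by the hypothesis that $\rho$ is mixed-state $N$-representable, this is exactly the input required by the construction of Ref.~\cite{TELLGREN_PRA89_012515} to output a density operator of finite kinetic energy.

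Finally I would invoke the explicit one-particle reduced density-matrix construction of Sec.~IV.A of Ref.~\cite{TELLGREN_PRA89_012515}, specialized to $\boldsymbol{\kappa}=2\boldsymbol{\zeta}+\boldsymbol{\nu}\times\mathbf{r}_{\mathbf{R}}$, with the kernels $P_{\lambda}$, $Q_{\mu}$, and $D_{\lambda\mu}$ displayed above and the Gaussian parameters chosen above the stated threshold, which is finite since $\rho\in L^1\cap L^3$. The only modification relative to Ref.~\cite{TELLGREN_PRA89_012515} is the constant coordinate shift $\mathbf{r}\mapsto\mathbf{r}_{\mathbf{R}}$ inside the exponentials, which is immaterial to every estimate there because $\mathbf{r}-\mathbf{s}=\mathbf{r}_{\mathbf{R}}-\mathbf{s}_{\mathbf{R}}$; that reference shows $D_{\lambda\mu}$ has spatial occupation numbers in $[0,2]$, hence is ensemble $N$-representable, so it comes from some $\Gamma\in\DMset$, and by construction $\Gamma\mapsto(\rho,\jpvec)$ and therefore $\Gamma\mapsto(\rho,\mathbf{p},\mathbf{L}_{\mathbf{G}})$. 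The main obstacle is not any single hard estimate but the bookkeeping needed to confirm that the cited construction and its $N$-representability proof go through verbatim after recentering at $\mathbf{R}$ and inside the weighted spaces $X$ and $Y$; once the invertibility of $\mathbf{Q}$ and the weight-function bounds are in place, the remainder is routine.
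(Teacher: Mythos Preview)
Your proposal is correct and follows essentially the same route as the paper: restrict to $\mathbf{S}=\mathbf{0}$, convert the $\mathbf{L}_{\mathbf{G}}$ constraint to one on $\boldsymbol{\Lambda}$ about the centre of charge, construct the affine velocity field $\jpvec/\rho=\boldsymbol{\zeta}+\tfrac{1}{2}\boldsymbol{\nu}\times\mathbf{r}_{\mathbf{R}}$ with multipliers fixed by $\mathbf{p}=N\boldsymbol{\zeta}$ and $\boldsymbol{\Lambda}=\tfrac{1}{2}\mathbf{Q}\boldsymbol{\nu}$, and then feed this $\boldsymbol{\kappa}$ into the explicit one-particle density-matrix construction of Ref.~\cite{TELLGREN_PRA89_012515} (with the harmless recentering $\mathbf{r}\mapsto\mathbf{r}_{\mathbf{R}}$). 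Your write-up is in fact slightly more careful than the paper's in spelling out the positive-definiteness of $\mathbf{Q}$ and the finiteness checks $g\,\jpvec\in\mathbf{L}^1$ and $\int|\jpvec|^2/2\rho<\infty$, but the argument is the same.
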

The canonical kinetic energy of this explicit construction is given by
\begin{equation}
 \begin{split}
  T & = T_{\text{W}} + \frac{|\mathbf{p}|^2}{N} + 2 \boldsymbol{\Lambda}^{\mathrm{T}} \mathbf{Q}^{-1} \boldsymbol{\Lambda} + \frac{1}{2 \mu} N |\mathbf{Q}^{-1} \boldsymbol{\Lambda}|^2 + \frac{1}{2} (\lambda + \mu) N,
 \end{split}
\end{equation}
where $T_{\text{W}} = \int |\nabla \rho|^2 / 8\rho d\mathbf{r}$ is the von Weizs\"acker kinetic energy, setting an upper limit on the density-matrix functional $\Fldftdm^{\lambda=0}(\rho,\mathbf{p},\mathbf{L}_{\mathbf{G}})$.

\subsection{Hohenberg--Kohn-like results}
\label{secLDFT_HK}

Paramagnetic CDFT admits a HK-like result in the form that $(v,\mathbf{A})$-representable ground-state densities $(\rho,\jmvec)$ determine the ground-state wave function, provided that it is non-degenerate~\cite{VIGNALE_PRL59_2360,CAPELLE_PRB65_113106}. In the degenerate case, we can consider two Hamiltonians $\bar{H}(u_k,\mathbf{A}_k)$, $k=1,2$, with ground-states $\Gamma_k$. If the ground states share the same densities $(\rho_1,\jmvecix{1}) = (\rho_2,\jmvecix{2})$, then $\Gamma_1$ is also a ground state of $\bar{H}(u_2,\mathbf{A}_2)$ and vice versa (see Sec.~III.B in Ref.~\onlinecite{TELLGREN_PRA86_062506}). The literature proofs are given for $\jpvec$ (or $\gspin=0$) but hold also for $\jmvec$ with trivial changes.  In fact, HK-like results are possible whenever potentials are paired linearly with densities in the ground-state energy expression. We term this type of results {\it weak HK-like results}, as they are weaker than the HK theorem in standard DFT, where also the potentials are determined by the ground-state densities.

Below, we state three weak HK-like results specialized to the LDFT setting, all equally valid with ($\gspin=2$) and without ($\gspin=0$) inclusion of the spin-Zeeman term in the Hamiltonian.
\begin{theorem}
  Let two or more potential triples $(u_k,\mathbf{a}_k,\mathbf{B}_k)$, with $k = 1, \ldots, K \geq 2$, be given. Let $\Gamma_k$ denote a pure or mixed ground state of $\bar{H}(u_k,\mathbf{a}_k,\mathbf{B}_k)$. Suppose that all $\Gamma_k$ give rise to the same density triples $(\rho_{\Gamma_k}, \mathbf{p}_{\Gamma_k}, \mathbf{L}_{\mathbf{G}; \Gamma_k}) = (\rho, \mathbf{p}, \mathbf{L}_{\mathbf{G}})$. Then $\Gamma_k$ is also a ground state of $\bar{H}(u_l,\mathbf{a}_l,\mathbf{B}_l)$, for all $k,l$.
\end{theorem}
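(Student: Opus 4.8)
The plan is to treat this as a \emph{weak} Hohenberg--Kohn statement in the sense discussed in Sec.~\ref{secLDFT_HK}: since in the reparametrized (Lieb) formulation the potentials $(u,\mathbf{a},\mathbf{B})$ are paired \emph{linearly} with the density triple $(\rho,\mathbf{p},\mathbf{L}_{\mathbf{G}})$, the standard convexity bookkeeping applies verbatim. The first step is to record the energy splitting: for any admissible state $\Gamma\in\DMset$ whose density triple lies in $X\times\mathbb{R}^6$ (guaranteed, via the Cauchy--Schwarz and H\"older estimates of Sec.~\ref{secCDFTREVIEW} with the weight $g(\mathbf{r})=\sqrt{1+|\mathbf{r}|^2}$, as soon as $\trace{\Gamma\op{H}_0}<\infty$), one has
\begin{equation}
  \trace{\Gamma\,\bar{H}(u,\mathbf{a},\mathbf{B})} = (\rho_\Gamma|u) + \mathbf{a}\cdot\mathbf{p}_\Gamma + \tfrac12\mathbf{B}\cdot\mathbf{L}_{\mathbf{G};\Gamma} + \trace{\Gamma\op{H}_0},
\end{equation}
with every term on the right-hand side finite. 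Since by hypothesis all $\Gamma_k$ share the triple $(\rho,\mathbf{p},\mathbf{L}_{\mathbf{G}})$, the pairing part $(\rho|u)+\mathbf{a}\cdot\mathbf{p}+\tfrac12\mathbf{B}\cdot\mathbf{L}_{\mathbf{G}}$, evaluated at any fixed potential triple, is the same for all $k$.

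Next I would set $L_k := \trace{\Gamma_k\op{H}_0}$ and use the variation principle $\bar{E}(u,\mathbf{a},\mathbf{B})\le\trace{\Gamma\,\bar{H}(u,\mathbf{a},\mathbf{B})}$, valid for every admissible $\Gamma$, with equality precisely when $\Gamma$ is a ground state. Applied to $\Gamma_k$ in its own Hamiltonian this gives
\begin{equation}
  \bar{E}(u_k,\mathbf{a}_k,\mathbf{B}_k) = (\rho|u_k) + \mathbf{a}_k\cdot\mathbf{p} + \tfrac12\mathbf{B}_k\cdot\mathbf{L}_{\mathbf{G}} + L_k,
\end{equation}
while using $\Gamma_l$ as a trial state in $\bar{H}(u_k,\mathbf{a}_k,\mathbf{B}_k)$ gives the same expression with $L_k$ replaced by $L_l$ and ``$=$'' weakened to ``$\ge$'', hence $L_k\le L_l$. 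Interchanging $k$ and $l$ yields $L_l\le L_k$, so $L_k=L_l=:L$ for all $k,l$.

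Finally, for arbitrary $k$ and $l$,
\begin{equation}
  \trace{\Gamma_k\,\bar{H}(u_l,\mathbf{a}_l,\mathbf{B}_l)} = (\rho|u_l) + \mathbf{a}_l\cdot\mathbf{p} + \tfrac12\mathbf{B}_l\cdot\mathbf{L}_{\mathbf{G}} + L_k = (\rho|u_l) + \mathbf{a}_l\cdot\mathbf{p} + \tfrac12\mathbf{B}_l\cdot\mathbf{L}_{\mathbf{G}} + L = \bar{E}(u_l,\mathbf{a}_l,\mathbf{B}_l),
\end{equation}
so $\Gamma_k$ attains the infimum defining $\bar{E}(u_l,\mathbf{a}_l,\mathbf{B}_l)$ and is therefore a ground state of $\bar{H}(u_l,\mathbf{a}_l,\mathbf{B}_l)$. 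The general-$K$ statement follows immediately from this pairwise one, and the same chain of equalities works unchanged for pure states $\psi_k$, so the pure and mixed versions are covered simultaneously. Since the argument does not touch the spin-Zeeman term explicitly, it holds for $\gspin=0$ and $\gspin=2$ alike.

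\textbf{Anticipated obstacle.} The only point requiring care is the legitimacy of the linear energy splitting, i.e.\ that each of the three pairing terms and $\trace{\Gamma_k\op{H}_0}$ is separately finite for the states in question; this is exactly what the weighted Sobolev space $\WMSobolev{\mathbf{0}}$ and the weight $g$ were introduced to secure, and it is already implicit in the hypothesis that the $\Gamma_k$ ``give rise to'' triples in $X\times\mathbb{R}^6$. One should also note that the statement tacitly assumes each $\bar{H}(u_l,\mathbf{a}_l,\mathbf{B}_l)$ is bounded below with an attained minimiser (boundedness below can fail, cf.\ the one-electron example with $u=-1/r$), but under the stated hypothesis this is automatic. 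Everything beyond this is routine.
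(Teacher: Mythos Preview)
Your proof is correct and follows essentially the same approach as the paper: write the energy as the linear pairing $(\rho|u)+\mathbf{a}\cdot\mathbf{p}+\tfrac12\mathbf{B}\cdot\mathbf{L}_{\mathbf{G}}$ plus the universal piece $\trace{\Gamma\op{H}_0}$, then use the Rayleigh--Ritz variation principle in both directions. The paper packages the two inequalities into the single symmetric identity $\trace{\Gamma_k\bar{H}_l}+\trace{\Gamma_l\bar{H}_k}=\trace{\Gamma_k\bar{H}_k}+\trace{\Gamma_l\bar{H}_l}$ and concludes from there, whereas you first isolate $L_k=\trace{\Gamma_k\op{H}_0}$ and show $L_k=L_l$; these are the same argument in slightly different dress.
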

\begin{proof}
The energy that the $l$th Hamiltonian assigns to the $k$th state can be written as
 \begin{equation}
   \trace{\Gamma_k \bar{H}(u_l,\mathbf{a}_l,\mathbf{B}_l) } = (u_l|\rho) + \mathbf{a}_l\cdot\mathbf{p} + \frac{1}{2} \mathbf{B}_l\cdot\mathbf{L}_{\mathbf{G}} + \trace{\Gamma_k \op{H}_0 }.
 \end{equation}
Adding the same energy expression with the indices interchanged, we obtain 
 \begin{equation}
   \trace{\Gamma_k \bar{H}(u_l,\mathbf{a}_l,\mathbf{B}_l) } + \trace{\Gamma_l \bar{H}(u_k,\mathbf{a}_k,\mathbf{B}_k) } = \trace{\Gamma_k \bar{H}(u_k,\mathbf{a}_k,\mathbf{B}_k) } + \trace{\Gamma_l \bar{H}(u_l,\mathbf{a}_l,\mathbf{B}_l) }.
 \end{equation}
Together with the variation principle $\trace{\Gamma_k \bar{H}(u_l,\mathbf{a}_l,\mathbf{B}_l) } \geq \trace{\Gamma_l \bar{H}(u_l,\mathbf{a}_l,\mathbf{B}_l) }$, the result now follows.
\end{proof}

The above result can be generalized slightly by taking convex combinations of density matrices and potentials. The set of potential triples $\{(u,\mathbf{a},\mathbf{B})\}$ that give rise to a given density triple is in fact a convex set. Moreover, each such potential triple is a subgradient of the functional $\bar{F}$ at $(\rho,\mathbf{p},\mathbf{L}_{\mathbf{G}})$.

Under the additional assumption that the ground state is non-degenerate, we obtain a HK-like result in the more familiar form that ground-state densities $(\rho, \mathbf{p}, \mathbf{L}_{\mathbf{G}})$ determine the ground-state wave function.
\begin{theorem}
  Let two or more potential triples $(u_k,\mathbf{a}_k,\mathbf{B}_k)$ with $k = 1, \ldots, K \geq 2$ be given. Suppose that each Hamiltonian $\bar{H}(u_k,\mathbf{a}_k,\mathbf{B}_k)$ has a non-degenerate ground state $\psi_k$ that gives rise to the same density triple $(\rho_{\psi_k}, \mathbf{p}_{\psi_k}, \mathbf{L}_{\mathbf{G}; \psi_k}) = (\rho, \mathbf{p}, \mathbf{L}_{\mathbf{G}})$. Then $\psi_k$ and $\psi_l$ are equal up to a global phase for all $k,l$.
\end{theorem}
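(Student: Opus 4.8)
The plan is to obtain this statement as an immediate consequence of the preceding weak HK-like theorem together with the non-degeneracy assumption. First, for each $k$ I would set $\Gamma_k = \ket{\psi_k}\bra{\psi_k}$, a pure ground state of $\bar{H}(u_k,\mathbf{a}_k,\mathbf{B}_k)$ that, by hypothesis, gives rise to the common density triple $(\rho,\mathbf{p},\mathbf{L}_{\mathbf{G}})$. The hypotheses of the previous theorem are thus met, and it yields that $\Gamma_k$ is also a ground state of $\bar{H}(u_l,\mathbf{a}_l,\mathbf{B}_l)$ for every pair $k,l$.

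Next I would invoke non-degeneracy of $\bar{H}(u_l,\mathbf{a}_l,\mathbf{B}_l)$: its ground-state eigenspace is one-dimensional, spanned by $\psi_l$, so the only density operator supported on that eigenspace---hence the only ground state, pure or mixed---is $\ket{\psi_l}\bra{\psi_l}$. Combining with the previous step gives $\ket{\psi_k}\bra{\psi_k} = \ket{\psi_l}\bra{\psi_l}$, and two normalized vectors with the same rank-one orthogonal projector coincide up to a global phase, so $\psi_k = e^{i\theta_{kl}}\psi_l$ for some $\theta_{kl}\in\mathbb{R}$.

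Equivalently, one may argue directly without quoting the earlier theorem: were $\psi_k$ and $\psi_l$ not equal up to a phase, non-degeneracy would give the \emph{strict} inequality $\bra{\psi_l}\bar{H}(u_l,\mathbf{a}_l,\mathbf{B}_l)\ket{\psi_l} < \bra{\psi_k}\bar{H}(u_l,\mathbf{a}_l,\mathbf{B}_l)\ket{\psi_k}$, and likewise with $k,l$ interchanged; expanding each expectation value through $\bra{\psi}\bar{H}(u,\mathbf{a},\mathbf{B})\ket{\psi} = (\rho|u) + \mathbf{a}\cdot\mathbf{p} + \tfrac{1}{2}\mathbf{B}\cdot\mathbf{L}_{\mathbf{G}} + \bra{\psi}\op{H}_0\ket{\psi}$ and adding the two strict inequalities cancels the density-dependent pairing terms, producing $\bra{\psi_k}\op{H}_0\ket{\psi_k} + \bra{\psi_l}\op{H}_0\ket{\psi_l} < \bra{\psi_k}\op{H}_0\ket{\psi_k} + \bra{\psi_l}\op{H}_0\ket{\psi_l}$, a contradiction.

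I do not expect a serious obstacle; the only point requiring care is checking that the hypotheses of the preceding theorem---or, in the direct argument, the term-by-term splitting of the energy---genuinely apply. This needs each $\psi_k$ to lie in $\WMSobolev{\mathbf{0}}$ with density triple in $X\times\mathbb{R}^6$ and each potential triple in $X^*\times\ALinSpace$, so that $(\rho|u_l)$, $\mathbf{a}_l\cdot\mathbf{p}$, $\tfrac{1}{2}\mathbf{B}_l\cdot\mathbf{L}_{\mathbf{G}}$ and $\bra{\psi_k}\op{H}_0\ket{\psi_k}$ are each finite and the energy decomposes as above; this is ensured by the standing choice of function spaces and the assumption that the $\psi_k$ are admissible ground states.
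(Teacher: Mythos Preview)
Your proposal is correct and matches the paper's approach: the paper states this theorem immediately after the preceding weak HK-like result with the remark that it follows ``under the additional assumption that the ground state is non-degenerate,'' without spelling out a separate proof. Your first argument---apply the preceding theorem to $\Gamma_k=\ket{\psi_k}\bra{\psi_k}$ and then use non-degeneracy to force $\ket{\psi_k}\bra{\psi_k}=\ket{\psi_l}\bra{\psi_l}$---is exactly what the paper has in mind, and your alternative direct contradiction argument is the standard Hohenberg--Kohn variant that is equivalent.
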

A corollary is that non-degenerate ground-state densities $(\rho, \mathbf{p}, \mathbf{L}_{\mathbf{G}})$ also determine the paramagnetic current density $\jpvec$ as well as the magnetization current density $\jmvec$.
We now recall that the eigenstates of the Hamiltonian $H(v,\mathbf{A})$ have divergence-free physical current densities, $\nabla\cdot\mathbf{j} = \nabla\cdot(\jpvec+\rho\mathbf{A}) = 0$ and a vanishing physical momentum. If two or more potential triples $(u_k,\mathbf{a}_k,\mathbf{B}_k)$ give rise to the same non-degenerate ground-state density triple $(\rho, \mathbf{p}, \mathbf{L}_{\mathbf{G}})$, and consequently the same ground-state wave function $\psi$ and paramagnetic current density $\jpvec$, then 
\begin{equation}
  \nabla\cdot(\jpvec + \rho (\mathbf{a}_k + \tfrac{1}{2} \mathbf{B}_k\times\mathbf{r}_{\mathbf{G}})) = 0.
\end{equation}
Hence, for all $k,l$, we have
\begin{equation}
  ((\mathbf{a}_k-\mathbf{a}_l) + \tfrac{1}{2} (\mathbf{B}_k-\mathbf{B}_l)\times\mathbf{r}_{\mathbf{G}})\cdot\nabla\rho = 0.
\end{equation}
The first term can be eliminated by exploiting the fact that $\mathbf{p}_k = \mathbf{p}_l = \mathbf{p}$ and the relation
\begin{equation}
   \boldsymbol{\pi}_k = \mathbf{p}_k + N \mathbf{a}_k + \frac{1}{2} \mathbf{B}_k\times\boldsymbol{\mu}_{\mathbf{G}} = 0.
\end{equation}
Inserting $\mathbf{a}_k = -N^{-1} \mathbf{p}_k - \frac{1}{2} N^{-1} \mathbf{B}_k\times\boldsymbol{\mu}_{\mathbf{G}}$, we obtain
\begin{equation}
  \tfrac{1}{2} \big( (\mathbf{B}_k-\mathbf{B}_l)\times(\mathbf{r}-\mathbf{R}) \big) \cdot\nabla\rho = 0,
\end{equation}
where $\mathbf{R}$ is the center of charge. If $\mathbf{B}_k\neq\mathbf{B}_l$, it follows that the ground-state density $\rho$ must be cylindrically symmetric about an axis passing through $\mathbf{R}$ and directed along $\mathbf{B}_k-\mathbf{B}_l$.

Hence, for non-cylindrically symmetric densities $\rho$, LDFT does admit a {\it strong HK-like result} in the form that the external potentials are determined:
\begin{theorem}
  Let two or more potential triples $(u_k,\mathbf{a}_k,\mathbf{B}_k)$ with $k = 1, \ldots, K \geq 2$ be given. Suppose that each Hamiltonian $\bar{H}(u_k,\mathbf{a}_k,\mathbf{B}_k)$ has a non-degenerate ground state $\psi_k$ with the same density $(\rho_{\psi_k}, \mathbf{p}_{\psi_k}, \mathbf{L}_{\mathbf{G}; \psi_k}) = (\rho, \mathbf{p}, \mathbf{L}_{\mathbf{G}})$. Suppose further that $\rho$ is not cylindrically symmetric about any axis. Then $\psi_k$ and $\psi_l$ are equal up to a global phase, $u_k = u_l + \text{const}$ and $(\mathbf{a}_k,\mathbf{B}_k) = (\mathbf{a}_l,\mathbf{B}_l)$, for all $k,l$.
\end{theorem}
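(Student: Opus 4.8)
The plan is to assemble results already established in this section. First I would invoke the preceding non-degenerate weak HK-like result: since all the $\psi_k$ are non-degenerate ground states sharing the density triple $(\rho,\mathbf{p},\mathbf{L}_{\mathbf{G}})$, they agree up to a global phase, so there is effectively a single ground-state wave function $\psi$ and a single paramagnetic current density $\jpvec$. This $\psi$ is simultaneously an eigenstate of every $\bar{H}(u_k,\mathbf{a}_k,\mathbf{B}_k) = H(u_k - \tfrac{1}{2}A_k^2,\mathbf{A}_k)$, where $\mathbf{A}_k = \mathbf{a}_k + \tfrac{1}{2}\mathbf{B}_k\times\mathbf{r}_{\mathbf{G}}$.

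Next I would use the fact, recalled just above the theorem, that eigenstates of $H(v,\mathbf{A})$ have divergence-free physical current and vanishing physical momentum. Divergence-freeness gives $\nabla\cdot(\jpvec + \rho\,\mathbf{A}_k) = 0$ for each $k$; subtracting the equations for indices $k$ and $l$ and noting that $\mathbf{A}_k - \mathbf{A}_l = (\mathbf{a}_k - \mathbf{a}_l) + \tfrac{1}{2}(\mathbf{B}_k - \mathbf{B}_l)\times\mathbf{r}_{\mathbf{G}}$ is itself divergence-free yields $(\mathbf{A}_k - \mathbf{A}_l)\cdot\nabla\rho = 0$. The vanishing-momentum condition $\boldsymbol{\pi}_k = \mathbf{p} + N\mathbf{a}_k + \tfrac{1}{2}\mathbf{B}_k\times\boldsymbol{\mu}_{\mathbf{G}} = \mathbf{0}$ (with $\mathbf{p}$ and $\boldsymbol{\mu}_{\mathbf{G}}$ common to all $k$) determines $\mathbf{a}_k = -N^{-1}\mathbf{p} - \tfrac{1}{2}N^{-1}\mathbf{B}_k\times\boldsymbol{\mu}_{\mathbf{G}}$; inserting the resulting difference and using $\mathbf{R} = \mathbf{G} + N^{-1}\boldsymbol{\mu}_{\mathbf{G}}$ collapses the relation to $\big((\mathbf{B}_k - \mathbf{B}_l)\times(\mathbf{r}-\mathbf{R})\big)\cdot\nabla\rho = 0$, which is exactly the displayed identity preceding the theorem.

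I would then argue by contradiction. Suppose $\mathbf{B}_k \neq \mathbf{B}_l$ for some pair and set $\mathbf{b} := \mathbf{B}_k - \mathbf{B}_l \neq \mathbf{0}$. The identity $(\mathbf{b}\times(\mathbf{r}-\mathbf{R}))\cdot\nabla\rho = 0$ states that $\rho$ is annihilated by the infinitesimal generator of rotations about the axis through $\mathbf{R}$ directed along $\mathbf{b}$. Integrating this first-order condition---after mollification, since $\rho$ is only assumed to lie in $L^1\cap L^3$---shows $\rho\circ R_t = \rho$ for the entire one-parameter rotation group $R_t$ about that axis, i.e.\ $\rho$ is cylindrically symmetric about it, contradicting the hypothesis. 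Hence $\mathbf{B}_k = \mathbf{B}_l$ for all $k,l$; the formula for $\mathbf{a}_k$ then forces $\mathbf{a}_k = \mathbf{a}_l$, so all $\mathbf{A}_k$ coincide with a common $\mathbf{A}$.

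Finally, with a common wave function $\psi$ and a common vector potential $\mathbf{A}$, the Hamiltonians $H(u_k - \tfrac{1}{2}A^2,\mathbf{A})$ and $H(u_l - \tfrac{1}{2}A^2,\mathbf{A})$ differ only in the one-body scalar term. Subtracting their eigenvalue equations yields $\sum_i (u_k - u_l)(\mathbf{r}_i)\,\psi = (E_k - E_l)\,\psi$; by unique continuation for Schr\"odinger operators $\psi \neq 0$ almost everywhere, so $\sum_i (u_k - u_l)(\mathbf{r}_i) = E_k - E_l$ almost everywhere, and freezing all but one electronic coordinate shows $u_k - u_l$ is constant. This last step is just the standard Hohenberg--Kohn argument restricted to the scalar sector, and combined with $\psi_k = \psi_l$ up to a phase and $(\mathbf{a}_k,\mathbf{B}_k) = (\mathbf{a}_l,\mathbf{B}_l)$ completes the proof. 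I expect the main obstacle to be the passage from the weak first-order equation $(\mathbf{b}\times(\mathbf{r}-\mathbf{R}))\cdot\nabla\rho = 0$ to genuine cylindrical symmetry for a density that is only $L^1\cap L^3$, together with the (standard but nontrivial) input that a ground state cannot vanish on a set of positive measure; both points are best dispatched by explicit citation rather than re-derivation.
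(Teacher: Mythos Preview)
Your proposal is correct and follows essentially the same route as the paper: the argument there is precisely the discussion immediately preceding the theorem statement, combining the earlier non-degenerate weak HK result with the divergence-free physical current and vanishing physical momentum to reduce to $\big((\mathbf{B}_k-\mathbf{B}_l)\times(\mathbf{r}-\mathbf{R})\big)\cdot\nabla\rho = 0$, and then reading off cylindrical symmetry unless $\mathbf{B}_k=\mathbf{B}_l$. You are simply more explicit about two technical points the paper leaves tacit---the passage from the first-order rotational invariance condition to genuine cylindrical symmetry at the stated regularity, and the unique-continuation input behind the scalar Hohenberg--Kohn step---both of which are appropriate to flag but do not constitute a different strategy.
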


To establish the HK-like result that $(\rho,\mathbf{p},\mathbf{L}_{\mathbf{G}})$ determines the external potentials, we exploited the fact that this triple 
determines $\jpvec$ (and the ground state $\psi$) under a non-degeneracy assumption. Intuitively, we expect $(\rho,\boldsymbol{\Lambda})$ in a similar fashion 
to determine the gauge-shift invariant components of the potentials---that is, $\mathbf{B}$ and $u$ to within a constant, leaving $\mathbf{a}$ undetermined. To investigate this, we must deal with the fact that a gauge shift in the magnetic vector potential results in a non-trivial change of the phase of the ground-state wave function, 
thereby affecting the paramagnetic current density.

\begin{theorem}
  Let two potential triples $(u_k,\mathbf{a}_k,\mathbf{B}_k)$ with $k = 1, 2$ be given. Suppose that each Hamiltonian $\bar{H}(u_k,\mathbf{a}_k,\mathbf{B}_k)$ has a ground state $\psi_k$ that gives rise to the same density pair $(\rho_{\psi_k}, \boldsymbol{\Lambda}_{\psi_k}) = (\rho, \boldsymbol{\Lambda})$. Then, for a constant $\mathbf{c}$ (determined below), the state $\psi_1' = \psi_2 e^{-i\mathbf{c}\cdot\sum_j \mathbf{r}_j}$ is a ground state of  $\bar{H}(u_1,\mathbf{a}_1,\mathbf{B}_1)$ and $\psi_2' = \psi_1 e^{i\mathbf{c}\cdot \sum_j \mathbf{r}_j}$ is a ground state of $\bar{H}(u_2,\mathbf{a}_2,\mathbf{B}_2)$.
\end{theorem}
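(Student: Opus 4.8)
The plan is to absorb the mismatch between the two ground states into a single gauge shift and then invoke the weak HK-like result established above (the $K = 2$ case of the first theorem of this section), which applies to two ground states sharing the \emph{full} density triple $(\rho,\mathbf{p},\mathbf{L}_{\mathbf{G}})$. This route is natural because the intrinsic moment $\boldsymbol{\Lambda}$ is precisely the gauge-shift-invariant combination of $\mathbf{L}_{\mathbf{G}}$ and $\mathbf{p}$: the only freedom in the pair $(\mathbf{p},\mathbf{L}_{\mathbf{G}})$ not already fixed by specifying $(\rho,\boldsymbol{\Lambda})$ is exactly a gauge shift, so a suitably gauge-shifted copy of $\psi_2$ ought to agree with $\psi_1$ on the complete triple.

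First I would write $\mathbf{p}_k := \mathbf{p}_{\psi_k}$ and $\mathbf{L}_{\mathbf{G};k} := \mathbf{L}_{\mathbf{G};\psi_k}$; since $\rho_{\psi_1} = \rho_{\psi_2} = \rho$, the electric dipole $\boldsymbol{\mu}_{\mathbf{G}}$ and the electron number $N$ are common to both states. I would then choose the constant $\mathbf{c}$ so that the gauge-shifted state $\psi_1' := \psi_2\, e^{-i\mathbf{c}\cdot\sum_j\mathbf{r}_j}$ carries the same linear momentum as $\psi_1$; by the transformation rule $\mathbf{p}\mapsto\mathbf{p} - N\mathbf{a}'$ this forces $\mathbf{c} = (\mathbf{p}_1 - \mathbf{p}_2)/N$. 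Next, using the listed gauge-shift rules, I would check that $\psi_1' \in \WMSobolev{\mathbf{0}}$ (multiplication by the unimodular factor preserves $g\psi \in L^2$ and changes $\nabla_l\psi$ only by the $L^2$ term $-i\mathbf{c}\,\psi$), that $\psi_1'$ is a ground state of $\bar{H}(u_2,\mathbf{a}_2 - \mathbf{c},\mathbf{B}_2)$, and that it reproduces $\rho_{\psi_1'} = \rho$ and $\mathbf{p}_{\psi_1'} = \mathbf{p}_1$. The crucial step is that the magnetic moment matches as well: one has $\mathbf{L}_{\mathbf{G};\psi_1'} = \mathbf{L}_{\mathbf{G};2} + \boldsymbol{\mu}_{\mathbf{G}}\times\mathbf{c}$, while the hypothesis $\boldsymbol{\Lambda}_{\psi_1} = \boldsymbol{\Lambda}_{\psi_2}$ combined with $\boldsymbol{\Lambda} = \mathbf{L}_{\mathbf{G}} - \boldsymbol{\mu}_{\mathbf{G}}\times\mathbf{p}/N$ yields $\mathbf{L}_{\mathbf{G};1} - \mathbf{L}_{\mathbf{G};2} = \boldsymbol{\mu}_{\mathbf{G}}\times(\mathbf{p}_1 - \mathbf{p}_2)/N = \boldsymbol{\mu}_{\mathbf{G}}\times\mathbf{c}$, so indeed $\mathbf{L}_{\mathbf{G};\psi_1'} = \mathbf{L}_{\mathbf{G};1}$. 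Thus $\psi_1'$ and $\psi_1$ share the full triple $(\rho,\mathbf{p}_1,\mathbf{L}_{\mathbf{G};1})$.

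Finally, $\psi_1$ is a ground state of $\bar{H}(u_1,\mathbf{a}_1,\mathbf{B}_1)$ and $\psi_1'$ is a ground state of $\bar{H}(u_2,\mathbf{a}_2 - \mathbf{c},\mathbf{B}_2)$ with the same density triple, so the weak HK-like result gives that $\psi_1'$ is also a ground state of $\bar{H}(u_1,\mathbf{a}_1,\mathbf{B}_1)$ and that $\psi_1$ is also a ground state of $\bar{H}(u_2,\mathbf{a}_2 - \mathbf{c},\mathbf{B}_2)$. The first statement is the claimed assertion for $\psi_1'$; for $\psi_2'$ I would apply the gauge shift with $\mathbf{a}' = \mathbf{c}$ to the second statement, which turns $\bar{H}(u_2,\mathbf{a}_2 - \mathbf{c},\mathbf{B}_2)$ into $\bar{H}(u_2,\mathbf{a}_2,\mathbf{B}_2)$ and $\psi_1$ into $\psi_1\, e^{i\mathbf{c}\cdot\sum_j\mathbf{r}_j} = \psi_2'$. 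I do not anticipate any deep difficulty: the one substantive step is the cancellation in the previous paragraph, which is exactly the statement that $\boldsymbol{\Lambda}$ isolates the gauge-shift-invariant part of $\mathbf{L}_{\mathbf{G}}$; the remaining points---that $\psi_1' \in \WMSobolev{\mathbf{0}}$ and that the ground-state property is carried along by the unitary gauge map---are routine.
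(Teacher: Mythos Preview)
Your approach is correct and is a genuinely different route from the paper's. The paper proceeds by a direct variational computation: it expands $E'_1 := \bra{\psi'_1}\bar{H}(u_1,\mathbf{a}_1,\mathbf{B}_1)\ket{\psi'_1}$ and the index-swapped $E'_2$, uses the identity $\mathbf{L}_{k;\mathbf{G}} = \boldsymbol{\Lambda} + (\mathbf{R}-\mathbf{G})\times\mathbf{p}_k$ to isolate $\boldsymbol{\Lambda}$, adds the two expressions, and chooses $\mathbf{c} = \mathbf{A}_1(\mathbf{R})-\mathbf{A}_2(\mathbf{R})$ so that the cross terms cancel and $E'_1+E'_2 = E_1+E_2$; the variation principle then forces equality term by term. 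Your argument instead absorbs the discrepancy $\mathbf{p}_1\neq\mathbf{p}_2$ into a gauge shift, verifies that the shared $\boldsymbol{\Lambda}$ forces the full triple $(\rho,\mathbf{p},\mathbf{L}_{\mathbf{G}})$ to match, and invokes the already-proven weak HK result. This is a clean reduction that makes transparent \emph{why} the theorem holds: $(\rho,\boldsymbol{\Lambda})$ is exactly the gauge-shift-invariant content of $(\rho,\mathbf{p},\mathbf{L}_{\mathbf{G}})$. The paper's computation, in turn, has the advantage of being self-contained and of delivering $\mathbf{c}$ directly in terms of the external potentials. The two values of $\mathbf{c}$ agree once one uses the ground-state condition $\boldsymbol{\pi}_k=\mathbf{p}_k+N\mathbf{A}_k(\mathbf{R})=0$. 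One bookkeeping caution: be careful with the sign when identifying which shifted Hamiltonian $\psi'_1$ is a ground state of (it should be $\bar{H}(u_2,\mathbf{a}_2+\mathbf{c},\mathbf{B}_2)$ under the paper's identity $\bra{\psi'_1}\bar{H}(u,\mathbf{a},\mathbf{B})\ket{\psi'_1}=\bra{\psi_2}\bar{H}(u,\mathbf{a}-\mathbf{c},\mathbf{B})\ket{\psi_2}$); the final gauge-back step then lands on $\bar{H}(u_2,\mathbf{a}_2,\mathbf{B}_2)$ as required.
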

\begin{proof}
Inserting the identity $\mathbf{L}_{k;\mathbf{G}} = \boldsymbol{\Lambda} + (\mathbf{R}-\mathbf{G})\times\mathbf{p}_k$ into the orbital Zeeman term, we obtain
\begin{equation}
 \begin{split}
  E'_1 & := \bra{\psi'_1} \bar{H}(u_1,\mathbf{a}_1,\mathbf{B}_1) \ket{\psi'_1} = \bra{\psi_2} \bar{H}(u_1,\mathbf{a}_1-\mathbf{c},\mathbf{B}_1) \ket{\psi_2}
        \\
 & = (u_1|\rho) + (\mathbf{a}_1-\mathbf{c})\cdot\mathbf{p}_2 + \frac{1}{2}\mathbf{B}_1\cdot\mathbf{L}_{2;\mathbf{G}} + \bra{\psi_2} \op{H}_0 \ket{\psi_2}
      \\
    & = (u_1|\rho) + (\mathbf{a}_1 + \tfrac{1}{2} \mathbf{B}_1\times(\mathbf{R}-\mathbf{G}) - \mathbf{c})\cdot\mathbf{p}_2 + \frac{1}{2}\mathbf{B}_1\cdot\boldsymbol{\Lambda} + \bra{\psi_2} \op{H}_0 \ket{\psi_2}.
  \end{split}
\end{equation}
Noting that $\mathbf{A}_k(\mathbf{R}) = \mathbf{a}_k + \tfrac{1}{2} \mathbf{B}_k\times(\mathbf{R}-\mathbf{G})$ and adding and subtracting the terms needed to form $E_2 := \bra{\psi_2} \bar{H}(u_2, \mathbf{a}_2, \mathbf{B}_2)$, we next find
\begin{equation}
 \begin{split}
   E'_1  & = E_2 + (u_1-u_2|\rho) - \mathbf{a}_2\cdot\mathbf{p}_2 - \frac{1}{2} \mathbf{B}_2\cdot \mathbf{L}_{2;\mathbf{G}} + (\mathbf{A}_1(\mathbf{R}) - \mathbf{c})\cdot\mathbf{p}_2 + \frac{1}{2}\mathbf{B}_1\cdot\boldsymbol{\Lambda}
      \\
    & = E_2 + (u_1-u_2|\rho) + (\mathbf{A}_1(\mathbf{R})-\mathbf{A}_2(\mathbf{R})-\mathbf{c})\cdot\mathbf{p}_2 + \frac{1}{2}(\mathbf{B}_1-\mathbf{B}_2)\cdot\boldsymbol{\Lambda}.
 \end{split}
\end{equation}
As in the standard HK proof, we obtain a new equation by interchanging the indices 1 and 2, being careful to also replace occurrences of $\mathbf{c}$ by $-\mathbf{c}$, yielding 
\begin{equation}
 \begin{split}
   E'_2  & := \bra{\psi'_2} \bar{H}(u_2,\mathbf{a}_2,\mathbf{B}_2) \ket{\psi'_2} = \bra{\psi_1} \bar{H}(u_2,\mathbf{a}_2+\mathbf{c},\mathbf{B}_2) \ket{\psi_1}
      \\
    & = E_1 + (u_2-u_1|\rho) + (\mathbf{A}_2(\mathbf{R})-\mathbf{A}_1(\mathbf{R})+\mathbf{c})\cdot\mathbf{p}_1 + \frac{1}{2}(\mathbf{B}_2-\mathbf{B}_1)\cdot\boldsymbol{\Lambda},
 \end{split}
\end{equation}
where $E_1 := \bra{\psi_1} \bar{H}(u_1,\mathbf{a}_1,\mathbf{B}_1) \ket{\psi_1}$. Adding the last two equations and setting $\mathbf{c} = \mathbf{A}_1(\mathbf{R})-\mathbf{A}_2(\mathbf{R})$, we obtain
\begin{equation}
    E'_1 + E'_2 = E_1 + E_2.
\end{equation}
By the variation principle, we also have $E'_1 \geq E_1$ and $E'_2 \geq E_2$. The only possibility is $E'_1 = E_1$ and $E'_2 = E_2$. Thus, both $\psi_1$ and $\psi'_1$ must be ground states of $\bar{H}(u_1,\mathbf{a}_1,\mathbf{B}_1)$. Likewise, both $\psi_2$ and $\psi'_2$ must be ground states of $\bar{H}(u_2,\mathbf{a}_2,\mathbf{B}_2)$.
\end{proof}

Under the additional non-degeneracy assumption, it follows that two Hamiltonians $\bar{H}(u_k,\mathbf{a}_k,\mathbf{B}_k)$ with the same ground-state density pair $(\rho,\boldsymbol{\Lambda})$ have ground states related by $\psi_2 = \psi_1 e^{i\mathbf{c}\cdot \sum_j \mathbf{r}_j}$ with  $\mathbf{c} = \mathbf{A}_1(\mathbf{R})-\mathbf{A}_2(\mathbf{R})$. Consequently, the paramagnetic current densities are related as 
\begin{equation}
  \jpvecix{2} = \jpvecix{1} + \rho \mathbf{c}.
\end{equation}
The corresponding physical current densities must be divergence free,
\begin{align}
   \nabla\cdot(\jpvecix{1} + \mathbf{a}_1 + \tfrac{1}{2} \mathbf{B}_1 \times \mathbf{r}_{\mathbf{G}}) & = 0,
         \\
   \nabla\cdot(\jpvecix{2} + \mathbf{a}_2 + \tfrac{1}{2} \mathbf{B}_2 \times \mathbf{r}_{\mathbf{G}}) & = 0.
\end{align}
Subtracting the first of these equations from the second, we obtain
\begin{equation}
   (\mathbf{c} + \mathbf{a}_2 - \mathbf{a}_1 + \tfrac{1}{2} (\mathbf{B}_2-\mathbf{B}_1) \times \mathbf{r}_{\mathbf{G}})\cdot\nabla\rho = (\tfrac{1}{2} (\mathbf{B}_2-\mathbf{B}_1) \times \mathbf{r}_{\mathbf{R}})\cdot\nabla\rho  = 0.
\end{equation}
If $\mathbf{B}_1 \neq \mathbf{B}_2$, it follows that $\rho$ is cylindrically symmetric about an axis passing through the center of charge in the direction $\mathbf{B}_2-\mathbf{B}_1$. Hence, only components of the magnetic field parallel to a cylindrical symmetry axis are undetermined. To summarize, we have proved the following theorem:
\begin{theorem}
  Let two or more potential triples $(u_k,\mathbf{a}_k,\mathbf{B}_k)$ with $k = 1, \ldots, K \geq 2$ be given. Suppose that each Hamiltonian $\bar{H}(u_k,\mathbf{a}_k,\mathbf{B}_k)$ has a non-degenerate ground state $\psi_k$ with the same density triple $(\rho_{\psi_k}, \boldsymbol{\Lambda}_{\psi_k}) = (\rho, \boldsymbol{\Lambda})$. Suppose further that $\rho$ is not cylindrically symmetric about any axis. Then $\psi_k = \psi_l e^{i(\mathbf{A}_k(\mathbf{R})-\mathbf{A}_l(\mathbf{R}))\cdot\sum_j \mathbf{r}_j}$ (to within a global phase) and $\mathbf{B}_k = \mathbf{B}_l$, for all $k,l$.
\end{theorem}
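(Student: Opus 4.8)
The plan is to establish the statement for an arbitrary pair of indices $k=1$, $l=2$, after which the pairwise relations for all $k,l$ follow at once. The essential input is the preceding theorem: without any non-degeneracy hypothesis it already produces the constant $\mathbf{c} = \mathbf{A}_1(\mathbf{R}) - \mathbf{A}_2(\mathbf{R})$ and shows that $\psi_1 e^{i\mathbf{c}\cdot\sum_j\mathbf{r}_j}$ is a ground state of $\bar{H}(u_2,\mathbf{a}_2,\mathbf{B}_2)$ (and, symmetrically, $\psi_2 e^{-i\mathbf{c}\cdot\sum_j\mathbf{r}_j}$ a ground state of $\bar{H}(u_1,\mathbf{a}_1,\mathbf{B}_1)$), using only that the two ground states share the density pair $(\rho,\boldsymbol{\Lambda})$. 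The first step of the present proof is then immediate: since the ground state of $\bar{H}(u_2,\mathbf{a}_2,\mathbf{B}_2)$ is assumed non-degenerate, $\psi_1 e^{i\mathbf{c}\cdot\sum_j\mathbf{r}_j}$ must coincide with $\psi_2$ up to a global phase, i.e.\ $\psi_2 = \psi_1 e^{i\mathbf{c}\cdot\sum_j\mathbf{r}_j}$ (modulo phase), which is the asserted relation between the wave functions. Differentiating this identity inside the definition of the paramagnetic current density — the c-number phase drops out of the density and of the spin part — gives $\jpvecix{2} = \jpvecix{1} + \rho\,\mathbf{c}$.

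The second step turns this into a field constraint. Each $\psi_k$ is the ground state, hence an eigenstate, of the physical Hamiltonian $H(v_k,\mathbf{A}_k)$ with $v_k = u_k - \tfrac{1}{2}A_k^2$ and $\mathbf{A}_k = \mathbf{a}_k + \tfrac{1}{2}\mathbf{B}_k\times\mathbf{r}_{\mathbf{G}}$, so its physical current density $\jpvecix{k} + \rho\,\mathbf{A}_k$ is divergence free. Writing this out for $k=1,2$, subtracting, and inserting $\jpvecix{2} - \jpvecix{1} = \rho\,\mathbf{c}$ with $\mathbf{c} = \mathbf{A}_1(\mathbf{R}) - \mathbf{A}_2(\mathbf{R})$, the gauge shifts $\mathbf{a}_1 - \mathbf{a}_2$ are absorbed and the residual vector field recombines, around the centre of charge $\mathbf{R}$, into the purely rotational $\tfrac{1}{2}(\mathbf{B}_2 - \mathbf{B}_1)\times\mathbf{r}_{\mathbf{R}}$. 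Since that field is divergence free on its own, one is left with the transport identity $(\tfrac{1}{2}(\mathbf{B}_2 - \mathbf{B}_1)\times\mathbf{r}_{\mathbf{R}})\cdot\nabla\rho = 0$.

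The last step is the qualitative one, and it is the only place where I expect real work. If $\mathbf{B}_1 \neq \mathbf{B}_2$, the transport identity says precisely that $\rho$ is annihilated by the infinitesimal generator of rotations about the axis through $\mathbf{R}$ in the direction $\mathbf{B}_2 - \mathbf{B}_1$, and one must integrate this weak first-order equation up to the conclusion that $\rho$ is invariant under the associated one-parameter rotation group, i.e.\ cylindrically symmetric about that axis. The care needed is that $\psi_k \in \WMSobolev{\mathbf{0}}$ only puts $\rho$ in $X$, with $\nabla\rho$ a distribution, so the cleanest argument is to mollify $\rho$ in the angular coordinate about the candidate axis (equivalently, flow along the smooth, volume-preserving rotation field and differentiate) and reduce to the classical statement. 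The hypothesis that $\rho$ is cylindrically symmetric about no axis then yields a contradiction, forcing $\mathbf{B}_1 = \mathbf{B}_2$; with that in hand $\mathbf{A}_1 - \mathbf{A}_2 = \mathbf{a}_1 - \mathbf{a}_2$ is constant, so $\mathbf{c} = \mathbf{a}_1 - \mathbf{a}_2$ and $\psi_2 = \psi_1 e^{i(\mathbf{a}_1 - \mathbf{a}_2)\cdot\sum_j\mathbf{r}_j}$ up to phase. Applying this to every pair $(k,l)$ gives the stated conclusion; the symmetry hypothesis is genuinely needed, since for the cylindrically symmetric densities arising in the $N$-representability construction of Sec.~\ref{secLDFTNREP} the component of $\mathbf{B}$ along the symmetry axis cannot be recovered.
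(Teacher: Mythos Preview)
Your proposal is correct and follows essentially the same route as the paper: invoke the preceding theorem to obtain $\mathbf{c}=\mathbf{A}_1(\mathbf{R})-\mathbf{A}_2(\mathbf{R})$ and the gauge-shifted ground-state relation, use non-degeneracy to identify the wave functions, read off $\jpvecix{2}=\jpvecix{1}+\rho\mathbf{c}$, subtract the two divergence-free conditions on the physical currents to get $(\tfrac{1}{2}(\mathbf{B}_2-\mathbf{B}_1)\times\mathbf{r}_{\mathbf{R}})\cdot\nabla\rho=0$, and conclude $\mathbf{B}_1=\mathbf{B}_2$ from the absence of cylindrical symmetry. Your extra care about integrating the transport identity (mollification in the angular variable) goes beyond what the paper spells out, but the paper simply asserts the cylindrical-symmetry conclusion without addressing regularity, so this is an elaboration rather than a different approach.
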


A brief remark can be made about the case when there is a rotational symmetry axis. From symmetry considerations, we conclude that projection of the rotation axes from $\mathbf{B}_k$ and $\boldsymbol{\Lambda}$ results in (anti-)parallel vectors. Let $\mathcal{C}$ denote the projection onto the space spanned by all rotational symmetry axes and $\mathcal{C}_{\perp}$ the projection onto the complement. Then
\begin{equation}
   \mathcal{C}_{\perp} \boldsymbol{\Lambda} \parallel \mathcal{C}_{\perp} \mathbf{B}.
\end{equation}

Another triple that determines the external magnetic field is the gauge-shift invariant $(\rho,\boldsymbol{\Lambda}, \mathbf{J}_{\mathbf{R}})$, where $\mathbf{J}_{\mathbf{R}}$ is the \magm{physical magnetic momentum} with respect to the center of charge $\mathbf{R}$,
\begin{equation}
 \begin{split}
  \mathbf{J}_{\mathbf{R}} & = \mathbf{L}_{\mathbf{R}} + \int \rho(\mathbf{r}) \, (\mathbf{r}-\mathbf{R}) \times (\mathbf{a} + \tfrac{1}{2} \mathbf{B} \times (\mathbf{r} - \mathbf{G})) d\mathbf{r}  \\
    & = \mathbf{L}_{\mathbf{R}} + \int \rho(\mathbf{r}) \, (\mathbf{r}-\mathbf{R}) \times (\tfrac{1}{2} \mathbf{B} \times (\mathbf{r} - \mathbf{R})) d\mathbf{r} + \int \rho(\mathbf{r}) \, (\mathbf{r}-\mathbf{R}) \times (\mathbf{a} + \tfrac{1}{2} \mathbf{B} \times (\mathbf{R} - \mathbf{G})) d\mathbf{r}.
 \end{split}
\end{equation}
The last term vanishes because $\boldsymbol{\mu}_{\mathbf{R}} = \int (\mathbf{r}-\mathbf{R}) \rho d\mathbf{r} = 0$ is the dipole moment with respect to the center of charge. Rewriting the other integral and noting that $\mathbf{L}_{\mathbf{R}} = \boldsymbol{\Lambda}$, we obtain
\begin{equation}
 \begin{split}
   \mathbf{J}_{\mathbf{R}} & = \boldsymbol{\Lambda} + \frac{1}{2} \int \rho \big( |\mathbf{r}_{\mathbf{R}}|^2 \mathbf{B} - (\mathbf{B}\cdot\mathbf{r}_{\mathbf{R}}) \, \mathbf{r}_{\mathbf{R}} \big) d\mathbf{r}.
 \end{split}
\end{equation}
Recalling the definition of the moment-of-inertia tensor, $\mathbf{Q} = \int (|\mathbf{r}_{\mathbf{R}}|^2 \mathbf{I} - \mathbf{r}_{\mathbf{R}} \mathbf{r}_{\mathbf{R}}^{\mathrm{T}}) \rho d\mathbf{r}$, it follows that
\begin{equation}
    \mathbf{J}_{\mathbf{R}} - \boldsymbol{\Lambda} = \frac{1}{2} \mathbf{Q} \mathbf{B}.
\end{equation}
The inertia tensor $\mathbf{Q}$ is symmetric and it is therefore always possible to rotate the coordinate axes so that $\mathbf{Q}$ becomes diagonal. Without loss of generality, we therefore write
\begin{equation}
   \mathbf{Q} = \int \rho \begin{pmatrix} y_{\mathbf{R}}^2 + z_{\mathbf{R}}^2 & 0 & 0 \\ 0 & x_{\mathbf{R}}^2 + z_{\mathbf{R}}^2 & 0 \\ 0 & 0 & x_{\mathbf{R}}^2 + y_{\mathbf{R}}^2 \end{pmatrix} d\mathbf{r}.
\end{equation}
This matrix is always positive definite and invertible. Hence, we find
\begin{equation}
  \label{eqBfromLJ}
   \mathbf{B} = 2\mathbf{Q}^{-1} (\mathbf{J}_{\mathbf{R}} - \boldsymbol{\Lambda}).
\end{equation}
To determine the magnetic field, it was not even necessary to assume that $(\rho,\boldsymbol{\Lambda}, \mathbf{J}_{\mathbf{R}})$ comes from a ground state. As long as $(\rho,\boldsymbol{\Lambda})$ has been obtained from some state and $\mathbf{J}_{\mathbf{R}}$ has been calculated from the same state and some additional vector potential $\mathbf{A} = \mathbf{a} + \tfrac{1}{2} \mathbf{B} \times (\mathbf{r}-\mathbf{G})$, the above expression for $\mathbf{B}$ is valid.

\subsection{Constructive relationship between $\mathbf{B}$ and ground-state densities $(\rho,\jmvec)$}

The above HK-like results have been established using proofs by contradiction. 
Going beyond the strict LDFT framework and studying the current density, we may derive a more explicit, constructive relationship. In particular, for the density pair
$(\rho,\jmvec)$, arising from a ground state of some potential $(u,\mathbf{A}=\mathbf{a}+\frac{1}{2} \mathbf{B}\times\mathbf{r}_{\mathbf{G}})$, a constructive relationship can be established. 

First note that the condition of a vanishing physical momentum can be expressed as $\boldsymbol{\pi} = \mathbf{p} + N \mathbf{A}(\mathbf{R}) + \frac{1}{2} \mathbf{B}\times\boldsymbol{\mu}_{\mathbf{R}}$. Because the dipole moment relative to the center of charge vanishes, we find that the canonical momentum determines the vector potential at the center of charge, $\mathbf{p} = -\mathbf{A}(\mathbf{R})/N$. The condition that the physical current is divergence-free can then be expressed as
\begin{equation}
  \frac{1}{2} (\mathbf{r}_{\mathbf{R}}\times\nabla\rho)^{\mathrm{T}} \mathbf{B} = -\nabla\cdot(\jmvec + \rho \mathbf{A}(\mathbf{R})) = -\nabla\cdot(\jmvec - \rho \mathbf{p}/N),
\end{equation}
where $(\mathbf{r}_{\mathbf{R}}\times\nabla\rho)^{\mathrm{T}}$ is a row vector. Multiplying by the column vector $\rho^{-1} \mathbf{r}_{\mathbf{R}}\times\nabla\rho$ and integrating over all space, we obtain 
\begin{equation}
   \frac{1}{2} \mathcal{M} \mathbf{B} = \frac{1}{2} \Big( \int (\mathbf{r}_{\mathbf{R}}\times\nabla\rho) \frac{1}{\rho} (\mathbf{r}_{\mathbf{R}}\times\nabla\rho)^{\mathrm{T}} d\mathbf{r} \Big) \mathbf{B} = -\int \frac{1}{\rho} (\mathbf{r}_{\mathbf{R}}\times\nabla\rho) \nabla\cdot(\jmvec + \rho \mathbf{A}(\mathbf{R})) d\mathbf{r},
\end{equation}
where the first equality defines $\mathcal{M}$. The rationale for the factor $1/\rho$ is simply that it makes each element of $\mathcal{M}$ bounded by a weighted von Weiz\"acker energy. Therefore, $\mathcal{M}$ is guaranteed to be finite, given our choice of function spaces. As long as the resulting integral is well defined, the factor $1/\rho$ can be replaced by any function of $\rho$. An alternative expression for $\mathcal{M}$, which is somewhat more conducive to analysis, is the form
\begin{equation}
   \mathcal{M}  = 4 \int (\mathbf{r}_{\mathbf{R}}\times\nabla \sqrt{\rho}) (\mathbf{r}_{\mathbf{R}}\times\nabla \sqrt{\rho})^{\mathrm{T}} d\mathbf{r}.
\end{equation}
It can be verified (by writing out the expression in cylindrical coordinates) that $\mathcal{M} = 0$ for a spherically symmetric density with $\mathrm{rank}(\mathcal{M})=2$ when there is a single cylindrical symmetry axis. For a non-cylindrically symmetric density, $\mathrm{rank}(\mathcal{M})=3$. Hence, for non-cylindrically symmetric densities $\rho$,
\begin{equation}
   \mathbf{B} = -2 \mathcal{M}^{-1} \int \frac{1}{\rho} (\mathbf{r}_{\mathbf{R}}\times\nabla\rho) \nabla\cdot(\jmvec + \rho \mathbf{A}(\mathbf{R})) d\mathbf{r} = -2 \mathcal{M}^{-1} \int \frac{1}{\rho} (\mathbf{r}_{\mathbf{R}}\times\nabla\rho) \nabla\cdot\widetilde{\mathbf{j}}_{\mathrm{m}} d\mathbf{r},
\end{equation}
where $\widetilde{\mathbf{j}}_{\mathrm{m}} = \jmvec + \rho \mathbf{A}(\mathbf{R}) = \jmvec - \rho \frac{\mathbf{p}}{N}$ is the deviation of $\jmvec$ from its spatially averaged value $\mathbf{p} = \int \jmvec d\mathbf{r}$.

An alternative constructive relationship is obtained by multiplying
the condition $\nabla\cdot\mathbf{j}$ by a coordinate product
$(r_a-R_a) (r_b-R_b)$, with $a,b \in \{x,y,z\}$, and integrating over all space:
\begin{equation}
  0 = \int r_{\mathbf{R}a} r_{\mathbf{R}b} \nabla_c j_c d\mathbf{r} = -\int (r_{\mathbf{R}b} \delta_{ac} + r_{\mathbf{R}a} \delta_{bc}) (\jmcomp{c} + \rho A_c(\mathbf{R}) + \frac{1}{2} \rho \epsilon_{cde} B_d r_{\mathbf{R}e}) d\mathbf{r}.
\end{equation}
Noting that the term proportional to $A_c(\mathbf{R})$ vanishes by virtue of being a constant times components of the dipole moment relative to the center of charge, and moving the $B_d$-dependent terms to the left-hand side, we obtain
\begin{equation}
  \label{eqLPOSandB}
  \frac{1}{2} \int \rho (r_{\mathbf{R}b} \epsilon_{ade} r_{\mathbf{R}e} + r_{\mathbf{R}a} \epsilon_{bde} r_{\mathbf{R}e}) B_d d\mathbf{r} =  -\int (r_{\mathbf{R}b} \jmcomp{a} + r_{\mathbf{R}a} \jmcomp{b}) d\mathbf{r}.
\end{equation}
When $a \neq b$, the right-hand side resembles the two terms in the \magm{magnetic-moment} component $\Lambda_c$ (with $c \neq a$ and $c \neq b$) but with different
signs: this expression is symmetric in $a,b$, whereas the \magm{magnetic moment} is anti-symmetric. Contraction by the Levi-Civita tensor $\epsilon_{cab}$ yields zero on both sides. We therefore contract by the absolute value of each element, $|\epsilon_{cab}|$, to obtain
\begin{equation}
  \label{eqLPOSandB2}
  \int \rho |\epsilon_{cab}| \, \epsilon_{ade} r_{\mathbf{R}b} r_{\mathbf{R}e} B_d d\mathbf{r} =  -2 \int |\epsilon_{cab}| r_{\mathbf{R}a} \jmcomp{b} d\mathbf{r}.
\end{equation}
We write the right-hand side (excluding sign) more compactly using the notation,
\begin{equation}
  L^{+}_{\mathbf{R};c} = \int (r_{\mathbf{R}b} \jmcomp{a} + r_{\mathbf{R}a} \jmcomp{b}) d\mathbf{r} = |\epsilon_{cab}| \int r_{\mathbf{R}a} \jmcomp{b} \, d\mathbf{r},
\end{equation}
where $|\epsilon_{cab}|$ means the absolute value of the Levi-Civita tensor.
The left-hand sides of Eqs.~\eqref{eqLPOSandB} and \eqref{eqLPOSandB2} define a linear transformation of the magnetic field. This transformation can therefore be represented by a matrix,
\begin{equation}
  Q^{+}_{cd} = \int \rho (r_{\mathbf{R}b} \epsilon_{ade} r_{\mathbf{R}e} + r_{\mathbf{R}a} \epsilon_{bde} r_{\mathbf{R}e}) d\mathbf{r} = \int \rho |\epsilon_{cab}| r_{\mathbf{R}b} \epsilon_{ade} r_{\mathbf{R}e} \, d\mathbf{r},
\end{equation}
or, more explicitly,
\begin{equation}
  \mathbf{Q}^+ = \int \rho \, 
    \begin{pmatrix}
       y_{\mathbf{R}}^2-z_{\mathbf{R}}^2 & -x_{\mathbf{R}} y_{\mathbf{R}} & x_{\mathbf{R}} z_{\mathbf{R}} \\
       x_{\mathbf{R}} y_{\mathbf{R}} & z_{\mathbf{R}}^2-x_{\mathbf{R}}^2 & -y_{\mathbf{R}} z_{\mathbf{R}} \\
       -x_{\mathbf{R}} z_{\mathbf{R}} & y_{\mathbf{R}} z_{\mathbf{R}} & x_{\mathbf{R}}^2-y_{\mathbf{R}}^2
    \end{pmatrix} d\mathbf{r}.
\end{equation}
The same terms appear in $\mathbf{Q}^+$ and in the moment-of-inertia tensor $\mathbf{Q}$; however, half the terms have different signs. 
Alignment of the coordinate to, for example, the principal axes makes the off-diagonal elements of $\mathbf{Q}^+$ vanish. We then see that this matrix has full rank if, in such a coordinate system, $\int \rho x_{\mathbf{R}}^2 d\mathbf{r}$, $\int \rho y_{\mathbf{R}}^2 d\mathbf{r}$, and $\int \rho z_{\mathbf{R}}^2 d\mathbf{r}$ are all distinct. This fails if the system has cylindrical symmetry. Unlike $\mathcal{M}$ above, $\mathbf{Q}^+$ can also be rank-deficient in other cases, for example, when there is a 4-fold rotation symmetry axis rather than a continuous rotational symmetry. When the matrix has full rank, then
\begin{equation}
  \mathbf{B} = -2 (\mathbf{Q}^+)^{-1} \mathbf{L}_{\mathbf{R}}^+.
\end{equation}

\section{Kohn--Sham theory for LDFT}
\label{secKS}

Kohn--Sham theory is formulated in a similar manner for DFT, CDFT, and LDFT, the fundamental distinction being what the Kohn--Sham system is required to reproduce. In CDFT, the Kohn--Sham system is a Slater determinant that reproduces $(\rho,\jmvec)$. In LDFT, it is a Slater determinant which reproduces $(\rho,\mathbf{p},\mathbf{L}_{\mathbf{G}})$. More formally, we introduce the functionals
\begin{align}\label{eq:TsCSDT}
   T^{\mathrm{CDFT}}_{\mathrm{s}}(\rho,\jmvec) & = \inf_{\Phi \mapsto \rho,\jmvec} \bra{\Phi} \op{T} \ket{\Phi}, 
           \\
   T^{\mathrm{LDFT}}_{\mathrm{s}}(\rho,\mathbf{p},\mathbf{L}_{\mathbf{G}}) & = \inf_{\Phi \mapsto \rho,\mathbf{p},\mathbf{L}_{\mathbf{G}}} \bra{\Phi} \op{T} \ket{\Phi},
   \label{eq:TsLDFT}
\end{align}
where the minimization is over all Slater determinants $\Phi \in \WMSobolev{\mathbf{0}}$. The minimizers define the Kohn--Sham systems. A standard decomposition of the universal energy now yields
\begin{align}
   \Fvrdm(\rho,\jmvec) & = T_{\mathrm{s}}(\rho,\jmvec) + J(\rho) + F_{\text{xc}}(\rho,\jmvec),
           \\
   \Fldftdm(\rho,\mathbf{p},\mathbf{L}_{\mathbf{G}}) & = T_{\mathrm{s}}(\rho,\mathbf{p},\mathbf{L}_{\mathbf{G}}) + J(\rho) + F_{\text{xc}}(\rho,\mathbf{p},\mathbf{L}_{\mathbf{G}}),
    \label{eqKSDECOMPPRELIM}
\end{align}
where we drop the superscripts `CDFT' and `LDFT' and $J(\rho)$ is the Hartree energy---that is, the classical self-repulsion of the charge distribution.

A practical point related to the different constraints on the Kohn--Sham system is the calculation of magnetic properties. All properties that can be obtained as simple integrals over the current density are easy to compute in the CDFT Kohn--Sham framework. In the LDFT Kohn--Sham framework, however, only the linear momenta and \magm{magnetic moments} are reproduced---the Kohn--Sham current density may differ from the current density of the interacting system. As a result, other magnetic properties are less accessible in LDFT than in CDFT.

\subsection{Non-interacting $N$- and $(v,\mathbf{A})$-representability}

The non-interacting $N$-representability problem is one of cases where
the orbital and spin contributions to the \magm{magnetic moment} need to
be treated differently. Below, we focus on the case $\gspin=0$ with
only orbital contributions.

The Kohn--Sham theory of paramagnetic CDFT is known to suffer from
$N$-representability problems in the case for $N=1$. More
precisely, a single Kohn--Sham orbital is only capable of reproducing
densities $(\rho,\jpvec)$ with vanishing paramagnetic vorticity
$\nabla\times(\jpvec/\rho) = \mathbf{0}$ except possibly at isolated
points where the curl is singular. The situation for $N=2$ and $N=3$
is not fully settled and for $N\geq 4$ Lieb and Schrader have
shown that all density pairs satisfying some regularity conditions are
representable by a single Slater determinant~\cite{LIEB_PRA88_032516}.

When $N=1$, LDFT shares some of the $N$-representability problems of paramagnetic CDFT. This follows because a one-particle
wave function only has a non-vanishing canonical angular momentum when
there are nodal lines~\cite{RIESS_AP57_301,HIRSCHFELDER_JCP67_5477}. A
simple corollary is that $(\rho,\boldsymbol{\Lambda})$ is not
representable by a single orbital when $\rho(\mathbf{r}) > 0$ and
$\boldsymbol{\Lambda} \neq 0$. Moreover, for all wave functions with a
cylindrically symmetric density
$\rho(\mathbf{r}) = |\psi(\mathbf{r})|^2$, the component of the
intrinsic momentum $\boldsymbol{\Lambda}$ parallel to the rotation
axis is quantized~(see Theorem~2 in
Ref.~\onlinecite{HIRSCHFELDER_JCP67_5477}). The case $N=2$ is
much harder to analyse and it is an open question if
there are any restriction on pairs $(\rho,\boldsymbol{\Lambda})$ that
can be represented by $N=2$ and $N=3$ orbitals. For $N\geq 4$, Lieb and
Schrader's Slater-determinant representability proof for
$(\rho,\jpvec)$ entails that all $(\rho,\boldsymbol{\Lambda})$ are
representable too. The representability problems in CDFT and LDFT are avoided 
if the Kohn--Sham orbitals are allowed to
be fractionally occupied (see Ref.~\onlinecite{TELLGREN_PRA89_012515} for the CDFT case and Theorem~\ref{thmMIXEDNREP} above for the LDFT case).

The issue of $(v,\mathbf{A})$-representability is less well characterized. However, two-dimensional counterexamples to $(v,\mathbf{A})$-representability in CDFT have been constructed and these apply directly also to LDFT~\cite{TAUT_PRA80_022517}.

We remark that $T_{\mathrm{s}}(\rho,\mathbf{p},\mathbf{L}_{\mathbf{G}})$ and the functional $\Fldft^{\lambda=0}(\rho,\mathbf{p},\mathbf{L}_{\mathbf{G}})$ introduced in Theorem~\ref{thmGenExpVal} coincide on the set of non-degenerate, non-interacting $(v,\mathbf{A})$-representable densities. This fact follows since the relevant Hamiltonian is in 
this case a pure one-electron operator, implying that the non-degenerate ground state is a Slater determinant. 
In general, however, they are related through $T_{\mathrm{s}}(\rho,\mathbf{p},\mathbf{L}_{\mathbf{G}}) \geq \Fldft^{\lambda=0}(\rho,\mathbf{p},\mathbf{L}_{\mathbf{G}})$. 
The functional $\Fldftdm^{\lambda=0}$ is the kinetic energy in an ensemble Kohn--Sham setting with fractional occupation numbers allowed. 
Clearly, $T_{\mathrm{s}}(\rho,\mathbf{p},\mathbf{L}_{\mathbf{G}}) \geq \Fldft^{\lambda=0}(\rho,\mathbf{p},\mathbf{L}_{\mathbf{G}})\geq 
\Fldftdm^{\lambda=0}(\rho,\mathbf{p},\mathbf{L}_{\mathbf{G}})$ must hold, and with equalities on the set of non-degenerate, non-interacting $(v,\mathbf{A})$-representable densities.

\subsection{Expectation-valuedness of the Kohn--Sham kinetic energy}

 From Theorem~6 in Ref.~\onlinecite{LAESTADIUS_JMC52_2581},
the infimum in the definition of the CDFT functional $T_{\mathrm{s}}$ in Eq.~\eqref{eq:TsCSDT} is attained by some determinant. This result is premised on noninteracting $N$-representability, which presently has been proven generally in the case of four or more Kohn--Sham orbitals~\cite{LIEB_PRA88_032516}. The same holds for the LDFT version of the functional defined in Eq.~\eqref{eq:TsLDFT}:
\begin{theorem}
   If $N\geq 4$, then there exists a Slater determinant $\Phi \in \WMSobolev{\mathbf{0}}$ such that $\Phi \mapsto \rho,\mathbf{p},\mathbf{L}_{\mathbf{G}}$ and
   \begin{equation}
   	T_s(\rho,\mathbf{p},\mathbf{L}_{\mathbf{G}}) = \langle \Phi| \op{T}| \Phi\rangle.
   \end{equation}
\end{theorem}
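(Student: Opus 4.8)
The plan is to follow the same strategy as the proof of expectation-valuedness of $\Fldft$ (the first theorem of this section), replacing the minimizing sequence of wave functions by a minimizing sequence of Slater determinants and replacing $\op{H}_0$ by the pure kinetic-energy operator $\op{T}$. First I would invoke non-interacting $N$-representability: since $N \geq 4$, Theorem~\ref{thmMIXEDNREP} (or, more directly, the Lieb--Schrader result~\cite{LIEB_PRA88_032516} together with the current-construction argument of Sec.~\ref{secLDFTNREP}) guarantees that the constraint set $\{\Phi \in \WMSobolev{\mathbf{0}} : \Phi \mapsto \rho,\mathbf{p},\mathbf{L}_{\mathbf{G}}\}$ is non-empty and that $T_s(\rho,\mathbf{p},\mathbf{L}_{\mathbf{G}}) < +\infty$. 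Take a minimizing sequence $\{\Phi_k\}$ of such determinants with $\langle \Phi_k | \op{T} | \Phi_k\rangle \to T_s(\rho,\mathbf{p},\mathbf{L}_{\mathbf{G}})$.

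Next I would extract the weak limit exactly as before. The bound on $\langle \Phi_k | \op{T} | \Phi_k\rangle$ gives a uniform bound on $\sum_l \int |\nabla_l \Phi_k|^2 d\tau$, while the fixed density $\rho$ (with $g^2\rho \in L^1$) controls $\int |\Phi_k|^2 d\tau$ and $\int |g(\mathbf{r}_l)\Phi_k|^2 d\tau$; hence $\{\Phi_k\}$ is bounded in $\mathcal{H}^1_{\mathbf{0}}$ and, by Banach--Alaoglu, $\Phi_k \rightharpoonup \Phi$ weakly (along a subsequence). The Lieb argument~\cite{LIEB_IJQC24_243} again upgrades this to $\Phi_k \to \Phi$ in $L^2$-norm with $\Phi \mapsto \rho$, and the $g^2\rho \in L^1$ bound shows $\Phi \in \WMSobolev{\mathbf{0}}$; weak lower semicontinuity of $\langle \cdot | \op{T} | \cdot\rangle$ gives $\langle \Phi | \op{T} | \Phi\rangle \leq T_s(\rho,\mathbf{p},\mathbf{L}_{\mathbf{G}})$. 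The convergence $\int \jpvecix{k} \,d\mathbf{r} \to \int \jpvecix{\Phi}\,d\mathbf{r} = \mathbf{p}$ and $\int \mathbf{r}_{\mathbf{G}}\times\jpvecix{k}\,d\mathbf{r} \to \int \mathbf{r}_{\mathbf{G}}\times\jpvecix{\Phi}\,d\mathbf{r} = \mathbf{L}_{\mathbf{G}}$ (recall $\gspin=0$ is the relevant case for Kohn--Sham, or else include the spin term verbatim) follows from Eq.~\eqref{g2} in the previous proof, which only used $\|g\psi_k - g\psi\|_{L^2}\to 0$ and boundedness of $\nabla\psi_k$ --- both still available here. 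Thus $\Phi$ satisfies all three density constraints and attains the infimum.

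The one genuinely new point is that the limit $\Phi$ must still be a Slater determinant, not merely a wave function in $\WMSobolev{\mathbf{0}}$. I would handle this by the standard argument via the one-particle reduced density matrix: a state is a single Slater determinant iff its $1$-RDM $\gamma$ is a projection of rank $N$, i.e.\ $\gamma^2 = \gamma$ and $\mathrm{Tr}\,\gamma = N$. The map $\Phi_k \mapsto \gamma_k$ is continuous for the relevant topologies, and $L^2$-norm convergence $\Phi_k \to \Phi$ (together with the kinetic bound controlling the trace-class behaviour) yields $\gamma_k \to \gamma_\Phi$ strongly enough that the idempotency $\gamma_k^2 = \gamma_k$ passes to the limit; since the occupation numbers of $\gamma_k$ all lie in $\{0,1\}$ and this is a closed condition, $\gamma_\Phi$ again has spectrum in $\{0,1\}$ with trace $N$, so $\Phi$ is a determinant. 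Alternatively, one can argue directly that the occupied orbitals of $\Phi_k$ converge (after a Gram--Schmidt/compactness argument on the finite-dimensional span) to an orthonormal $N$-tuple whose determinant is $\Phi$. I expect this passage-to-the-limit-within-the-determinant-class step to be the main obstacle, since it requires a little care about which operator topology makes idempotency of the $1$-RDM a closed property; everything else is a direct transcription of the proof already given for $\Fldft$.
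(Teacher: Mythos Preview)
Your approach is correct, but it differs from the paper's in an instructive way. The paper does not rerun the compactness argument inside the determinant class. Instead, it invokes Theorem~\ref{thmGenExpVal} with $\lambda=0$ to obtain a minimizer $\psi$ of $\Fldft^{\lambda=0}(\rho,\mathbf{p},\mathbf{L}_{\mathbf{G}})$ over \emph{all} wave functions in $\WMSobolev{\mathbf{0}}$---so all of your weak-limit and constraint-passage work is already done, once, at that earlier stage. It then appeals to the argument of Theorem~6 in Ref.~\onlinecite{LAESTADIUS_JMC52_2581}: since $\op{T}$ is a one-body operator and the constraints $(\rho,\mathbf{p},\mathbf{L}_{\mathbf{G}})$ are functionals of the one-particle reduced density matrix alone, the unrestricted minimizer can be taken to be a Slater determinant. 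Because $T_s \geq \Fldft^{\lambda=0}$ always holds and this determinant lies in the $T_s$ search domain, equality and attainment follow immediately.

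The trade-off is this: the paper's route is shorter and avoids your ``genuinely new point'' (closedness of the determinant class under weak limits) entirely, at the cost of importing an external structural lemma about one-body minimizers. Your route is more self-contained but duplicates the compactness machinery and requires the extra closure argument. Your sketch of that closure step via idempotency of the 1-RDM is sound: $L^2$-norm convergence $\Phi_k\to\Phi$ gives trace-norm convergence of the $N$-body projectors and hence of the 1-RDMs $\gamma_k\to\gamma_\Phi$, so $\gamma_k^2=\gamma_k$ passes to the limit, and a normalized pure $N$-fermion state with idempotent 1-RDM is a Slater determinant. The orbital-convergence alternative you mention also works. Either way, the step you flagged as the main obstacle is indeed the only one not already covered verbatim by the earlier proof, and it is not deep.
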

\begin{proof}
Theorem~\ref{thmGenExpVal} above with $\lambda=0$ establishes a minimizer $\Phi \in \WMSobolev{\mathbf{0}}$ with all specified properties, save for being a determinant. However, we can use the same argument as in the proof of Theorem~6 in Ref.~\onlinecite{LAESTADIUS_JMC52_2581} to conclude that $\Phi$ is a determinant of $N$ orthonormal orbitals.
\end{proof}

\subsection{The exchange--correlation functional, gauge-shift invariance, and additive separability}

In Vignale and Rasolt's paramagnetic CDFT, there is a universal functional $\Fvr(\rho,\jmvec)$, which depends on the gauge-dependent paramagnetic current density. This gauge-dependence is necessary in their formalism since their universal functional includes the canonical kinetic energy. However, the exchange--correlation contribution to $\Fvr(\rho,\jmvec)$ must be gauge invariant and should therefore only depend on the gauge-invariant ``components'' of $(\rho,\jmvec)$. These components are the density itself and the paramagnetic vorticity $\boldsymbol{\nu} := \nabla\times(\jmvec/\rho)$. In LDFT, a minimalistic version of the same problem appears: The universal functional depends on the full triple $(\rho,\mathbf{p},\mathbf{L}_{\mathbf{G}})$, but the exchange--correlation contribution should only depend on the ``components'' that are invariant under the restricted gauge transformation that are allowed---namely, constant shifts of the vector potential. By counting the degrees of freedom 
(6 in $(\mathbf{p},\mathbf{L}_{\mathbf{G}})$ and 3 in $\mathbf{a}$), we conclude that these gauge-shift invariant components are contained in the pair $(\rho,\boldsymbol{\Lambda})$. The fact that this pair is sufficient to uniquely determine $\mathbf{B}$ for a non-cylindrically symmetric ground-state density supports the conclusion that there are no additional gauge-shift invariant ``components'' of $(\rho,\mathbf{p},\mathbf{L}_{\mathbf{G}})$.

From these considerations, we may thus decompose the universal functional as
\begin{equation}
  \Fldftdm(\rho,\mathbf{p},\mathbf{L}_{\mathbf{G}}) = T_{\mathrm{s}}(\rho,\mathbf{p},\mathbf{L}_{\mathbf{G}}) +  J(\rho) + F_{\text{xc}}(\rho,\boldsymbol{\Lambda}),
\end{equation}
where all the gauge-shift dependence is contained in the canonical kinetic energy term, $T_{\mathrm{s}}(\rho,\mathbf{p},\mathbf{L}_{\mathbf{G}})$, 
the arguments to $F_{\text{xc}}$ having been reduced to a gauge-shift independent density pair (compare Eq.~\eqref{eqKSDECOMPPRELIM}).

Another type of constraint comes from requirement of {\it additive
  separability}---that is, the requirement that the energy is additive
over well-separated non-interacting
subsystems~\cite{HELGAKER00}. Additive separability is related to the
stronger notions of size extensivity and size
consistency~\cite{BARTLETT_CPL50_190,POPLE_IJQC10_1,BARTLETT_IJQC14_561}. This
type of property is a desirable constraint on any practical
approximate exchange--correlation functional. Consider a system
consisting of $S$ subsystems. Each subsystem is a ground state of some
Hamiltonian $H(v_{\alpha},\mathbf{a},\mathbf{B})$ with density
$(\rho_{\alpha}, \mathbf{p}_{\alpha},
\mathbf{L}_{\alpha;\mathbf{G}})$ (denoting subsystems by Greek indices).
The total system is assumed to be the ground state of
$H(v, \mathbf{a}, \mathbf{B})$ with $v = \sum_{\alpha}
v_{\alpha}$. Then additive separability amounts to the condition
\begin{equation}
  F_{\text{xc}}(\rho,\boldsymbol{\Lambda}) = \sum_{\alpha=1}^S F_{\text{xc}}(\rho_{\alpha},\boldsymbol{\Lambda}_{\alpha}).
\end{equation}

Let $N_{\alpha}$ denote the number of electrons and $\mathbf{R}_{\alpha}$ the center of charge of subsystem $\alpha$. We shall assume that the subsystems are well 
separated---that is, that the distances $|\mathbf{R}_{\alpha} - \mathbf{R}_{\beta}|$, with $\alpha \neq \beta$, are arbitrarily large 
compared with the extent of each subsystem density, so that we can write
\begin{equation}
   \rho(\mathbf{r}) = \max_{1\leq \alpha \leq S} \rho_{\alpha}(\mathbf{r}).
\end{equation}
In other words, at any point $\mathbf{r}$, only a single subsystem contributes to the total density.

There is also a relation between the \magm{total intrinsic magnetic moment} and those of the individual subsystems. To derive a detailed form of this relation, we begin by introducing notation for moments of inertia with respect to multiple reference points,
\begin{equation}
  \mathbf{Q}_{\alpha,\mathbf{C}} = \int \rho_{\alpha}(\mathbf{r}) \, \big(|\mathbf{r}_{\mathbf{C}}|^2 \mathbf{I} - \mathbf{r}_{\mathbf{C}} \, \mathbf{r}_{\mathbf{C}}^{\mathrm{T}} \big) d\mathbf{r}.
\end{equation}
The moments of inertia of the total system are defined analogously. Writing simply $\mathbf{Q} = \mathbf{Q}_{\mathbf{R}}$ and $\mathbf{Q}_{\alpha} = \mathbf{Q}_{\alpha,\mathbf{R}_{\alpha}}$ for moments of inertia with respect to the relevant center of charge, we have the
relation
\begin{equation}
  \mathbf{Q} = \sum_{\alpha=1}^S \mathbf{Q}_{\alpha} + \sum_{\alpha=1}^S N_{\alpha} ( |\mathbf{d}_{\alpha}|^2 \mathbf{I} - \mathbf{d}_{\alpha} \mathbf{d}_{\alpha}^{\mathrm{T}}), \quad \mathbf{d}_{\alpha} = \mathbf{R}_{\alpha} - \mathbf{R}.
\end{equation}
Denoting the last term by $\mathcal{G}$, we have $\mathbf{Q} = \sum_{\alpha} \mathbf{Q}_{\alpha} + \mathcal{G}$. In general, $\mathrm{rank}(\mathcal{G}) = 2$ when $S=1$ or when the centres of charge of all subsystems are located on a line, and $\mathrm{rank}(\mathcal{G}) = 3$ otherwise. When $\mathcal{G}$ is invertible,
\begin{equation}
   \mathbf{Q}^{-1} \approx \mathcal{G}^{-1},
\end{equation}
with equality in the limit of infinite subsystem separations.

Next, we turn to the \magm{physical magnetic moment}, which by the ground-state assumption is independent of the reference point. The \magm{total physical magnetic moment} can therefore be written in the form
\begin{equation}
  \mathbf{J}_{\mathbf{R}} = \boldsymbol{\Lambda} + \frac{1}{2} \mathbf{Q} \mathbf{B}.
\end{equation}
\magm{The physical magnetic moment} is an extensive quantity and can thus also be written as a sum over subsystems,
\begin{equation}
  \mathbf{J}_{\mathbf{R}} = \sum_{\alpha=1}^S \mathbf{J}_{\alpha,\mathbf{R}} = \sum_{\alpha=1}^S \mathbf{J}_{\alpha,\mathbf{R}_{\alpha}},
\end{equation}
relying on the reference-point independence to move the reference point from $\mathbf{R}$ to $\mathbf{R}_{\alpha}$. Inserting $\mathbf{J}_{\alpha,\mathbf{R}_{\alpha}} = \boldsymbol{\Lambda}_{\alpha} + \frac{1}{2} \mathbf{Q}_{\alpha} \mathbf{B}$, we find
\begin{equation}
  \boldsymbol{\Lambda} = \sum_{\alpha=1}^S \boldsymbol{\Lambda}_{\alpha} + \frac{1}{2} \sum_{\alpha=1}^S \mathbf{Q}_{\alpha} \mathbf{B} - \frac{1}{2} \mathbf{Q} \mathbf{B} = \sum_{\alpha=1}^S \boldsymbol{\Lambda}_{\alpha} - \frac{1}{2} \mathcal{G} \mathbf{B}.
\end{equation}
When $\mathcal{G}$ has full rank, we obtain
\begin{equation}
  \label{eqSEPSUBSYSHK}
  \mathbf{Q}^{-1} \boldsymbol{\Lambda} \approx \mathcal{G}^{-1} \boldsymbol{\Lambda} \approx -\frac{1}{2} \mathbf{B},
\end{equation}
with equality for infinite subsystem separation. This is another HK-like result: {\it The \magm{total intrinsic magnetic moment} and momentum-of-inertia tensor 
of a large total system, with three or more subsystems not located on a single line, determine the external magnetic field.}

This result can also be understood by introducing an effective angular velocity $\boldsymbol{\omega} = \mathbf{Q}^{-1} \mathbf{J}_{\mathbf{R}}$. Recall that the angular velocity vector and the angular momentum of a classical rigid body are related through $\mathbf{J} = \mathbf{Q} \boldsymbol{\omega}$. Every point in a rigid body has the same angular velocity. When applied to a non-rigid body, $\boldsymbol{\omega}$ can be interpreted as an effective, distance-weighted angular velocity,
\begin{equation}
   \boldsymbol{\beta} = -2\mathbf{Q}^{-1} \boldsymbol{\Lambda} = \mathbf{B} - 2 \boldsymbol{\omega}.
\end{equation}
It is harder to sustain a significant effective angular velocity in a large system since it requires proportionally strong currents encircling the system. Consider, for example, a ring-shaped wire of radius $R$, wire cross section area $A$, constant electron density $\rho$ within the wire, and current $I$. The angular velocity is then given by $\omega = I/ (\rho A R)$, which vanishes in the limit of infinite $R$ and constant $\rho,A$, and $I$.

When all subsystems are located on a single line, the matrix $\mathcal{G}$ is of rank 2, with zero eigenvalue in the direction vector of the line. Hence, only the components of $\mathbf{B}$ in the plane perpendicular to the line are uniquely determined:
\begin{equation}
  \label{eqSEPSUBSYSHK2}
  \mathcal{G}^{-1} \boldsymbol{\Lambda} \approx -\frac{1}{2} \mathbf{B}_{\perp},
\end{equation}
where $\mathcal{G}^{-1}$ is now the generalized inverse and equality holds in the limit of infinite subsystem separations. To re-express the above relation in terms of $\mathbf{Q}$ instead of $\mathcal{G}$, we begin by writing $\mathcal{G} = \mathbf{U} \gamma \mathbf{I}_2 \mathbf{U}^{\mathrm{T}}$, where $\gamma = \sum_{\alpha} N_{\alpha} d_{\alpha}^2$ is the two-fold degenerate non-zero eigenvalue and $\mathbf{U}$ is a $3\times 2$ matrix containing the corresponding eigenvectors. The remaining eigenvector, with zero eigenvalue, is denoted $\hat{\mathbf{d}}$ and is by stipulation parallel to all $\mathbf{d}_{\alpha}$. Using the shorthand notation $\mathbf{S} = \sum_{\alpha} \mathbf{Q}_{\alpha}$ and the Woodbury matrix identity, we obtain
\begin{equation}
   \mathbf{Q}^{-1} = (\mathbf{S} + \gamma \mathbf{U} \mathbf{U}^{\mathrm{T}})^{-1} = \mathbf{S}^{-1} - \mathbf{S}^{-1} \mathbf{U} (\gamma^{-1} \mathbf{I}_2 + \mathbf{U}^{\mathrm{T}} \mathbf{S}^{-1} \mathbf{U})^{-1} \mathbf{U}^{\mathrm{T}} \mathbf{S}^{-1}.
\end{equation}
Some further algebra yields
\begin{equation}
 \begin{split}
   \mathbf{Q}^{-1} \mathcal{G}
   & = \mathbf{S}^{-1} \mathbf{U} (\gamma^{-1} \mathbf{I}_2 + \mathbf{U}^{\mathrm{T}} \mathbf{S}^{-1} \mathbf{U})^{-1}  \mathbf{U}^{\mathrm{T}}
  \approx \mathbf{S}^{-1} \mathbf{U} (\mathbf{U}^{\mathrm{T}} \mathbf{S}^{-1} \mathbf{U})^{-1}  \mathbf{U}^{\mathrm{T}},
 \end{split}
\end{equation}
with equality in the limit of infinite subsystem separations. Eq.~\eqref{eqSEPSUBSYSHK2} can now be expressed as
\begin{equation}
  \mathbf{U} \mathbf{U}^{\mathrm{T}} \mathbf{Q}^{-1} \boldsymbol{\Lambda} \approx -\frac{1}{2} \mathbf{B}_{\perp},
\end{equation}
where $\mathbf{U} \mathbf{U}^{\mathrm{T}} = \mathbf{I} - \hat{\mathbf{d}} \hat{\mathbf{d}}^{\mathrm{T}}$ is a projector onto the non-null space of $\mathcal{G}$. 
Projection with $\hat{\mathbf{d}} \hat{\mathbf{d}}^{\mathrm{T}}$ yields the remaining part of $\mathbf{Q}^{-1} \boldsymbol{\Lambda}$,
\begin{equation}
  \hat{\mathbf{d}} \hat{\mathbf{d}}^{\mathrm{T}} \mathbf{Q}^{-1} \boldsymbol{\Lambda} \approx \hat{\mathbf{d}} \hat{\mathbf{d}}^{\mathrm{T}} \mathbf{Q}^{-1} \sum_{\alpha} \boldsymbol{\Lambda}_{\alpha} - \frac{1}{2} \hat{\mathbf{d}} \hat{\mathbf{d}}^{\mathrm{T}} \mathbf{S}^{-1} \mathbf{U} (\mathbf{U}^{\mathrm{T}} \mathbf{S}^{-1} \mathbf{U})^{-1}  \mathbf{U}^{\mathrm{T}} \mathbf{B},
\end{equation}
which is also an intensive quantity. We also note a type of additivity for the projected \magm{intrinsic magnetic moment},
\begin{equation}
  \hat{\mathbf{d}} \hat{\mathbf{d}}^{\mathrm{T}} \boldsymbol{\Lambda} = \hat{\mathbf{d}} \hat{\mathbf{d}}^{\mathrm{T}} \sum_{\alpha} \boldsymbol{\Lambda}_{\alpha}.
\end{equation}

Let us now consider {\it local} approximations to the exchange--correlation energy. An obvious first ansatz is the LDA-like form
\begin{equation}
  F^{\text{LDA}}_{\text{xc}}(\rho,\boldsymbol{\Lambda}) = \int f(\rho(\mathbf{r}),|\boldsymbol{\Lambda}|) d\mathbf{r},
\end{equation}
where only the magnitude $|\boldsymbol{\Lambda}|$ can enter without breaking rotational invariance. We assume that $f(\rho(\mathbf{r}),|\boldsymbol{\Lambda}|)$ is negligible whenever $\rho(\mathbf{r})$ is negligible. Exploiting the fact that the subsystems are separated and requiring additive separability, we now obtain
\begin{equation}
  F^{\text{LDA}}_{\text{xc}}(\rho,\boldsymbol{\Lambda}) = \sum_{\alpha} \int f(\rho_{\alpha}(\mathbf{r}),|\boldsymbol{\Lambda}|) d\mathbf{r} = \sum_{\alpha} \int f(\rho_{\alpha}(\mathbf{r}),|\boldsymbol{\Lambda}_{\alpha}|) d\mathbf{r}.
\end{equation}
This needs to hold for a large class of ground-state densities $\rho_{\alpha}$. In practice, we expect equality to hold also for the integrands, leading to
\begin{equation}
  f(\rho_{\alpha}(\mathbf{r}),|\boldsymbol{\Lambda}|) = f(\rho_{\alpha}(\mathbf{r}),|\boldsymbol{\Lambda}_{\alpha}|).
\end{equation}
We have seen above that the total $|\boldsymbol{\Lambda}|$ diverges to infinity in the limit of infinite subsystem separation. Consequently, the above relation can only be satisfied when $f$ is independent of the second argument.

A slightly more refined ansatz relies on the quantity $\boldsymbol{\beta} = -2\mathbf{Q}^{-1} \boldsymbol{\Lambda}$ instead of the \magm{intrinsic magnetic moment} directly. Note that $\mathbf{Q}^{-1}$ is a non-local but tractable functional of the density. An LDA-like form is then given by
\begin{equation}
  F^{\text{LDA}}_{\text{xc}}(\rho,\boldsymbol{\beta}) = \int f(\rho(\mathbf{r}),|\boldsymbol{\beta}|) d\mathbf{r}.
\end{equation}
Additive separability now leads to the condition
\begin{equation}
  f(\rho_{\alpha}(\mathbf{r}),|\boldsymbol{\beta}|) = f(\rho_{\alpha}(\mathbf{r}),|\boldsymbol{\beta}_{\alpha}|),
\end{equation}
where $\boldsymbol{\beta} = \mathbf{B}$ or
$\boldsymbol{\beta} = \mathbf{B}_{\perp}$ depending on whether $\mathcal{G}$
has full rank or not. Again, additive separability cannot be satisfied unless $f$
is independent of its second argument. However, the prospects are
better for approximate additive separability since
$\boldsymbol{\beta} \approx \boldsymbol{\beta}_{\alpha}$ holds whenever the size or properties of the subsystems prevent large effective angular velocities $\boldsymbol{\omega}_{\alpha}$.

A slightly more flexible LDA-like form is obtained by giving the functional access to data about the global dimensions of the system via the moment-of-inertia tensor,
\begin{equation}
  F^{\text{LDA}}_{\text{xc}}(\rho,\mathbf{Q},\boldsymbol{\Lambda}) = \int f(\rho(\mathbf{r}),\mathbf{Q},\boldsymbol{\Lambda}) d\mathbf{r}.
\end{equation}
With this form, it becomes possible, for example, to project $\boldsymbol{\Lambda}$ onto the principal components of $\mathbf{Q}$ and to treat the components differently 
depending on the corresponding eigenvalue of $\mathbf{Q}$.

\section{The physical current density and the physical magnetic moment}
\label{secPSCRITIQUE}

After the seminal work by Vignale and Rasolt, paramagnetic CDFT has become the dominant generalization of standard DFT to allow for arbitrary, external magnetic fields. 
It is a {\it prima facie} surprising feature that the paramagnetic current density appears as a basic variable in this formulation.  A few works have attempted to construct an alternative formulation based on the physical current density~\cite{DIENER_JP3_9417,PAN_IJQC110_2833,PAN_JPCS73_630,SAHNI_PRA85_052502,PAN_PRA86_042502}. However, none of the three foundations of DFT---that is, the HK theorem, the constrained-search approach, or Lieb's convex analysis framework---has been rigorously extended to physical current densities. 

The existence of an adequate HK-like mapping from the physical density pair $(\rho,\mathbf{j})$ to external potentials $(v,\mathbf{A})$ (or an equivalence class of potentials that differ by a gauge transformation) is presently an open question~\cite{TELLGREN_PRA86_062506,LAESTADIUS_IJQC114_782}. The existence of a HK functional of the physical current density thus cannot be excluded. However, such a functional would require a different approach than in standard DFT since it would not have a straightforward variation principle~\cite{VIGNALE_IJQC113_1422,TELLGREN_PRA86_062506,LAESTADIUS_PRA91_032508}.

A constrained-search approach suffers from the fact the inner minimization would be performed over a domain that depends on the external vector potential and the constrained-search functional would therefore not be universal, losing most of its formal and practical appeal. Consequently, we should also be careful in generalizing the common constrained-search notation `$\psi\mapsto \rho$' or `$\psi\mapsto \rho,\jmvec$' to a version with physical currents, `$\psi\mapsto \rho,\mathbf{j}$', as this leaves implicit the dependence of the wave-function domain on the vector potential. 

A generalization of the Lieb formalism with the physical density as a variable 
would require the identification of a Legendre--Fenchel transformation pair of a concave ground-state energy functional $E(v,\mathbf{A})$ and a convex universal functional $F(\rho,\mathbf{j})$. A necessary condition for this is that external potentials are paired linearly with densities. However, $E(v,\mathbf{A})$ is not concave in $\mathbf{A}$ and the interaction with the magnetic vector potential contains both linear and quadratic terms. An attempt to solve these problems using a change of variables $w := v - \frac{1}{2} A^2$ (in analogy with how $u := v + \frac{1}{2} A^2$ can be introduced in paramagnetic CDFT) leads to a direct contradiction (see Sec.~IV.D in Ref.~\cite{TELLGREN_PRA86_062506}). Counterexamples show that a change of variables $u_{\lambda} := v + \lambda A^2$ with $\lambda < \frac{1}{2} A^2$ cannot lead to a concave energy functional $E(u_{\lambda},\mathbf{A})$. In summary, a satisfactory CDFT featuring the physical current density as a variable is neither presently available nor completely excluded.

Alongside the LDFT formalism presented above, we can imagine a formalism based instead on the \magm{physical magnetic moment}. 
However, the difficulties for a physical CDFT would also be present for a physical LDFT. For example, the constrained-search domain indicated by `$\psi \mapsto \rho, \mathbf{J}$' would depend on the external potential, and therefore be non-universal. LDFT thus provides a simplified setting where such difficulties can be analysed and possibly circumvented. To date, no physical LDFT has been proposed or analysed in the literature. A hybrid formalism, based on the triple $(\rho,\mathbf{L},\mathbf{j})$, has been proposed~\cite{PAN_JCP143_174105}. We now compare this approach with LDFT formalism.

\subsection{Comparison of LDFT with a recent approach using $(\rho,\mathbf{L},\mathbf{j})$}

In Ref.~\onlinecite{PAN_JCP143_174105} a HK-like result is introduced in a setting where only uniform magnetic fields and gauge-fixed vector potentials are allowed. After minor modifications, the result can be stated as follows:
\begin{theorem}
  Let two or more potential pairs $(v_k,\mathbf{B}_k)$, with $k = 1, \ldots, K \geq 2$, be given. Let the vector potentials be gauge fixed to the form $\mathbf{A}_k = \frac{1}{2} \mathbf{B}_k \times \mathbf{r}_{\mathbf{G}}$.
 Suppose that $\psi_k$ is the non-degenerate ground state of $H(v_k,\mathbf{A}_k)$ and that all $\psi_k$ give rise to the same density triple $(\rho_{\psi_k}, \mathbf{L}_{\mathbf{G};\psi_k}, \mathbf{j}_k) = (\rho, \mathbf{L}_{\mathbf{G}}, \mathbf{j})$.
 Then, for all $k,l$, it follows that (a) $\mathbf{B}_k = \mathbf{B}_l$, (b) the scalar potentials differ by at most a constant $v_k = v_l + \textup{const}$, and (c) $\psi_k = \psi_l$.
\end{theorem}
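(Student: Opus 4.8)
The plan is to prove the theorem in two stages: first recover the magnetic field $\mathbf{B}_k$ from the shared density data by linear algebra alone, and then, once all vector potentials have been shown to coincide, invoke the ordinary Hohenberg--Kohn argument for the scalar potential.

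For the first stage I would start from the identity $\mathbf{j} = \jmvec + \rho\mathbf{A}$ relating the physical and the magnetization current densities. Since every $\psi_k$ reproduces the same $\rho$ and the same $\mathbf{j}$, and the gauge is fixed to $\mathbf{A}_k = \tfrac12\mathbf{B}_k\times\mathbf{r}_{\mathbf{G}}$, we have $\jmvecix{\psi_k} = \mathbf{j} - \tfrac12\rho\,\mathbf{B}_k\times\mathbf{r}_{\mathbf{G}}$. Taking the first moment about $\mathbf{G}$ and using $\mathbf{r}_{\mathbf{G}}\times(\mathbf{B}_k\times\mathbf{r}_{\mathbf{G}}) = |\mathbf{r}_{\mathbf{G}}|^2\mathbf{B}_k - (\mathbf{B}_k\cdot\mathbf{r}_{\mathbf{G}})\,\mathbf{r}_{\mathbf{G}}$ gives
\[
   \mathbf{L}_{\mathbf{G};\psi_k} \;=\; \int \mathbf{r}_{\mathbf{G}}\times\mathbf{j}\,d\mathbf{r} \;-\; \tfrac12\,\mathbf{Q}_{\mathbf{G}}\,\mathbf{B}_k, \qquad \mathbf{Q}_{\mathbf{G}} := \int\rho\,\big(|\mathbf{r}_{\mathbf{G}}|^2\mathbf{I} - \mathbf{r}_{\mathbf{G}}\,\mathbf{r}_{\mathbf{G}}^{\mathrm{T}}\big)\,d\mathbf{r}.
\]
This is just the relation~\eqref{eqBfromLJ} written about $\mathbf{G}$ rather than the centre of charge, which is legitimate here because the gauge is fixed with vanishing constant shift. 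The left-hand side and the integral on the right are the same for all $k$ by hypothesis, and $\mathbf{Q}_{\mathbf{G}}$ depends only on $\rho$; hence $\mathbf{Q}_{\mathbf{G}}\mathbf{B}_k$ is independent of $k$, and since $\mathbf{Q}_{\mathbf{G}}$ is invertible this forces $\mathbf{B}_k = \mathbf{B}_l$ for all $k,l$, which is conclusion~(a). In particular $\mathbf{A}_k = \mathbf{A}_l =: \mathbf{A}$.

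For the second stage the Hamiltonians $H(v_k,\mathbf{A})$ now share the same vector potential (hence the same magnetic and spin-Zeeman terms), have non-degenerate ground states $\psi_k$, and reproduce the same $\rho$, so the textbook Hohenberg--Kohn argument applies verbatim: were $\psi_k$ and $\psi_l$ not equal up to a global phase, the strict variational inequalities $\bra{\psi_l}H(v_k,\mathbf{A})\ket{\psi_l} > E_k$ and $\bra{\psi_k}H(v_l,\mathbf{A})\ket{\psi_k} > E_l$, combined with $\bra{\psi_l}H(v_k,\mathbf{A})\ket{\psi_l} = E_l + (v_k-v_l|\rho)$ and its index-swapped counterpart, would add to the contradiction $E_k + E_l > E_k + E_l$. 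Hence $\psi_k = \psi_l$ up to a phase, which is~(c); substituting the common $\psi$ into $H(v_k,\mathbf{A})\psi = E_k\psi$ and $H(v_l,\mathbf{A})\psi = E_l\psi$ and subtracting yields $\sum_j (v_k-v_l)(\mathbf{r}_j)\,\psi = (E_k-E_l)\,\psi$, and since $\psi$ vanishes only on a null set the difference $v_k - v_l$ must equal the constant $(E_k-E_l)/N$, which is~(b).

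The argument is largely bookkeeping, and the one point that needs a little care is the invertibility of $\mathbf{Q}_{\mathbf{G}}$. This holds because $\mathbf{v}^{\mathrm{T}}\mathbf{Q}_{\mathbf{G}}\mathbf{v} = \int\rho\,|\mathbf{r}_{\mathbf{G}}\times\mathbf{v}|^2\,d\mathbf{r}$ can vanish for $\mathbf{v}\neq\mathbf{0}$ only if $\rho$ is supported on the line through $\mathbf{G}$ in the direction $\mathbf{v}$, which is excluded by $\rho > 0$ almost everywhere --- a property inherited from unique continuation for the eigenfunctions $\psi_k$, and exactly the reasoning already used for the positive definiteness of the centre-of-charge inertia tensor. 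Finiteness of $\int\mathbf{r}_{\mathbf{G}}\times\mathbf{j}\,d\mathbf{r}$ and of the entries of $\mathbf{Q}_{\mathbf{G}}$ is guaranteed by the function-space choices ($g^2\rho\in L^1$, $g\,\jmvec\in\mathbf{L}^1$ with $g(\mathbf{r})=\sqrt{1+|\mathbf{r}|^2}$, and $|\mathbf{A}| \lesssim g$ on $\ALinSpace$), so I do not anticipate any further obstacle.
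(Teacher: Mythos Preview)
Your proof is correct and follows essentially the same route the paper sketches in the discussion immediately after the theorem: recover $\mathbf{B}$ algebraically from the moment-of-inertia relation (the paper points to Eq.~\eqref{eqBfromLJ}), then apply the standard Hohenberg--Kohn argument once the vector potentials coincide. The only difference is cosmetic: you take moments about the fixed reference point $\mathbf{G}$ and invert $\mathbf{Q}_{\mathbf{G}}$, whereas the paper phrases the relation about the centre of charge $\mathbf{R}$ and uses $\mathbf{Q}_{\mathbf{R}}$; your choice is actually slightly more direct here since the hypothesis supplies $\mathbf{L}_{\mathbf{G}}$ rather than $\boldsymbol{\Lambda}$, so no detour through $\mathbf{p}$ is needed.
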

Because the physical current density operator is system dependent, it is important to note that
\begin{equation}
  \mathbf{j}_k = \jmvecix{\psi_k} + \rho \mathbf{A}_k,
\end{equation}
and that $\jmvecix{\psi_k}$ is not independent of $k$.

The theorem above is the point of departure for the density-functional formalism proposed in Ref.~\onlinecite{PAN_JCP143_174105}. In this setting, the magnetic vector potentials are restricted to a three-dimensional family, whereas the physical current densities belong to an infinite-dimensional function space, hinting
at considerable redundancy of the formalism. Indeed, the allowed family of vector potentials is the subspace of $\ALinSpace$, defined in Eq.~\eqref{eqALinSpace}, corresponding to $\mathbf{a}=\mathbf{0}$. Consequently, the results in Sec.~\ref{secLDFT_HK} can be viewed as sharpened versions of the above theorem. Firstly, the current density determines the \magm{physical magnetic moment} $\mathbf{J}_{\mathbf{R}}$. In turn, the triple $(\rho,\boldsymbol{\Lambda},\mathbf{J}_{\mathbf{R}})$ determines the pair $(v,\mathbf{B})$ {\it even without the assumption that $\psi_k$ is a ground state}. An explicit expression for $\mathbf{B}$ is obtained from Eq.~\eqref{eqBfromLJ} above. Secondly, when the ground-state assumption is made, the physical current density and \magm{physical magnetic moment} can be dropped altogether for general non-cylindrically symmetric densities since $(\rho,\mathbf{L}_{\mathbf{G}})$ is sufficient to determine $(v,\mathbf{B})$. For cylindrically symmetric ground-state densities, some information additional to $(\rho,\mathbf{L}_{\mathbf{G}})$ is needed to determine the component of $\mathbf{B}$ along the symmetry axis. Even in this case, however, the density pair $(\rho,\mathbf{L}_{\mathbf{G}})$ is sufficient to determine the ground-state wave function and therefore the exchange--correlation energy and expectation values of operators independent of the vector potential (i.e., $\boldsymbol{\pi} = \mathbf{p}+\mathbf{A}$ and $\mathbf{r}\times\boldsymbol{\pi}$ are not included).

In Ref.~\onlinecite{PAN_JCP143_174105}, a constrained-search functional that incorporates a new constraint that the wave function must reproduce a given canonical momentum is given, following a similar line of reasoning to earlier works by the same authors. It is unclear whether or not their constrained-search functional is intended to be a {\it universal} functional~\cite{VIGNALE_IJQC113_1422,TELLGREN_PRA86_062506,LAESTADIUS_PRA91_032508}. A density functional is said to be universal if it is independent of the external potentials $(v,\mathbf{A})$. This definition is the natural generalization of the original usage by Hohenberg and Kohn~\cite{HOHENBERG_PR136_864} and common usage in textbooks and the literature (e.g., in Refs.~\onlinecite{PARR89,DREIZLER90}, in a DFT setting where $\mathbf{A}=\mathbf{0}$ is implicitly assumed). The term has also been used in the same manner by authors in the context of CDFT or BDFT (e.g., in Refs.~\onlinecite{ERHARD_PRA53_R5,GRAYCE_PRA50_3089}). 

Addressing this ambiguity in the use of the term `universal', the authors of Ref.~\onlinecite{PAN_JCP143_174105} have emphasized the fact that the external potentials are fixed in the Rayleigh--Ritz variation principle~\cite{PAN_IJQC113_1424,PAN_IJQC114_233,PAN_JCP143_174105}. This answers criticism of their previous work based on inconsistency between the universality of their functional and the ordinary variation principle. Specifically, the derivations are only correct for a semi-universal functional $F(\rho,\mathbf{L}_{\mathbf{G}},\mathbf{j},\mathbf{A})$ that is universal with respect to $v$ and non-universal with respect to $\mathbf{A}$ or $\mathbf{B}$. 

Nonetheless, the term universal is still used to refer to their constrained-search functional (above their Eq.~(45) in Ref.~\onlinecite{PAN_JCP143_174105}). It is possible that `universal' in their terminology corresponds to `semi-universal' in our terminology. However, insofar as the intent is to derive a density-functional that assigns the same intrinsic energy to a triple $(\rho,\mathbf{L},\mathbf{j})$ irrespective of the external potential pair $(v,\mathbf{A})$ their derivation is mistaken. Criticism directed at their previous work still applies, largely unaffected by the new constraint on the \magm{paramagnetic moment}: The search domain of the functional defined in their Eq.~(45), indicated by `$\Psi_{\rho,\mathbf{j}}(N,\mathbf{L}) \mapsto \rho,\mathbf{j}$', must depend on the external vector potential and therefore be non-universal. Moreover, universality of a functional of $(\rho,\mathbf{j})$ has been shown to be inconsistent with the usual Rayleigh--Ritz variation principle~\cite{VIGNALE_IJQC113_1422,TELLGREN_PRA86_062506,LAESTADIUS_PRA91_032508}, and this holds also for a universal functional of $(\rho,\mathbf{L},\mathbf{j})$. 
\subsubsection{A non-universal constrained-search functional of $(\rho,\mathbf{j},\mathbf{A})$}

Let us be careful in tracking not only the overt dependence of the
Hamiltonian on the external potential, but also the dependence of the
various search domains on the potentials. We retain the choice of
function spaces described above since they guarantee that the physical
kinetic energy can split into finite canonical, paramagnetic, and
diamagnetic terms and also that the \magm{magnetic moment} is always finite.
Because the formalism of Ref.~\cite{PAN_JCP143_174105} features a family of
functionals parametrized by prescribed values of
$\mathbf{L}$, these conditions cannot be avoided.

We now note that the minimum energy for a given value of $\mathbf{L}$ can be written as
\begin{equation}
 \begin{split}
	E_{\mathbf{L}}(v,\mathbf{A}) = \inf_{\substack{\psi \in \WMSobolev{\mathbf{0}} \\ \psi \mapsto \mathbf{L}}} \bra{\psi} H(v,\mathbf{A}) \ket{\psi} &= \inf_{\substack{(\rho,\jmvec)\in X\times Y \\ \int \mathbf{r}\times \jmvec d\mathbf{r} = \mathbf{L}}}
	\Big( (\rho|v + \tfrac{1}{2} A^2) + (\jmvec|\mathbf{A}) +  \inf_{\substack{\psi \in \WMSobolev{\mathbf{0}} \\ \psi \mapsto \rho, \jmvec}} \bra{\psi} \op{H}_0 \ket{\psi}\Big)\\
	& = \inf_{\substack{(\rho,\jmvec)\in X\times Y \\ \int \mathbf{r}\times \jmvec d\mathbf{r} = \mathbf{L}}}
	\Big( (\rho|v + \tfrac{1}{2} A^2) + (\jmvec|\mathbf{A}) +  \Fvr(\rho,\jmvec)\Big),
 \end{split}
\end{equation} 
where Vignale and Rasolt's constrained-search functional $\Fvr(\rho,\jmvec)$ is defined as in Sec.~\ref{secCDFTREVIEW} above.
We reparametrize the outer minimization by setting $\mathbf{j} = \jmvec + \rho  \mathbf{A}$,
\begin{equation}
 \begin{split}
	E_{\mathbf{L}}(v,\mathbf{A}) = \inf_{\substack{(\rho,\mathbf{j})\in R_{\mathbf{A}} \\ \int \mathbf{r}\times (\mathbf{j} - \rho \mathbf{A}) d\mathbf{r} = \mathbf{L}}}
	\Big( (\rho|v - \tfrac{1}{2} A^2) + (\mathbf{j}|\mathbf{A}) +  \Fvr(\rho,\mathbf{j} - \rho \mathbf{A})\Big),
 \end{split}
\end{equation} 
where $R_{\mathbf{A}} = \{(\rho,\mathbf{j}) | \mathbf{j} = \jmvec + \rho \mathbf{A}, \, (\rho,\jmvec)\in X \times Y \}$. Due to the choice of $Y$ as a weighted $L^1$ space, we see that $\rho \mathbf{A} \in Y$. Consequently, the set $R_{\mathbf{A}}$ is independent of $\mathbf{A}$ and, in fact, we have $R_{\mathbf{A}} = X \times Y$. We can now write
\begin{equation}
 \begin{split}
	E_{\mathbf{L}}(v,\mathbf{A}) = \inf_{\substack{(\rho,\mathbf{j})\in X\times Y \\ \int \mathbf{r}\times (\mathbf{j} - \rho \mathbf{A}) d\mathbf{r} = \mathbf{L}}}
	\Big( (\rho|v - \tfrac{1}{2} A^2) + (\mathbf{j}|\mathbf{A}) +  \Fvr(\rho,\mathbf{j} - \rho \mathbf{A})\Big).
 \end{split}
\end{equation} 
Thus, by defining the non-universal functional $F_{\text{PS}}(\rho,\mathbf{j},\mathbf{A}) := \Fvr(\rho, \mathbf{j} -\rho \mathbf{A})$ we have
\begin{equation}
 \begin{split}
	E_{\mathbf{L}}(v,\mathbf{A}) = \inf_{\substack{(\rho,\mathbf{j})\in X\times Y \\ \int \mathbf{r}\times (\mathbf{j} - \rho \mathbf{A}) d\mathbf{r} = \mathbf{L}}}
	\Big( (\rho|v - \tfrac{1}{2} A^2) + (\mathbf{j}|\mathbf{A}) +  F_{\text{PS}}(\rho,\mathbf{j}, \mathbf{A}) \Big).
 \end{split}
\end{equation} 
Since this is simply a reformulation of Vignale and Rasolt's CDFT, it is also correct. We note, however, that the original formulation due Vignale and Rasolt yields a universal functional of $(\rho,\jmvec)$---that is, the external potentials appear {\it only} in the integral pairings $(\jmvec|\mathbf{A})$ and $(\rho|A^2)$. By contrast, after the reformulation, the external potential also appears in the search domain for $F_{\text{PS}}(\rho,\mathbf{j},\mathbf{A})$ and, via the constraint $\mathbf{L} = \int \mathbf{r}\times(\mathbf{j}-\rho\mathbf{A}) d\mathbf{r}$, in the domain for the outer minimization. Different vector potentials $\mathbf{A}$ give rise to different search domains $\{ \psi | \psi \in \MSobolev{\mathbf{0}}, \rho_{\psi} = \rho, \jmvecix{\psi} = \mathbf{j} - \rho \mathbf{A} \}$ for the functional $F_{\text{PS}}(\rho,\mathbf{j},\mathbf{A})$.

The functional $F_{\text{PS}}(\rho,\mathbf{j},\mathbf{A})$ can  interpreted as a non-universal functional with potentials that remain fixed in the variation principle, rationalizing the dependence of the search domains on the external potential $\mathbf{A}$. However, the advantage of such a formulation is unclear. Consider for example a light modification of the BDFT formalism due to Grayce and Harris. If universality is not required, we may also write
\begin{equation}
 \begin{split}
	E_{\mathbf{L}}(v;\mathbf{A}) & = \inf_{\substack{\psi \in \MSobolev{\mathbf{0}} \\ \psi \mapsto \mathbf{L}}} \bra{\psi} H(v,\mathbf{A}) \ket{\psi}  = \inf_{\rho \in X}	\Big( (\rho|v) + \inf_{\substack{\psi \in \MSobolev{\mathbf{0}} \\ \psi \mapsto \mathbf{L}}} \bra{\psi} H(0,\mathbf{A}) \ket{\psi}\Big)  \\
	& = \inf_{\rho \in X} \Big( (\rho|v) + F_{\text{GH};\mathbf{L}}(\rho;\mathbf{A})\Big).
 \end{split}
\end{equation} 
Grayce and Harris' original work constructed a family of functionals parametrized by an arbitrary magnetic field $\mathbf{B}(\mathbf{r})$~\cite{GRAYCE_PRA50_3089}. Above we have modified this to a functional $F_{\text{GH};\mathbf{L}}(\rho;\mathbf{A})$ parametrized by $\mathbf{A}(\mathbf{r})$ which also incorporates the restriction on $\mathbf{L}$. From the above nested minimization, it is clear that the pair $(\rho,\mathbf{A})$, or alternatively the pair $(\rho,\mathbf{B})$, alone determines the energy additional to the electrostatic interaction $(\rho|v)$ with the external electric potential. In particular, the exchange--correlation energy is determined by $(\rho,\mathbf{A})$. By contrast, the functional $F_{\text{PS}}(\rho,\mathbf{j},\mathbf{A})$ depends on the additional vector field $\mathbf{j}$, yet $\mathbf{j}$ is formally redundant when $(\rho,\mathbf{A})$ are given.

\subsubsection{Non-standard properties of a universal functional of $(\rho,\mathbf{j},\mathbf{A})$}

Suppose $H(v_k,\mathbf{A}_k)$ has the ground state $\psi_k$, with
density triple $(\rho_k,\mathbf{j}_k,\mathbf{L}_k)$ where
$\mathbf{j}_k = \jmvecix{k} + \rho_k \mathbf{A}_k$. We let
$\mathcal{A}_N$ denote the set of all such $(v,\mathbf{A})$-representable ground-state density triples and $\mathcal{A}_{N,\mathbf{L}}$ is the subset with a given \magm{paramagnetic moment}. To formulate a variation principle for such a density triple, we begin by rewriting the energy assigned to a state $\psi_l$ by the Hamiltonian $H(v_k,\mathbf{A}_k)$,
\begin{equation}
  \bra{\psi_l} H(v_k,\mathbf{A}_k) \ket{\psi_l}  = \bra{\psi_l} \op{H}_0 \ket{\psi_l} + \frac{1}{2} \mathbf{B}_k\cdot\mathbf{L}_l + (\rho_l|v_k+\tfrac{1}{2} A_k^2).
\end{equation}
Alternatively, taking care to note that $\mathbf{j}_l = \jmvecix{l} + \rho_l \mathbf{A}_l \neq \jmvecix{l} + \rho_l \mathbf{A}_k$, unless $\mathbf{A}_k = \mathbf{A}_l$, we can also write
\begin{equation}
 \begin{split}
  \bra{\psi_l} \op{H}(v_k,\mathbf{A}_k) \ket{\psi_l}
   & = \bra{\psi_l} \op{H}_0 \ket{\psi_l} + (\mathbf{j}_l|\mathbf{A}_k) + (\rho_l|v_k+\tfrac{1}{2} A_k^2-\mathbf{A}_k\cdot\mathbf{A}_l).
 \end{split}
\end{equation}
Suppose now that there exists a universal functional $F_{\mathrm{PS}}$ with the property
\begin{equation}
  F_{\mathrm{PS}}(\rho_l,\mathbf{j}_l,\mathbf{L}_l) = \bra{\psi_l} \op{H}_0 \ket{\psi_l}
\end{equation}
for arbitrary $\psi_l \mapsto (\rho_l,\mathbf{j}_l,\mathbf{L}_l)$. 
The LDFT functional $\Fldft(\rho,\mathbf{L})$, specialized to the present gauge restrictions (allowing the variable $\mathbf{p}$ to be dropped) is an example of such a functional.
In what follows, we assume that there are also universal functionals with a non-trivial $\mathbf{j}$-dependence. Identifying $F_{\mathrm{PS}}(\rho_l,\mathbf{j}_l,\mathbf{L}_l)$ and taking affine combinations of the two alternative expressions now yields,
\begin{equation}
  \label{eqGVARdef}
 \begin{split}
  \bra{\psi_l} H(v_k,\mathbf{A}_k) \ket{\psi_l} & = F_{\mathrm{PS}}(\rho_l,\mathbf{j}_l,\mathbf{L}_l) + \frac{1}{2} \lambda \mathbf{B}_k\cdot\mathbf{L}_l + \bar{\lambda} (\mathbf{j}_l|\mathbf{A}_k) + (\rho_l|v_k+\tfrac{1}{2} A_k^2- \bar{\lambda} \mathbf{A}_k\cdot\mathbf{A}_l) =: \mathcal{E}_{\lambda}(v_k,\mathbf{A}_k,\rho_l,\mathbf{j}_l,\mathbf{L}_l),
 \end{split}
\end{equation}
where $\lambda \in \mathbb{R}$ is an arbitrary parameter and $\bar{\lambda} = 1 - \lambda$. The middle expression defines a functional $\mathcal{E}_{\lambda}$ of $(v_k,\mathbf{A}_k,\rho_l,\mathbf{j}_l,\mathbf{L}_l)$. We also have
\begin{align}
   \mathcal{E}_{\lambda}(v_k,\mathbf{A}_k,\rho_l,\mathbf{j}_l,\mathbf{L}_l) & \geq E(v_k,\mathbf{A}_k),
\end{align}
with equality when $k=l$. This inequality is just an alternative form
of $\bra{\psi_l} H(v_k,\mathbf{A}_k) \ket{\psi_l} \geq \bra{\psi_k}
H(v_k,\mathbf{A}_k) \ket{\psi_k}$, which is just the Rayleigh--Ritz
variation principle for wave functions. Identifying $\bra{\psi_k} H(v_k,\mathbf{A}_k) \ket{\psi_k}$ as $E(v_k,\mathbf{A}_k)$, we now obtain the variation principle
\begin{align}
   \label{eqLBVarPrinc}
   \min_{(\rho_l,\mathbf{j}_l,\mathbf{L}_l) \in \mathcal{A}_N} \mathcal{E}_{\lambda}(v_k,\mathbf{A}_k,\rho_l,\mathbf{j}_l,\mathbf{L}_l) & = E(v_k,\mathbf{A}_k).
\end{align}
(We here retain the index $l$ to distinguish density triples based on the potentials they arise from. However, the minimization is of course performed with $(\rho_l,\mathbf{j}_l,\mathbf{L}_l)$ ranging over all $(v,\mathbf{A})$-representable triples and not just a countable subset.)
By constraining all \magm{magnetic moments} to a particular value $\mathbf{L}_k = \mathbf{L}$, we also obtain a variation principle for the lowest energy compatible with this constraint
\begin{align}
   \label{eqLBVarPrincL}
   \min_{(\rho_l,\mathbf{j}_l) \in \mathcal{A}_{N,\mathbf{L}}} \mathcal{E}_{\lambda}(v_k,\mathbf{A}_k,\rho_l,\mathbf{j}_l,\mathbf{L}) & = E_{\mathbf{L}}(v_k,\mathbf{A}_k).
\end{align}

The case $\lambda = 1$ corresponds closely to the LDFT setting, at
least if $F$ is independent of $\mathbf{j}_l$. In this case, the
energy functional being minimized is independent of
$\mathbf{j}_l$. Hence, the minimization over $\mathbf{j}_l$ is
redundant, but does not lead to serious complications of the
theory. The case $\lambda = 0$ corresponds to a setting akin to a physical
CDFT. In this case, the appearance of the term
$-\bar{\lambda} (\rho_l|\mathbf{A}_k\cdot\mathbf{A}_l)$ in
Eq.~\eqref{eqGVARdef} gives the variation principle a non-standard
form. The minimization over $\mathbf{j}_l$ cannot be performed without
access to the vector potential $\mathbf{A}_l$, which in general is
different from the external potential $\mathbf{A}_k$ that appears in
the particular Hamiltonian $H(v_k,\mathbf{A}_k)$ used to assign
energies. In this regard, the constraint on the \magm{paramagnetic
moment} and the restriction of vector potentials to a
three-dimensional space makes a crucial difference. Without the
restrictions, a mapping $(\rho_l,\mathbf{j}_l) \mapsto \mathbf{A}_l$
would be required, but there is no rigorous existence proof for this
mapping and the mapping is unlikely to have a simple, practical
form. With the restrictions, however, there is a simple, explicit
mapping $(\rho_l,\mathbf{j}_l,\mathbf{L}_l) \mapsto \mathbf{A}_l$ given
by Eq.~\eqref{eqBfromLJ} above. A more subtle point is that the
mapping $(\rho_l,\mathbf{j}_l,\mathbf{L}_l) \mapsto \mathbf{A}_l$ makes
no use of the ground-state assumption, and is not restricted to
$(v,\mathbf{A})$-representable triples $(\rho,\mathbf{j},\mathbf{L})$.

From the above discussion, we see that the non-standard variation
principles Eqs.~\eqref{eqLBVarPrinc}--\eqref{eqLBVarPrincL} delivers the correct
ground-state energy and minimum energy compatible with $\mathbf{L}$, respectively. However, the minimizing physical current density
is not necessarily the same as that arising from the ground
state. We now adapt a result from Laestadius and
Benedicks~\cite{LAESTADIUS_PRA91_032508} to the present setting:
\begin{theorem} \label{Thm:nonUmin}
	Let $H$ model a one-particle system. There exist potentials $(v_k,\mathbf{A}_k)$ such that: (a) the minimizer $(\rho_l,\mathbf{j}_l,\mathbf{L}_l)$ of Eq.~\eqref{eqLBVarPrinc} is not unique, and (b) the minimizer $(\rho_l,\mathbf{j}_l)$ of Eq.~\eqref{eqLBVarPrincL} is not unique. Moreover, infinitely many of the minimizing $\mathbf{j}_l$ are not the physical current of a ground state of $H(v_k,\mathbf{A}_k)$---that is, $\mathbf{j}_l \neq \mathbf{j}_k = \jmvecix{k} + \rho_k \mathbf{A}_k$.
\end{theorem}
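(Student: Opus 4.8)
The plan is to exploit a degeneracy peculiar to one‑particle systems, following the strategy of Laestadius and Benedicks: a single fixed state $\psi_0$ can be the nondegenerate ground state of a whole family of Hamiltonians with genuinely different uniform magnetic fields, and since the physical current $\mathbf{j}=\jmvec+\rho\mathbf{A}$ carries a vector‑potential dependence, each member of the family decorates $\psi_0$ with a different density triple. The key reduction is that, by the way $\mathcal{E}_\lambda$ was built in Eq.~\eqref{eqGVARdef}, for any $(v,\mathbf{A})$‑representable triple $(\rho_l,\mathbf{j}_l,\mathbf{L}_l)\in\mathcal{A}_N$ generated by a ground state $\psi_l$ of $H(v_l,\mathbf{A}_l)$ one has $\mathcal{E}_\lambda(v_k,\mathbf{A}_k,\rho_l,\mathbf{j}_l,\mathbf{L}_l)=\bra{\psi_l}H(v_k,\mathbf{A}_k)\ket{\psi_l}$, independently of $\lambda$. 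Hence, by the Rayleigh--Ritz inequality, such a triple minimizes Eq.~\eqref{eqLBVarPrinc} exactly when its generating state $\psi_l$ is itself a ground state of $H(v_k,\mathbf{A}_k)$, and likewise it minimizes Eq.~\eqref{eqLBVarPrincL} for the prescribed $\mathbf{L}$ when, in addition, $\psi_l$ gives rise to that $\mathbf{L}$. It therefore suffices to exhibit one Hamiltonian $H(v_k,\mathbf{A}_k)$ possessing infinitely many distinct $(v,\mathbf{A})$‑representable triples whose common generating state also minimizes $H(v_k,\mathbf{A}_k)$.

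For the concrete realization I would take $\gspin=0$, gauge origin $\mathbf{G}=\mathbf{0}$, $\psi_0(\mathbf{r})=\pi^{-3/4}e^{-|\mathbf{r}|^2/2}$ and $v_0(\mathbf{r})=\tfrac12|\mathbf{r}|^2\in X^*$, and set $(v_k,\mathbf{A}_k)=(v_0,\mathbf{0})$, so that $\psi_0$ is the unique ground state of $H(v_0,\mathbf{0})$ with $E(v_0,\mathbf{0})=\tfrac32$, $\jmvecix{0}=\mathbf{0}$ and $\mathbf{L}_{\mathbf{G};0}=\mathbf{0}$. For each $\mathbf{B}_\alpha\in\mathbb{R}^3$ with $0<|\mathbf{B}_\alpha|<2$, put $\mathbf{A}_\alpha=\tfrac12\mathbf{B}_\alpha\times\mathbf{r}\in\ALinSpace$ and $v_\alpha(\mathbf{r})=\tfrac12|\mathbf{r}|^2-\tfrac18|\mathbf{B}_\alpha\times\mathbf{r}|^2\in X^*$. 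The single elementary computation driving the construction is that the subtracted diamagnetic potential exactly cancels the $\tfrac12A_\alpha^2$ term in the kinetic operator, so that
\begin{equation}
  H(v_\alpha,\mathbf{A}_\alpha)=H(v_0,\mathbf{0})+\tfrac12\mathbf{B}_\alpha\cdot\op{\mathbf{L}},\qquad \op{\mathbf{L}}=-i\,\mathbf{r}\times\nabla .
\end{equation}
Since $\psi_0$ is an $s$‑state, $\op{\mathbf{L}}\psi_0=\mathbf{0}$, so $\psi_0$ remains an eigenstate at energy $\tfrac32$; and because the orbital‑Zeeman term splits the oscillator level $n+\tfrac32$ into $n+\tfrac32+\tfrac12|\mathbf{B}_\alpha|m$ with $|m|\le n$, whose minimum $n+\tfrac32-\tfrac12|\mathbf{B}_\alpha|n$ exceeds $\tfrac32$ for every $n\ge1$ once $|\mathbf{B}_\alpha|<2$, the state $\psi_0$ is the nondegenerate ground state of $H(v_\alpha,\mathbf{A}_\alpha)$. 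The triple it generates there is $(\rho_0,\jmvecix{0}+\rho_0\mathbf{A}_\alpha,\mathbf{L}_{\mathbf{G};0})=(\rho_0,\tfrac12\rho_0\,\mathbf{B}_\alpha\times\mathbf{r},\mathbf{0})\in\mathcal{A}_N$, and since $\rho_0>0$ everywhere these triples are pairwise distinct for distinct $\mathbf{B}_\alpha$.

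Assembling the conclusions: with $\mathbf{A}_k=\mathbf{0}$ every $\lambda$‑dependent term of $\mathcal{E}_\lambda$ vanishes and $\mathcal{E}_\lambda(v_0,\mathbf{0},\rho_0,\tfrac12\rho_0\mathbf{B}_\alpha\times\mathbf{r},\mathbf{0})=\bra{\psi_0}\op{H}_0\ket{\psi_0}+(\rho_0|v_0)=\bra{\psi_0}H(v_0,\mathbf{0})\ket{\psi_0}=\tfrac32=E(v_0,\mathbf{0})$, so the entire continuum $\{\mathbf{B}_\alpha:0<|\mathbf{B}_\alpha|<2\}$, together with the genuine ground‑state triple $(\rho_0,\mathbf{0},\mathbf{0})$, furnishes distinct minimizers of Eq.~\eqref{eqLBVarPrinc}, giving (a). Each such triple has $\mathbf{L}_{\mathbf{G}}=\mathbf{0}$, and since $\psi_0$ (with $\jmvec=\mathbf{0}$) satisfies the $\mathbf{L}=\mathbf{0}$ constraint we have $E_{\mathbf{0}}(v_0,\mathbf{0})=E(v_0,\mathbf{0})=\tfrac32$, so the same continuum minimizes Eq.~\eqref{eqLBVarPrincL} with $\mathbf{L}=\mathbf{0}$, giving (b). Finally, $\psi_0$ being the \emph{unique} ground state of $H(v_k,\mathbf{A}_k)$, the only physical current attached to a ground state of $H(v_k,\mathbf{A}_k)$ is $\mathbf{j}_k=\jmvecix{0}+\rho_0\mathbf{A}_k=\mathbf{0}$, whereas $\tfrac12\rho_0\,\mathbf{B}_\alpha\times\mathbf{r}\ne\mathbf{0}$ whenever $\mathbf{B}_\alpha\ne\mathbf{0}$; as $\mathbf{B}_\alpha$ ranges over the punctured ball this produces infinitely many minimizing $\mathbf{j}_l\ne\mathbf{j}_k$. (The same construction works with $(v_k,\mathbf{A}_k)=(v_0-\tfrac18|\mathbf{B}_k\times\mathbf{r}|^2,\tfrac12\mathbf{B}_k\times\mathbf{r})$ for any $|\mathbf{B}_k|<2$, so the pathology is not an artefact of a vanishing external field.)

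I expect the main obstacle to be the spectral step---confirming that $\psi_0$ stays a genuine ground state, not merely an eigenstate, of $H(v_\alpha,\mathbf{A}_\alpha)$ over an honest continuum of fields. This is precisely what the diamagnetic‑cancellation identity and $\op{\mathbf{L}}\psi_0=\mathbf{0}$ are there to enable, and it is where the bound $|\mathbf{B}_\alpha|<2$ enters (a steeper harmonic confinement would enlarge the admissible range, but one open ball already yields infinitely many fields). The remaining items---$v_\alpha\in X^*$, $\psi_0\in\WMSobolev{\mathbf{0}}$, the requisite pairing and representability identities, and well‑definedness of $F_{\mathrm{PS}}$ on the triples involved---are routine for Gaussians and need be verified only once.
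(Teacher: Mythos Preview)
Your proposal is correct and follows essentially the same route as the paper, which defers to Sec.~III.A of Laestadius--Benedicks: a spherically symmetric one-particle ground state $\psi_0$ with $\hat{\mathbf{L}}\psi_0=\mathbf{0}$ is kept as the nondegenerate ground state across a continuum of Hamiltonians by subtracting the diamagnetic potential, and the resulting family of distinct $(v,\mathbf{A})$-representable triples $(\rho_0,\rho_0\mathbf{A}_\alpha,\mathbf{0})$ all saturate $\mathcal{E}_\lambda$ at $E(v_k,\mathbf{A}_k)$ by Eq.~\eqref{eqGVARdef}. Your harmonic-oscillator realization, with the explicit spectral check that $|\mathbf{B}_\alpha|<2$ keeps $\psi_0$ the unique minimizer, is a clean self-contained instance of precisely this mechanism.
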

The proof of the above theorem follows every step in Sec.III.A of Ref.~\onlinecite{LAESTADIUS_PRA91_032508}, which still applies and where we let the minimizing $\mathbf{L}_l = \langle \psi_k| \op{\mathbf{L}} | \psi_k\rangle$.
Note that $\psi_k$ and the infinitely many current densities $\{\mathbf{j}_l\}$ are denoted $\psi_0$ and $\{\mathbf{j}_\varepsilon\}_{\varepsilon >0}$, respectively, in Ref.~\onlinecite{LAESTADIUS_PRA91_032508}. 

Turning next to the interpretation of constrained-search functional in Ref.~\onlinecite{PAN_JCP143_174105} as a universal functional, we note that it relies on a variation principle related to a functional
\begin{equation}
	\mathcal{E}_{\mathrm{PS}}(v_k,\mathbf{A}_k,\rho_l,\mathbf{j}_l,\mathbf{L}) = F(\rho_l,\mathbf{j}_l,\mathbf{L}) + (\mathbf{j}_l|\mathbf{A}_k) + (\rho_l|v_k-\tfrac{1}{2} A_k^2).
\end{equation}
The relation to the functional $\mathcal{E}_1$ studied above is
\begin{equation}
	\mathcal{E}_{\mathrm{PS}}(v_k,\mathbf{A}_k,\rho_l,\mathbf{j}_l,\mathbf{L}_l) = \mathcal{E}_1(v_k,\mathbf{A}_k,\rho_l,\mathbf{j}_l,\mathbf{L}_l) - (\rho|A_k^2 - \mathbf{A}_k\cdot\mathbf{A}_l).
\end{equation}
Minimization of $\mathcal{E}_{\mathrm{PS}}$ with respect to $(\rho_l,\mathbf{j}_l)$ corresponds to a seemingly natural variation principle. Unfortunately, this variation principle does not always deliver the correct energy. Again we adapt a result from Laestadius and
Benedicks~\cite{LAESTADIUS_PRA91_032508} to the present setting:
\begin{theorem}
	There exist potentials $(v_k,\mathbf{A}_k)$ such that:
	\begin{equation} \label{Eq:17feb1}
		\min_{(\rho_l,\mathbf{j}_l,\mathbf{L}_l) \in \mathcal{A}_N} \mathcal{E}_{\mathrm{PS}}(v_k,\mathbf{A}_k,\rho_l,\mathbf{j}_l,\mathbf{L}_l) < E(v_k,\mathbf{A}_k).
	\end{equation}
	There also exists $\mathbf{L}$ such that:
	\begin{equation} \label{Eq:17feb2}
		\min_{(\rho_l,\mathbf{j}_l) \in \mathcal{A}_{N,\mathbf{L}}} \mathcal{E}_{\mathrm{PS}}(v_k,\mathbf{A}_k,\rho_l,\mathbf{j}_l,\mathbf{L}) < E_{\mathbf{L}}(v_k,\mathbf{A}_k).
	\end{equation}
\end{theorem}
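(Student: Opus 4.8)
The plan is to derive the strict inequalities directly from the identity relating $\mathcal{E}_{\mathrm{PS}}$ and $\mathcal{E}_1$ printed just above the theorem, together with the variation principle~\eqref{eqLBVarPrinc}. Since $\mathcal{E}_1(v_k,\mathbf{A}_k,\rho_l,\mathbf{j}_l,\mathbf{L}_l)=\bra{\psi_l}H(v_k,\mathbf{A}_k)\ket{\psi_l}$ by the definition~\eqref{eqGVARdef}, that identity reads
\begin{equation}
  \mathcal{E}_{\mathrm{PS}}(v_k,\mathbf{A}_k,\rho_l,\mathbf{j}_l,\mathbf{L}_l)=\bra{\psi_l}H(v_k,\mathbf{A}_k)\ket{\psi_l}-(\rho_l|\mathbf{A}_k\cdot(\mathbf{A}_k-\mathbf{A}_l)),
\end{equation}
so $\mathcal{E}_{\mathrm{PS}}$ can fall below $E(v_k,\mathbf{A}_k)$ precisely when some $(v,\mathbf{A})$-representable triple is attached to a vector potential $\mathbf{A}_l\neq\mathbf{A}_k$ whose correction $(\rho_l|\mathbf{A}_k\cdot(\mathbf{A}_k-\mathbf{A}_l))$ strictly exceeds the nonnegative Rayleigh gap $\bra{\psi_l}H(v_k,\mathbf{A}_k)\ket{\psi_l}-E(v_k,\mathbf{A}_k)$. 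Following Laestadius and Benedicks, I would realize this via an explicit one-electron example in which that gap vanishes, i.e.\ in which a single state is simultaneously a ground state of $H(v_k,\mathbf{A}_k)$ and of a field-weakened Hamiltonian.

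Concretely, I would take a one-electron $H(v_k,\mathbf{A}_k)$ (with $\gspin=0$) where $\mathbf{A}_k=\tfrac12\mathbf{B}_k\times\mathbf{r}_{\mathbf{G}}$, $\mathbf{B}_k\neq\mathbf{0}$, and $v_k\in X^*$ is cylindrically symmetric about the axis through $\mathbf{G}$ parallel to $\mathbf{B}_k$ and weak enough that the nondegenerate ground state lies in the zero-angular-momentum sector, hence is real, $\psi_k=\sqrt{\rho_k}$ with $\rho_k>0$. The key point is $\mathbf{A}_k\cdot\nabla\sqrt{\rho_k}=0$ (azimuthal contracted with radial/axial), so the orbital-Zeeman term annihilates $\psi_k$; therefore $\psi_k$ is also the nondegenerate ground state of the field-free Hamiltonian $H(v_k+\tfrac12 A_k^2,\mathbf{0})$, whose scalar potential still lies in $X^*$ because $A_k^2\in X^*$. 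Consequently the triple $(\rho_k,\mathbf{0},\mathbf{0})$---vanishing physical current and vanishing canonical angular momentum---is $(v,\mathbf{A})$-representable, so it belongs to $\mathcal{A}_{N,\mathbf{0}}\subseteq\mathcal{A}_N$, and since $\psi_k$ realizes it the defining property of $F$ (and equally of $\Fldft(\rho,\mathbf{L})$, or of any assumed universal functional with nontrivial $\mathbf{j}$-dependence) forces $F(\rho_k,\mathbf{0},\mathbf{0})=\bra{\psi_k}\op{H}_0\ket{\psi_k}=T_{\mathrm{W}}(\rho_k)$, the one-electron von Weizs\"acker energy.

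Then I would simply evaluate both sides. From the definition of $\mathcal{E}_{\mathrm{PS}}$ one gets $\mathcal{E}_{\mathrm{PS}}(v_k,\mathbf{A}_k,\rho_k,\mathbf{0},\mathbf{0})=T_{\mathrm{W}}(\rho_k)+(\rho_k|v_k)-\tfrac12(\rho_k|A_k^2)$, whereas---again because $\psi_k$ is real with $\mathbf{A}_k\cdot\nabla\sqrt{\rho_k}=0$, so the paramagnetic cross term drops out---$E(v_k,\mathbf{A}_k)=\bra{\psi_k}H(v_k,\mathbf{A}_k)\ket{\psi_k}=T_{\mathrm{W}}(\rho_k)+(\rho_k|v_k)+\tfrac12(\rho_k|A_k^2)$. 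Hence $\mathcal{E}_{\mathrm{PS}}(v_k,\mathbf{A}_k,\rho_k,\mathbf{0},\mathbf{0})=E(v_k,\mathbf{A}_k)-(\rho_k|A_k^2)<E(v_k,\mathbf{A}_k)$, since $(\rho_k|A_k^2)>0$ ($\rho_k>0$ and $\mathbf{A}_k\neq\mathbf{0}$ off the symmetry axis), which is~\eqref{Eq:17feb1}. As the global ground state already has zero canonical angular momentum, $E_{\mathbf{0}}(v_k,\mathbf{A}_k)=E(v_k,\mathbf{A}_k)$, and the same triple lies in $\mathcal{A}_{N,\mathbf{0}}$, so~\eqref{Eq:17feb2} follows with $\mathbf{L}=\mathbf{0}$. (Rescaling $\mathbf{B}_k$ along its own direction yields a whole family of representable triples $(\rho_k,\jmvecix{k}+\rho_k\mathbf{A}_l,\mathbf{0})$ along which $\mathcal{E}_{\mathrm{PS}}$ is affine and unbounded below, so the indicated minima are really infima that are not attained; only the strict inequality is needed.) The one step that genuinely needs care, exactly as in Laestadius--Benedicks Sec.~III.A, is verifying that the real, nodeless zero-angular-momentum eigenfunction is the \emph{global} ground state rather than merely an eigenstate---this is why the field is taken weak---whereas everything else is an elementary manipulation of the identities above.
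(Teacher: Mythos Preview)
Your proposal is correct and follows essentially the same route as the paper, which defers to the Laestadius--Benedicks construction (Sec.~II of Ref.~\onlinecite{LAESTADIUS_PRA91_032508}): exhibit a single real, $m=0$ ground state $\psi_k$ shared by $H(v_k,\mathbf{A}_k)$ and the field-free $H(v_k+\tfrac12 A_k^2,\mathbf{0})$, then evaluate $\mathcal{E}_{\mathrm{PS}}$ at the triple $(\rho_k,\mathbf{0},\mathbf{0})\in\mathcal{A}_N$ coming from the field-free system to produce the deficit $-(\rho_k|A_k^2)<0$. Your choice $\mathbf{L}=\mathbf{0}$ for Eq.~\eqref{Eq:17feb2} is exactly the paper's prescription $\mathbf{L}=\langle\psi_k|\hat{\mathbf{L}}|\psi_k\rangle$, and your closing remark that the ``minimum'' is in fact an unattained infimum (affine, unbounded-below dependence on the scaling of $\mathbf{B}_l$) is a useful clarification the paper leaves implicit.
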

The essential steps of the proof of this statement is given in Sec.II of the work by Laestadius and Benedicks~\cite{LAESTADIUS_PRA91_032508}. For the first part above, Eq.~\eqref{Eq:17feb1}, we just need to add the remark given after Theorem \ref{Thm:nonUmin} regarding $\mathbf{L}_l$. For the second part, Eq.~\eqref{Eq:17feb2}, we set $\mathbf{L} = \langle \psi_k| \hat{\mathbf{L}}|\psi_k\rangle$, where $\psi_k$ is the ground state of $H(v_k,\mathbf{A}_k)$.

\section{Discussion and conclusion}
\label{secCONCLUSION}

A specialization of paramagnetic CDFT to uniform magnetic fields, represented by linear magnetic vector potentials, has been presented. As a result of the restriction on the vector potentials, the basic variables are reduced to the electron density, the canonical momentum, and \magm{paramagnetic moment}. This theory, termed LDFT, admits a constrained-search formulation as well as a convex formulation analogous to Lieb's formulation of standard DFT. Unlike CDFT, LDFT also admits strong analogues of the HK theorem whenever the density is not rotationally symmetric: In this case, the ground-state triple $(\rho,\mathbf{p},\mathbf{L}_{\mathbf{G}})$ determines the scalar potential $u$ and the linear magnetic vector potential $\mathbf{A}$. Moreover, the ground-state pair $(\rho,\boldsymbol{\Lambda})$ of density and \magm{intrinsic magnetic moment} determines the scalar potential and magnetic field $(u,\mathbf{B})$. For a large total system, consisting of well-separated subsystems, the magnetic part of the HK-mapping can be written in a simple explicit form involving the inverse moment-of-inertia tensor.

Many issues in CDFT find their analogue, in a simplified setting, in LDFT. For example, both the CDFT and LDFT universal functionals can be explored using Lieb optimization. The optimization over the scalar potential is known to be computationally and numerically demanding. In practice, the space of potentials $X^*$ must be replaced by a subset spanned by a finite number of basis functions, typically Gaussian-type functions. The optimization over vector potentials $\mathbf{A} \in Y^*$ exacerbate  these problems. 
By contrast, in the LDFT setting, the optimization over linear vector potentials $\mathbf{A}\in \ALinSpace$ requires only six additional variational parameters and is 
straightforward compared with the scalar potential optimization. Another example is that gauge invariance of the exchange--correlation energy leads to a CDFT functional that depends on the vorticity, whereas \emph{gauge-shift} invariance leads to an LDFT functional that depends on the \magm{intrinsic magnetic moment}. In the special case that the current density corresponds to rigid rotation, the vorticity is constant throughout space and related to \magm{intrinsic magnetic moment} through $\boldsymbol{\Lambda} = \frac{1}{2} \mathbf{Q} \boldsymbol{\nu}$. In more realistic cases, the two quantities have rather different properties since $\boldsymbol{\nu}$ is a local, intensive quantity and $\boldsymbol{\Lambda}$ is a global quantity that grows superlinearly with system size. This difference manifests itself in, for example, the failure of simple LDA-like forms to satisfy additive separability. Even with a more flexible ansatz, it is likely to be challenging to construct additively separable approximations. Here Lieb optimization has a large potential to be useful in uncovering the $\boldsymbol{\Lambda}$ dependence of the exchange--correlation energy.

A third issue is that of the possibility of replacing the paramagnetic current density by the physical current in CDFT. The LDFT analogue is to replace the \magm{paramagnetic moment} by the \magm{physical magnetic moment}. In many respects this would be appealing since both vorticity and \magm{intrinsic magnetic moment} have practical drawbacks. The challenges that have been pointed out for CDFT---namely, the dilemma that constrained-search formulations suffer from either non-universality or tension with the 
variation principle---carry over directly a hypothetical LDFT with \magm{physical magnetic moment}. Under the restriction of uniform magnetic fields and completely gauge-fixed vector potentials, a hybrid approach was recently suggested by Pan and Sahni, where the triple $(\rho,\mathbf{j},\boldsymbol{\Lambda})$ of density, physical current density, and \magm{paramagnetic moment} serves as the basic variables. However, in light of the HK-like results proved in this work, the Pan and Sahni formulation is seen to be heavily redundant. The pair $(\rho,\boldsymbol{\Lambda})$ is sufficient to determine the exchange--correlation energy. When $\rho$ is not rotationally symmetric, the pair $(\rho,\boldsymbol{\Lambda})$ also determines the external scalar potential and magnetic field. If a HK-like result for rotationally symmetric systems is desired, the triple $(\rho,\mathbf{J},\boldsymbol{\Lambda})$ of density, \magm{physical magnetic moment}, and \magm{intrinsic magnetic momentum} is sufficient.

\section*{Acknowledgements}

This work was supported by the Norwegian Research Council through the CoE Centre for Theoretical and Computational Chemistry (CTCC) Grant Nos.\ 179568/V30 and 171185/V30 and through the European Research Council under the European Union Seventh Framework Program through the Advanced Grant ABACUS, ERC Grant Agreement No. 267683. EIT was supported by the Norwegian Research Council through Grant No.~240674. AMT is grateful for support from a Royal Society University Research Fellowship and the Engineering and Physical Sciences Research Council EPSRC, Grant No.\ EP/M029131/1. SK and AL were supported by ERC-STG-2014 under grant agreement No.~639508.


%

\end{document}